\newtheoremstyle{theorem}{1em}{1em}{\slshape}{0pt}{\bfseries}{.}{ }{}
\theoremstyle{theorem}
\newtheorem{theorem}{Theorem}
\newtheorem{conjecture}{Conjecture}
\newtheorem*{theorem*}{Theorem}
\newtheorem{corollary}[theorem]{Corollary}
\newtheorem{lemma}[theorem]{Lemma}
\newtheorem*{claim*}{Claim}
\theoremstyle{remark}
\newtheorem{remark}{Remark}
\newtheorem*{remark*}{Remark}
\newtheoremstyle{example}{1em}{1em}{}{0pt}{\bfseries}{.}{ }{}
\providecommand{\setR}{\mathbb{R}}
\newcommand{\vol}{\mathrm{vol}}
\newcommand{\conv}{\textrm{conv}}
\newcommand{\inn}[2]{\langle {#1}, {#2} \rangle}
\newcommand{\E}{\mathop{\mathbb{E}}}
\newcommand{\eps}{\varepsilon}
\DeclareMathAlphabet{\pazocal}{OMS}{zplm}{m}{n}
\title{Vector Balancing in Lebesgue Spaces}
\date{}
\author{Victor Reis \thanks{University of Washington, Seattle. Email: {\tt voreis@uw.edu}.} \and Thomas Rothvoss  \thanks{University of Washington, Seattle. Email: {\tt rothvoss@uw.edu}. 
Supported by NSF CAREER grant 1651861 and a David \& Lucile Packard Foundation Fellowship.}}
\begin{document}
\maketitle
\begin{abstract}

The \emph{Koml\'os} conjecture suggests that for any vectors  $\bm{a}_1,\ldots,\bm{a}_n \in B_2^m$ there exist $x_1, \dots, x_n \in \{ -1,1\}$ so that $\|\sum_{i=1}^n x_i\bm{a}_i\|_\infty \le O(1)$. It is a natural extension to ask what $\ell_q$-norm bound to expect for $\bm{a}_1,\ldots,\bm{a}_n \in B_p^m$. We prove a tight partial coloring result for such vectors, implying a nearly tight full coloring bound. As a corollary, this implies a special case of Beck-Fiala's conjecture. We achieve this by showing that, for any $\delta > 0$, a symmetric convex body $K \subseteq \setR^n$ with Gaussian measure at least $e^{-\delta n}$ admits a partial coloring. Previously this was known only for a \emph{small} enough $\delta$. Additionally, we show that a hereditary volume bound suffices to provide such Gaussian measure lower bounds.

\end{abstract}

\newpage

\section{Introduction}

The celebrated \emph{Spencer's Theorem} in discrepancy theory \cite{Spencer1985SixSD} shows that "six standard deviations suffice" for balancing vectors in the $\ell_\infty$-norm: for any $\bm{a}_1, \dots, \bm{a}_n \in [-1,1]^n$, there exist signs $\bm{x} \in \{-1,1\}^n$ such that $\|\sum_{i=1}^n x_i\bm{a}_i\|_\infty \le 6\sqrt{n}$. More generally, Spencer showed that for vectors in $[-1,1]^m$ with $n\le m$ one can achieve a bound of $O(\sqrt{n \log(2m/n)})$. While his proof used a nonconstructive form of the \emph{partial coloring lemma} based on the pigeonhole principle, in the past decade several approaches starting with the breakthrough work of Bansal \cite{DiscrepancyMinimization-Bansal-FOCS2010} did succeed in computing such signs in polynomial time \cite{DiscrepancyMinimization-LovettMekaFOCS12, ConstructiveDiscrepancy-Rothvoss-FOCS2014, DBLP:journals/corr/LevyRR16, DBLP:journals/rsa/EldanS18}.

As for balancing vectors of bounded $\ell_2$-norm, the situation has been more delicate. In the same paper, Spencer \cite{Spencer1985SixSD} showed a nonconstructive bound of $O(\log n)$ for the $\ell_\infty$ discrepancy of vectors $\bm{a}_1, \dots, \bm{a}_n \in B^m_2$ and also stated a discrete version of a conjecture of Koml\'os that this may be improved to $O(1)$. This was improved to $O(\sqrt{\log n})$ by Banaszczyk \cite{BalancingVectors-Banaszczyk98} who showed that in fact for any set of $n$ vectors of $\ell_2$-norm at most 1 and any convex body $K \subseteq \setR^m$ of Gaussian measure at least $1/2$, some $\pm 1$ combination of such vectors lies in $5\cdot K$. For the general setting of $\ell_q$ discrepancy, Matou\v{s}ek ~\cite{Matousek98} gave an upper bound of $O(q) \cdot m^{1/q}$ for balancing vectors from $\ell_2$ to $\ell_q$. More recently, the work of Barthe, Gu\'edon, Mendelson and Naor \cite{GeometryOfLpBall-BartheGuedonMendelsonNaor2005} (see Prop. 25) shows that, for  $q \ge 2$, $n$-dimensional slices of the $\ell_q$ ball in $\setR^m$ scaled by a factor of  $O(\sqrt{q}) \cdot n^{1/q}$ do have Gaussian measure at least $1/2$ (we include an alternate proof in the appendix), thus improving the bound to $O(\sqrt{q}) \cdot  n^{1/q}$. For $q = \log n$, this matches the $\ell_2$ to $\ell_\infty$ bound of $O(\sqrt{\log n})$. Banaszczyk's proof was nonconstructive and the first polynomial time algorithm in the general convex body setting was found only recently by Bansal, Dadush, Garg and Lovett \cite{GramSchmidtWalk-BansalDGL-STOC18}, while the Koml\'os conjecture remains an open problem.
The work of \cite{GramSchmidtWalk-BansalDGL-STOC18} actually shows that for any vectors $\bm{a}_1,\ldots,\bm{a}_n \in B_2^m$
there exists an efficiently computable distribution over signs $\bm{x} \in \{ -1,1\}^n$ so that the sum $\bm{X} := \sum_{i=1}^n x_i\bm{a}_i$
is $O(1)$-subgaussian, meaning that $\E[e^{\inn{ \bm{\theta}} {\bm{X}}}] \le e^{O(1) \|\bm{\theta}\|_2^2}$ for every $\bm{\theta} \in \setR^m$, and will be in $O(1) \cdot K$ with good probability. Interestingly, this means their
algorithm is \emph{oblivious} to the body $K$, which is a striking difference to the regime of $\gamma_n(K) = e^{-\Theta(n)}$ where any algorithm needs to be dependent on $K$. The connection between
Banaszczyk's theorem and subgaussianity is due to Dadush et al.~\cite{DBLP:conf/approx/DadushGLN16}.

For the general setting of balancing vectors from $\ell_p$ to $\ell_q$, where we are given vectors $\bm{a}_1, \dots, \bm{a}_n \in B^m_p$ and wish to find signs $x_1, \dots, x_n$ that minimize the $\ell_q$ norm of $\sum_{i=1}^n x_i \bm{a}_i$ (also called $\ell_q$ discrepancy), not much was known beyond Spencer's theorem ($p = \infty$) or what can be deduced from Banaszczyk's theorem as above: any vector in $B^m_p$ also belongs to $m^{\max(0,1/2-1/p)} \cdot B^m_2$, thus implying a discrepancy bound of $O(\sqrt{q}) \cdot m^{\max(0, 1/2-1/p)} \cdot n^{1/q}$. Even in the square case $m = n$, in spite of tight partial coloring bounds \cite{Spencer1985SixSD}, it has been an open problem to remove the dependency on $\sqrt{q}$ \cite{8555088}. The goal of this paper is to provide a unified approach for balancing from $\ell_p$ to $\ell_q$ via optimal constructive fractional partial colorings, which yield optimal bounds for most of the range $1 \le p \le q \le \infty$. We obtain such fractional partial colorings by proving a new measure lower bound on the relevant linear preimages of $\ell_q$ balls (Section 3) and an improved algorithm for sets of Gaussian measure $e^{-\delta n}$ for any $\delta > 0$ (Section 4), as opposed to previous work (\cite{ConstructiveDiscrepancy-Rothvoss-FOCS2014, DBLP:journals/rsa/EldanS18}) which required measure $e^{-\delta n}$ for \emph{sufficiently small} $\delta > 0$. Finally, we show that a \textit{hereditary} volume lower bound is sufficient to imply such Gaussian measure bound (Section 5). 

As an application, we show a slight improvement to the bounds for the well-known Beck-Fiala conjecture \cite{BECK19811}, a discrete version of Koml\'os. It asks for a $O(\sqrt{t})$ bound on the $\ell_\infty$ discrepancy of any $\bm{a}_1, \dots, \bm{a}_n \in \{0,1\}^m$, each with at most $t$ ones. We establish the conjecture for $t \ge n$ and show slightly improved bounds when $t$ is close to $n$ (Corollary \ref{cor:BeckFiala}).

\textbf{Notation.} Let $B^m_p := \{\bm{x} \in \setR^m : \|\bm{x}\|_p \le 1\}$ denote the unit ball in the $\ell_p$-norm. The \emph{Gaussian measure} of a measurable set $K \subseteq \setR^n$ is given by $\gamma_n(K) := \Pr_{\bm{x} \sim N(\bm{0}, \bm{I}_n)} [\bm{x} \in K]$. We denote the \emph{mean width} of a convex set as $w(K) := \E_{\bm{\theta} \in S^{n-1}}[\sup_{\bm{x} \in K} \left<\bm{\theta},\bm{x}\right>]$. The Euclidean distance to a set $S \subseteq \setR^n$ is denoted by $d(\bm{x},S) := \min\{ \|\bm{x}-\bm{y}\|_2 : \bm{y} \in S\}$. A function $f : \setR^m \to \setR$ is $\alpha$-Lipschitz if $|f(\bm{x}) - f(\bm{y})| \le \alpha \cdot \|\bm{x}-\bm{y}\|_2$ for $\bm{x}, \bm{y} \in \setR^m$. If $\bm{A} \in \setR^{m \times n}$ is a matrix, we denote its rows by $\bm{A}_1,\ldots,\bm{A}_m \in \setR^n$ and its columns by $\bm{a}_1,\ldots,\bm{a}_n \in \setR^m$. Naturally, a matrix can also be interpreted as a (not necessa{}rily invertible) linear map. Then for any set $K \subseteq \setR^m$, we use the notation $\bm{A}^{-1}(K) := \{ \bm{x} \in \setR^n : \bm{A}\bm{x} \in K\}$. The $C$-scaling of a symmetric convex body $K$ is the body $C \cdot K = \{c\bm{x} : \bm{x} \in K\}$. 

% \newpage
\subsection{Our contribution \label{sec:Contribution}}

Our main contribution is a tight bound on partial colorings for balancing from $\ell_p$ to $\ell_q$:

\begin{theorem} \label{thm:PartialColoringForLpLq}
Let $n \le m$ and $2 \le p \le q \le \infty$. \footnote{When $p \le 2$, uniformly random signs achieve a tight bound of $\Theta(n^{1/q})$ (see Theorem~\ref{thm:LowerBound}), so we focus on the more interesting case $p \ge 2$.} Then for any $\bm{a}_1, \dots, \bm{a}_n \in B^m_p$, there exists a polynomial-time computable partial coloring $\bm{x} \in [-1,1]^n$ with $|\{i : x_i^2 = 1\}| \ge n/2$ so that
\[
\displaystyle \Big\|\sum_{i=1}^n x_i \bm{a}_i\Big\|_q \le C \sqrt{\min\Big(p,\log\Big(\frac{2m}{n}\Big)\Big)} \cdot n^{1/2 - 1/p + 1/q},
\]
for some universal constant $C > 0$.
\end{theorem}

By a linear algebraic argument due to B\'ar\'any and Grinberg \cite{BARANY19811}, the condition $n \le m$ does not weaken the theorem: in fact for $n > m$ the upper bound can only be larger than that of $n = m$ by a factor of two. On the other hand, the condition $p \le q$ is natural, for otherwise if $p > q$ we would need a polynomial dependence on the dimension $m$, even for $n = 1$. By iteratively applying Theorem~\ref{thm:PartialColoringForLpLq} we can obtain a full coloring
at the expense of another factor of $\frac{1}{1/2 - 1/p + 1/q}$, with the caveat that $p > 2$ whenever $q = \infty$:

\begin{theorem} \label{thm:FullColoringForLpLq}
Let $n \le m$ and $2 \le p \le q \le \infty$ with $\{p,q\} \neq \{2, \infty\}$. Then for any $\bm{a}_1, \dots, \bm{a}_n \in B^m_p$, there exist polynomial-time computable signs $\bm{x} \in \{-1,1\}^n$ so that
\[
\displaystyle \Big\|\sum_{i=1}^n x_i \bm{a}_i\Big\|_q \le \frac{C \sqrt{\min\Big(p,\log\Big(\frac{2m}{n}\Big)\Big)} }{1/2 -1/p+1/q} \cdot n^{1/2 - 1/p + 1/q},
\]
for some universal constant $C > 0$.
\end{theorem}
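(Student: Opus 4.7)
The plan is to iteratively invoke Theorem~\ref{thm:PartialColoringForLpLq}, each step halving the set of fractional coordinates and contributing a geometrically decaying amount to the total $\ell_q$ discrepancy. Write $\alpha := \max(0, 1/2-1/p) + 1/q$, which is strictly positive by hypothesis, so $2^{-\alpha} < 1$ and the resulting series will converge.

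Initialize $\bm{y}^{(0)} := \bm{0} \in \setR^n$ and $I_0 := [n]$. At step $k$, let $I_k$ be the set of indices $i$ with $y^{(k)}_i \in (-1,1)$; I would apply Theorem~\ref{thm:PartialColoringForLpLq} in a translated form, starting from $\bm{y}^{(k)}|_{I_k}$ and using the $|I_k|$ vectors $\{\bm{a}_i\}_{i \in I_k} \subset B^m_p$, to obtain a new coloring $\bm{y}^{(k+1)}|_{I_k} \in [-1,1]^{I_k}$ with at least $|I_k|/2$ entries in $\{-1,1\}$ and
\[
\Bigl\|\sum_{i \in I_k} (y^{(k+1)}_i - y^{(k)}_i) \bm{a}_i\Bigr\|_q \;\le\; c_k \;:=\; C \sqrt{\min\bigl(p, \log(2m/|I_k|)\bigr)} \cdot |I_k|^{\alpha}.
\]
Coordinates outside $I_k$ are left unchanged. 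Since $|I_{k+1}| \le |I_k|/2$, after at most $\log_2 n$ steps every coordinate is $\pm 1$ and $\tilde{\bm{x}} := \bm{y}^{(K)}$ is the desired full coloring.

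By the triangle inequality applied telescopically across the updates, $\|\sum_i \tilde x_i \bm{a}_i\|_q \le \sum_{k \ge 0} c_k$. Using $|I_k| \le n/2^k$ I get $|I_k|^{\alpha} \le n^{\alpha} 2^{-k\alpha}$, together with $\log(2m/|I_k|) \le \log(2m/n) + k \log 2$ and $\sqrt{a+b}\le \sqrt{a}+\sqrt{b}$. Thus
\[
\sum_{k \ge 0} c_k \;\le\; C n^{\alpha} \Bigl[ \sqrt{\min(p,\log(2m/n))} \sum_{k \ge 0} 2^{-k\alpha} + O\Bigl(\sum_{k \ge 0} \sqrt{k}\cdot 2^{-k\alpha}\Bigr) \Bigr],
\]
where the first geometric series equals $1/(1-2^{-\alpha}) = \Theta(1/\alpha)$, supplying exactly the claimed $1/\alpha$ factor, and the residual $O(1/\alpha^{3/2})$ is absorbed into the constant in the parameter regimes of interest.

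The main obstacle will be justifying the translated form of Theorem~\ref{thm:PartialColoringForLpLq}: the statement as written produces a partial coloring based at the origin, whereas the iteration must restart from an arbitrary fractional point $\bm{y}^{(k)}|_{I_k}$. I would verify that the underlying Gaussian-measure lower bound on the relevant linear preimage of an $\ell_q$ ball (Section~3) and the algorithm for measure $e^{-\delta n}$ bodies (Section~4) are both translation-invariant, so the lemma lifts directly to produce a displacement from any starting point with the same $c_k$ bound. The strict positivity of $\alpha$ is essential: when $p = 2, q = \infty$ one has $\alpha = 0$, the series diverges, and this iteration yields no useful bound, which is precisely the excluded Komlós regime.
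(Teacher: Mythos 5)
Your iteration is the same as the paper's proof of Theorem~\ref{thm:FullColoringForLpLq}: repeatedly color half of the active coordinates and sum the increments by the triangle inequality. Two points, one structural and one quantitative. The structural point: the ``translated form'' you flag as the main obstacle should not be justified by claiming the measure bound and the algorithm are translation-invariant --- the Gaussian measure of a shifted copy of a symmetric body can be much smaller, so that verification would fail as stated. What actually handles the restart from a fractional point is Theorem~\ref{thm:PartialColoringForExpSmallSetsWithShiftAndLargeFractionColored}, which is stated with an arbitrary shift $\bm{y} \in [-1,1]^n$ (and a subspace); the paper's proof applies it at step $k$ with shift $\bm{y}^{(k)}$ together with a fresh invocation of Theorem~\ref{thm:MainMeasureBound} for the sub-instance $\{\bm{a}_i\}_{i \in I_k}$ in dimension $|I_k|$, where the body is still symmetric and centered at the origin. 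So the ingredient you need already exists; you just need to route through it rather than through a translated Theorem~\ref{thm:PartialColoringForLpLq}.

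The quantitative point is a genuine gap in your last estimate. Writing $L := \min(p,\log(2m/n))$, your bound is $C n^{\alpha}\bigl[\sqrt{L}\sum_{k}2^{-k\alpha} + O\bigl(\sum_k \sqrt{k}\,2^{-k\alpha}\bigr)\bigr]$, and the second sum is $\Theta(1/\alpha^{3/2})$, which cannot simply be ``absorbed into the constant'': in the regime $q=\infty$, $p = 2+\eps$ one has $\alpha \approx \eps/4$ and $L = \Theta(1)$, so the theorem claims $\Theta(n^{\alpha}/\alpha)$ while your estimate only gives $\Theta(n^{\alpha}/\alpha^{3/2})$ --- a loss of $1/\sqrt{\alpha}$ precisely in the regime that motivates the statement (the $1/(p-2)$ bound from the abstract). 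The fix is a short case distinction: if $\alpha < 1/6$ then necessarily $p < 3$ (since $p \ge 3$ forces $\alpha \ge 1/6$), so $\sqrt{\min(p,\log(2m/|I_k|))} \le \sqrt{p} = O(\sqrt{L})$ because $L \ge \min(p,\log 2)$, and the whole sum is a clean geometric series $O(\sqrt{L}\, n^{\alpha}/\alpha)$ with no growing logarithmic term; if instead $\alpha \ge 1/6$, then $\sum_k \sqrt{k}\,2^{-k\alpha} = O(1) = O(\sqrt{L}/\alpha)$ and the residual really is absorbable. With that correction your argument coincides with the paper's.
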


This significantly improves upon the general $\sqrt{q} \cdot m^{1/2-1/p} \cdot n^{1/q}$ bound from Banaszczyk's theorem in  \cite{8555088}  when $p = 2 + \eps$ for (not too small) $\eps > 0$ and $q \gg 1$. It is also worth noting that we may always assume $q \le \log(m)$ as larger norms are equivalent up to a constant by Lemma~\ref{lem:ElementaryInequalityOnLpNorms}. When $p = q$ and $m=n$, we get the following corollary which matches, up to a constant, the lower bound $\Omega(\sqrt{n})$ of \cite{Banaszczyk1993} known to hold for any norm:

\begin{corollary}[$\ell_p$ version of Spencer's theorem] \label{cor:LpSpencer}
Let $2 \le p \le \infty$ and $n \in \mathbb{N}$. Then for any $\bm{a}_1, \dots, \bm{a}_n \in B^n_p$, there exist polynomial-time computable signs $\bm{x} \in \{-1,1\}^n$ so that
\[
\displaystyle \Big\|\sum_{i=1}^n x_i \bm{a}_i\Big\|_p \le C\sqrt{n},
\]
for some universal constant $C > 0$.
\end{corollary}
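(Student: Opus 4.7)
The plan is to derive this as an immediate specialization of Theorem~\ref{thm:FullColoringForLpLq} with $q = p$ and $m = n$. The only things to check are (i) that the hypothesis $\max(0,\,1/2-1/p)+1/q > 0$ holds throughout the range $2 \le p \le \infty$, and (ii) that the resulting numerical bound collapses to $O(\sqrt{n})$, with no dependence on $p$ in the exponent of $n$. For (i), since $p \ge 2$ we have $1/p \le 1/2$, so $\max(0,\,1/2-1/p) = 1/2 - 1/p$, and setting $q = p$ gives
$$\max\Big(0,\,\frac{1}{2}-\frac{1}{p}\Big) + \frac{1}{q} \;=\; \Big(\frac{1}{2} - \frac{1}{p}\Big) + \frac{1}{p} \;=\; \frac{1}{2},$$
which is strictly positive, and remains so in the limiting case $p = q = \infty$ under the convention $1/\infty = 0$.

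For (ii), the crucial simplification comes from $m = n$: then $\log(2m/n) = \log 2$, so $\min(p,\,\log(2m/n)) = \log 2$ is a universal constant regardless of $p$. Substituting the exponent $1/2$ computed in (i) together with this constant into the bound of Theorem~\ref{thm:FullColoringForLpLq} yields
$$\Big\|\sum_{i=1}^n x_i\bm{a}_i\Big\|_p \;\le\; \frac{C\sqrt{\log 2}}{1/2}\cdot n^{1/2} \;=\; O(\sqrt{n}),$$
with the signs $\bm{x}\in\{-1,1\}^n$ produced in polynomial time by the algorithm underlying that theorem. Since all the real work is already contained in Theorem~\ref{thm:FullColoringForLpLq} (and, beneath it, in the partial coloring guarantee of Theorem~\ref{thm:PartialColoringForLpLq}), there is no additional obstacle here; the corollary is essentially a book-keeping exercise that showcases the fact that the exponent $\max(0,\,1/2-1/p)+1/q$ degenerates precisely to $1/2$ along the diagonal $p=q$ for $p\ge 2$, which is what makes this statement a genuine $\ell_p$ analogue of Spencer's theorem.
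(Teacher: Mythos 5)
Your proposal is correct and matches the paper's (implicit) derivation exactly: Corollary~\ref{cor:LpSpencer} is obtained by specializing Theorem~\ref{thm:FullColoringForLpLq} to $q=p$ and $m=n$, where the exponent collapses to $1/2$, the factor $\sqrt{\min(p,\log(2m/n))}=\sqrt{\log 2}$ is an absolute constant, and the prefactor $1/(\max(0,1/2-1/p)+1/q)=2$ is likewise constant. Your verification of the hypothesis $\max(0,1/2-1/p)+1/q>0$ (including the case $p=q=\infty$) is the only bookkeeping needed, and it is done correctly.
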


The following corollary shows the Beck-Fiala conjecture holds for $t \ge n$ and slightly improves upon the best known bound of $O(\sqrt{t \log n})$ \cite{BalancingVectors-Banaszczyk98} when $t$ is close to $n$:
%\newpage
\begin{corollary}[Bound for Beck-Fiala] \label{cor:BeckFiala} Let $n \le m$ and $\bm{a}_1, \dots, \bm{a}_n \in \{0,1\}^m$, each with at most $t \in [m]$ ones. Then there exist polynomial-time computable signs $\bm{x} \in \{-1,1\}^n$ so that

\[
\displaystyle \Big\|\sum_{i=1}^n x_i \bm{a}_i\Big\|_\infty \le C\sqrt{t} \log\Big(\frac{2\max(n, t)}{t}\Big),
\]
for some universal constant $C > 0$.
\end{corollary}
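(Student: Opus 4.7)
The plan is to apply Theorem~\ref{thm:FullColoringForLpLq} with $q = \infty$ and a carefully chosen $p > 2$ after normalization. Each $\bm{a}_i \in \{0,1\}^m$ has at most $t$ ones, so $\|\bm{a}_i\|_p \le t^{1/p}$ and hence $\bm{a}_i / t^{1/p} \in B^m_p$ for every $p \ge 1$. Applying Theorem~\ref{thm:FullColoringForLpLq} (valid since $1/2-1/p>0$ when $p>2$ and $q=\infty$) to these normalized vectors and then rescaling by $t^{1/p}$ will produce signs $\bm{x} \in \{-1,1\}^n$ satisfying
\[
\Big\|\sum_{i=1}^n x_i \bm{a}_i\Big\|_\infty \;\le\; \frac{C\sqrt{\min(p,\log(2m/n))}}{1/2 - 1/p} \cdot t^{1/p}\, n^{1/2 - 1/p}.
\]
What remains is to optimize $p$ so that the right-hand side matches $O\bigl(\sqrt{t}\,\log(2\max(n,t)/t)\bigr)$.

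Writing $L := \log(2\max(n,t)/t) \ge \log 2$, I would choose $p := 2 + 1/L$. Then $p$ stays bounded (between $2$ and $2 + 1/\log 2$) and $1/2 - 1/p = \Theta(1/L)$, so the outer prefactor $\sqrt{p}/(1/2 - 1/p)$ is $O(L)$. It then suffices to show that $t^{1/p}\,n^{1/2 - 1/p} = O(\sqrt{t})$. For $t \ge n$ this is immediate since $n^{1/2 - 1/p} \le t^{1/2 - 1/p}$, which gives $t^{1/p}\,n^{1/2 - 1/p} \le \sqrt{t}$. For $t < n$, I would rewrite $t^{1/p}\,n^{1/2 - 1/p} = n^{1/2}(t/n)^{1/p}$ and use $t/n = 2e^{-L}$ to conclude that $(t/n)^{1/p}$ exceeds $(t/n)^{1/2}$ by at most a constant factor (the exponent difference is $\Theta(1/L)$ while $|\log(t/n)| = \Theta(L)$). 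Combining the two estimates yields the bound $O(\sqrt{t}\cdot L)$, which is precisely the content of the corollary.

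The main (and rather mild) difficulty is that Theorem~\ref{thm:FullColoringForLpLq} degenerates as $p \to 2^+$: the factor $1/(1/2 - 1/p)$ blows up, while the compensating factor $t^{1/p}$ only shrinks monotonically toward $\sqrt{t}$. The choice $p = 2 + 1/L$ is calibrated precisely so that these two effects balance, converting the would-be blow-up into the logarithmic factor $\log(2\max(n,t)/t)$ appearing in the statement. Beyond this one-parameter optimization, no additional ingredients appear necessary.
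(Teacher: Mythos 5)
Your proposal is correct and follows essentially the same route as the paper: both apply Theorem~\ref{thm:FullColoringForLpLq} with $q=\infty$ after normalizing by $t^{1/p}$, choosing $p$ slightly above $2$ so that $1/2-1/p = \Theta(1/\log(2\max(n,t)/t))$, which trades the blow-up of the $1/(1/2-1/p)$ factor for the logarithmic term. The only (cosmetic) difference is that the paper splits into the cases $t \ge n/10$ (fixed $p=4$) and $t < n/10$, whereas your single choice $p = 2 + 1/L$ handles both uniformly.
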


We show the partial coloring bound in Theorem~\ref{thm:PartialColoringForLpLq} is tight at least when $m = n$:
\begin{theorem} \label{thm:LowerBound}
Let $1 \le p \le q \le \infty$. There exist infinitely many positive integers $n$ for which we can find $\bm{a}_1, \dots, \bm{a}_n \in B^n_p$ such that for any $\bm{x} \in [-1,1]^n$ with $|\{i : x_i^2 = 1\}| \ge n/2$ one has
\[
\displaystyle \Big\|\sum_{i=1}^n x_i \bm{a}_i\Big\|_q \ge C \cdot n^{\max(0,1/2 - 1/p) + 1/q},
\]
for some universal constant $C > 0$.
\end{theorem}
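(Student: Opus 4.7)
The plan is to give two explicit constructions depending on whether $p \le 2$ or $p \ge 2$, noting that $\max(0,1/2-1/p) = 0$ in the first case and $1/2-1/p$ in the second.

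For the case $1 \le p \le 2$, I would simply take the standard basis: $\bm{a}_i = \bm{e}_i$, which clearly lies in $B_p^n$. Then $\sum_{i=1}^n x_i\bm{a}_i = \bm{x}$, and since at least $n/2$ coordinates of $\bm{x}$ have absolute value $1$, we get $\|\bm{x}\|_q \ge (n/2)^{1/q}$, which matches the lower bound $\Omega(n^{\max(0,1/2-1/p)+1/q}) = \Omega(n^{1/q})$ in this regime.

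For the case $2 \le p \le q \le \infty$, I would restrict to $n = 2^k$ so that a Hadamard matrix $\bm{H} \in \{-1,+1\}^{n \times n}$ (satisfying $\bm{H}^T\bm{H} = n \bm{I}$) exists; this gives infinitely many valid $n$. Set $\bm{a}_i = \bm{H}_{\cdot i}/n^{1/p}$, so that $\|\bm{a}_i\|_p = n^{1/p}/n^{1/p} = 1$. For any admissible $\bm{x}$, orthogonality gives
\[
\Big\|\sum_{i=1}^n x_i \bm{a}_i\Big\|_2 = \frac{\|\bm{H}\bm{x}\|_2}{n^{1/p}} = \frac{\sqrt{n}\,\|\bm{x}\|_2}{n^{1/p}} \ge \frac{\sqrt{n}\cdot \sqrt{n/2}}{n^{1/p}} = \frac{n^{1-1/p}}{\sqrt{2}}.
\]
Since $q \ge p \ge 2$, the power-mean inequality $\|\bm{v}\|_q \ge n^{1/q - 1/2}\|\bm{v}\|_2$ (which is just Cauchy-Schwarz in the limiting case $q=\infty$) yields $\|\sum x_i \bm{a}_i\|_q \ge n^{1/q-1/2} \cdot n^{1-1/p}/\sqrt{2} = n^{1/2 - 1/p + 1/q}/\sqrt{2}$, as required.

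Since both constructions are trivial to check, I do not anticipate a real obstacle; the only subtle point is restricting to $n = 2^k$ so that Sylvester-type Hadamard matrices exist, justifying the ``infinitely many $n$'' clause. One could alternatively appeal to Paley-type constructions for denser families of $n$, but this is not needed here.
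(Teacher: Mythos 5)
Your proposal is correct and follows essentially the same route as the paper: the identity matrix for $1 \le p \le 2$ and a (normalized) Hadamard matrix for $p \ge 2$, combined with $\|\bm{v}\|_q \ge n^{1/q-1/2}\|\bm{v}\|_2$. Your version is if anything slightly cleaner, since you make the normalization of the Hadamard columns by $n^{1/p}$ explicit, which the paper leaves implicit.
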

%\rem{T: We are mentioning a partial coloring algorithm for any $K$ with $\gamma_n(K) \geq e^{-\alpha n}$ but don't state any in the contributions section. We might at least state the final one of Thm~\ref{thm:PartialColoringForExpSmallSetsWithShiftAndLargeFractionColored} here. That's the one that we use and it's awkward to use a Theorem before it even got stated. }

A result of Giannopoulos~\cite{Giannopoulos1997} shows that for a \emph{small enough} constant, a symmetric convex body $K$ with $\gamma_n(K) \geq e^{-\alpha n}$
contains a partial coloring $\bm{x} \in \{ -1,0,1\}^n$ with a linear number of entries in $\pm 1$. We can prove that for fractional colorings \emph{any} constant $\alpha > 0$ suffices. Our argument even works for intersections with a large enough subspace.
\begin{theorem} \label{thm:PartialColoringForExpSmallSetsWithShiftAndLargeFractionColored} %\rem{T: Needed to get $n/2$ elements colored and not just $\Omega(n)$.}
For all $\alpha,\beta,\gamma>0$, there is a constant $C := C(\alpha,\beta,\gamma)>0$ so that the following holds: There is a randomized polynomial time algorithm which for a symmetric convex set $K \subseteq \setR^n$ with $\gamma_n(K) \geq e^{-\alpha n}$, a shift $\bm{y} \in [-1,1]^n$ and a subspace $H \subseteq \setR^n$ with $\dim(H) \geq \beta n$, finds an  $\bm{x} \in (C \cdot K \cap H)$ with $\bm{x} + \bm{y} \in [-1,1]^n$ and $|\{i \in [n] : (\bm{x} + \bm{y})_i \in \{ \pm 1\} \}| \geq (\beta-\gamma) n$.
\end{theorem}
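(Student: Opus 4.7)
The plan has three ingredients: (i) extending the classical random-walk partial coloring algorithms of Rothvoss and Eldan--Singh to handle a shift $\bm{y} \in [-1,1]^n$ and a subspace $H \subseteq \setR^n$; (ii) modifying the walk so that it tolerates $\gamma_n(K) \geq e^{-\alpha n}$ for \emph{any} constant $\alpha > 0$, not just small enough $\alpha$ as in the classical analysis; and (iii) iterating the base algorithm to amplify the fraction of frozen coordinates from a universal constant up to $(\beta - \gamma)/\beta$ of $\dim(H)$.

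The main novelty lies in (ii). Note that one cannot simply dilate $K$ to reach the small-$\delta$ regime where classical analyses apply: by the S-inequality of Lata{\l}a--Oleszkiewicz, the dilation factor required to boost Gaussian measure from $e^{-\alpha n}$ to $e^{-\delta_0 n}$ for small $\delta_0$ is exponential in $n$ when $K$ is needle-like. I would instead run a drift-corrected Brownian motion inside $H$, namely $dX_t = -P_H \nabla \phi(X_t)\,dt + P_H\,dB_t$ with $\phi(\bm{x}) := -\log \gamma_n(CK - \bm{x})$ a convex Gaussian log-barrier at a scale $C = C(\alpha)$ to be determined, freezing coordinate $i$ of $X_t + \bm{y}$ the first time it reaches $\pm 1$ (and restricting all subsequent motion to $H \cap e_i^\perp$). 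Since $\gamma_n(CK - \cdot)$ is log-concave by Pr\'ekopa--Leindler, $\phi$ is convex and $-\nabla \phi$ pushes the walk away from $\partial(CK)$. An It\^o-calculus argument, combined with Borell-type estimates that tie the hypothesis $\gamma_n(K) \geq e^{-\alpha n}$ to the boundary behavior of $\phi$, should show that for an appropriate $C = C(\alpha)$ the walk stays inside $CK$ with constant probability up to the first time at which a constant fraction $c_0 \dim(H)$ of coordinates are frozen, with $C$ and $c_0$ depending only on $\alpha$ (and $\beta$).

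For (iii), I would iterate the base algorithm $k = O(\log(\beta/\gamma)/c_0)$ times: at iteration $i$, apply it on the subspace $H_i \subseteq H$ of vectors vanishing on already-frozen coordinates, with updated shift $\bm{y}_i := \bm{y} + \sum_{j<i} \bm{x}_j$, producing $\bm{x}_i \in CK \cap H_i$ with a constant fraction of the $\dim(H_i)$ remaining coordinates newly frozen. The cumulative $\bm{x} := \sum_i \bm{x}_i$ then lies in $kC \cdot K \cap H$, satisfies $\bm{x} + \bm{y} \in [-1,1]^n$, and has at least $(\beta - \gamma) n$ coordinates of $\bm{x} + \bm{y}$ equal to $\pm 1$, giving the final constant $C(\alpha,\beta,\gamma) = O(C \log(\beta/\gamma))$. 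The main technical obstacle is step (ii): choosing the potential $\phi$ and the scale $C$ so that the drift confines the walk to $CK$ with $C$ independent of $n$, while at the same time not suppressing the cube-freezing rate below a constant. Getting this balance right is the technical heart of the argument and likely requires a potential-function / martingale energy estimate simultaneously controlling the walk's escape probability from $CK$ (via the log-concavity of Gaussian measure) and the number of frozen coordinates (via a spreading estimate for the projected Brownian component).
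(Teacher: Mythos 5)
There is a genuine gap, and it sits exactly where you locate the ``technical heart'' of your plan: step (ii) is a proposal for an argument, not an argument. The paper does not use a drift-corrected diffusion at all. Its base algorithm is a single Gaussian sample $\bm{x}^* \sim N(\bm{0},\bm{I}_n)$ followed by one Euclidean projection $\bm{y}^* := \mathrm{argmin}\{\|\bm{x}^*-\bm{y}\|_2 : \bm{y}\in K\cap H\cap[-\eps,\eps]^n\}$, and the quantitative input that makes arbitrary constant $\alpha$ tractable is a mean-width estimate: by the Gaussian form of Urysohn's inequality, any symmetric convex $Q$ with $\gamma_n(Q)\ge e^{-\alpha n}$ has $w(Q)\ge \tfrac12 e^{-\alpha}\sqrt{n}$, which (after truncating $K$ to a ball of radius $O(\sqrt{\alpha n})$ and optimizing a step length $\lambda$) yields $\E[d(\bm{x},K(I))]\le \sqrt{n}\,(1-c(\alpha))$ for every index set $I$ with $|I|\le \delta n$. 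Comparing this against $\E[d(\bm{x},[-\eps,\eps]^n)]\ge (1-O(\eps))\sqrt{n}$, taking $\eps \ll c(\alpha)$, and union-bounding over the $e^{O(\delta\log(1/\delta)n)}$ small index sets forces the projection to have $\ge \delta n$ tight cube constraints. Your plan contains no substitute for this chain: ``Borell-type estimates that tie the hypothesis $\gamma_n(K)\ge e^{-\alpha n}$ to the boundary behavior of $\phi$'' is precisely the step that needs a theorem, and it is not clear that the log-barrier drift $-\nabla\phi$ with $\phi=-\log\gamma_n(CK-\cdot)$ confines the walk to $CK$ with a dimension-free $C(\alpha)$ while simultaneously freezing a constant fraction of coordinates --- the tension you yourself flag at the end is the whole problem, and nothing in the proposal resolves it.

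Your steps (i) and (iii) do match the paper's structure (shifts are handled by a per-coordinate rescaling $x_i\mapsto x_i/R_i$ that only increases Gaussian measure; the subspace is handled by replacing $K$ with $K\cap H$ and checking $\gamma_H(K\cap H)\ge\gamma_n(K)$; and the amplification is a constant number of recursive applications, stopping once fewer than $(1-\beta+\gamma)n$ coordinates remain active so that the residual subspace still has dimension $\ge\gamma n$). But since these reductions rest entirely on the unproven base case, the proposal as written does not establish the theorem. If you want to salvage the diffusion route, you would still need the same geometric fact the paper extracts from Urysohn --- some quantitative statement that $\gamma_n(K)\ge e^{-\alpha n}$ forces $K$ to be ``wide'' in most directions --- and at that point the one-shot projection argument is both simpler and already sufficient.
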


Finally, we show that a weaker \textit{hereditary} volume lower bound suffices to provide Gaussian measure lower bounds for arbitrary convex bodies. Previously such an implication was known only for the Gaussian measure of intersections with subspaces ~\cite{8555088}:

%\newpage 

\begin{theorem}\label{thm:HereditaryVolumeImpliesGaussianMeasure}
Let $K \subseteq \setR^n$ be a symmetric convex body. Given $S \subseteq [n]$, denote by $K_S$ the intersection with the coordinate subspace: $K_S := K \cap \{\bm{x} : x_i = 0 \ \forall i \notin S\} \subseteq \setR^S$. Then we have

\[
\gamma_n(K) \ge \min_{S \subseteq [n]} \vol_{|S|}(K_S) \cdot 2^{-O(n)},\] 
with the convention that $\vol_{0} (\{\bm{0}\}) = 1$. More generally, for any $\delta \in (0, 1]$,
\[
\gamma_n(K) \ge \min_{S \subseteq [n], |S| \le \delta n} \vol_{|S|}(K_S)^{1/\delta} \cdot 2^{-O(n/\delta)}.\] 
\end{theorem}

\section{Preliminaries}

We will use two elementary inequalities dealing with $\ell_p$-norms. The first one estimates the ratio between different norms: 
\begin{lemma} \label{lem:ElementaryInequalityOnLpNorms}
For any $\bm{z} \in \setR^m$ and $1 \leq p \leq q \leq \infty$, we have $\|\bm{z}\|_q \leq \|\bm{z}\|_p \leq m^{1/p - 1/q}\|\bm{z}\|_q$.
\end{lemma}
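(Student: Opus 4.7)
The plan is to prove the two inequalities separately, both via completely standard techniques. For the first inequality $\|\bm{z}\|_q \le \|\bm{z}\|_p$, I would use homogeneity: after rescaling so that $\|\bm{z}\|_p = 1$, every coordinate satisfies $|z_i| \le 1$, and since raising a number in $[0,1]$ to a larger power only decreases it, the coordinatewise bound $|z_i|^q \le |z_i|^p$ gives $\sum_{i=1}^m |z_i|^q \le 1$, hence $\|\bm{z}\|_q \le 1$. The case $q = \infty$ is immediate since $\max_i |z_i| \le \|\bm{z}\|_p$ for any $p \ge 1$.

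For the reverse direction $\|\bm{z}\|_p \le m^{1/p-1/q} \|\bm{z}\|_q$, I would apply H\"older's inequality with conjugate exponents $q/p$ and $q/(q-p)$, which are valid since $q \ge p$. Writing $|z_i|^p = |z_i|^p \cdot 1$ and summing,
\[
\sum_{i=1}^m |z_i|^p \le \Big(\sum_{i=1}^m |z_i|^q\Big)^{p/q} \Big(\sum_{i=1}^m 1\Big)^{(q-p)/q} = \|\bm{z}\|_q^p \cdot m^{(q-p)/q}.
\]
Taking $p$-th roots yields exactly the claimed factor $m^{1/p-1/q}$. When $q = \infty$, the argument degenerates to the elementary inequality $\sum_{i=1}^m |z_i|^p \le m \cdot \|\bm{z}\|_\infty^p$, and when $p = \infty$ both sides of the claim coincide.

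There is essentially no obstacle to this proof; it is the textbook monotonicity of $\ell_p$-norms together with a single application of H\"older. The only care needed is the direction of the inequality (the smaller index yields the larger norm) and the book-keeping of exponents in H\"older, both of which are routine.
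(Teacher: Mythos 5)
Your proof is correct; the paper itself states this lemma without proof, treating it as a standard elementary fact, and your argument (normalization plus monotonicity of $t \mapsto t^{q/p}$ on $[0,1]$ for the first inequality, H\"older with exponents $q/p$ and $q/(q-p)$ for the second) is exactly the textbook route one would supply. The exponent book-keeping checks out, and the degenerate cases $q=\infty$ and $p=\infty$ are handled properly.
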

It is instructive to note that this bound implies $\|\bm{z}\|_{\infty} \leq \|\bm{z}\|_{\log_2(m)} \leq 2\|\bm{z}\|_{\infty}$. 
 If one has an upper bound on the largest entry in a vector --- say $\|\bm{z}\|_{\infty} \leq 1$ --- then one can
 strengthen the first inequality to $\|\bm{z}\|_q^q \leq \|\bm{z}\|_p^p$. More generally: 
 \begin{lemma} \label{lem:InterpolationInequality}
For any $\bm{z} \in \setR^m$ and $1 \leq p \leq q \leq \infty$, we have  $\|\bm{z}\|_q^q \le \|\bm{z}\|_p^p \cdot \|\bm{z}\|_\infty^{q-p}$.
\end{lemma}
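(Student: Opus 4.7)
The plan is to prove the inequality by the standard pointwise domination trick: factor $|z_i|^q$ as $|z_i|^p \cdot |z_i|^{q-p}$ and bound the second factor by $\|\bm{z}\|_\infty^{q-p}$ coordinate by coordinate.

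First, I would dispose of the boundary cases. If $\bm{z} = \bm{0}$ both sides are $0$ and the inequality holds trivially, so assume $\bm{z}\neq \bm{0}$. If $q=\infty$, the statement has to be read as a limiting form (the exponent $q-p$ would not make sense otherwise), and in that limit it reduces to the trivial $\|\bm{z}\|_\infty \le \|\bm{z}\|_\infty$; so it suffices to treat $1\le p\le q<\infty$.

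For that main case, I would simply write
\[
\|\bm{z}\|_q^q \;=\; \sum_{i=1}^m |z_i|^q \;=\; \sum_{i=1}^m |z_i|^p \cdot |z_i|^{q-p} \;\le\; \|\bm{z}\|_\infty^{q-p} \sum_{i=1}^m |z_i|^p \;=\; \|\bm{z}\|_p^p \cdot \|\bm{z}\|_\infty^{q-p},
\]
where the inequality uses that $q - p \ge 0$ and $|z_i| \le \|\bm{z}\|_\infty$ for every coordinate $i$.

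There really is no obstacle here; the only thing to double-check is that the exponent $q-p$ is nonnegative (which is guaranteed by the hypothesis $p\le q$) and that the case $q=\infty$ is interpreted as a limit. An alternative derivation is to note that the claim is equivalent to $\|\bm{z}\|_q \le \|\bm{z}\|_p^{p/q}\,\|\bm{z}\|_\infty^{1-p/q}$, which is a direct instance of log-convexity of $t \mapsto \log\|\bm{z}\|_{1/t}$ on $[0,\infty)$ (or equivalently H\"older's inequality applied to the exponents $q/p$ and $q/(q-p)$), but the one-line pointwise argument above is the cleanest.
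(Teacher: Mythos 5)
Your proof is correct: the pointwise factorization $|z_i|^q = |z_i|^p\,|z_i|^{q-p} \le |z_i|^p\,\|\bm{z}\|_\infty^{q-p}$ summed over coordinates is exactly the standard argument, and the paper in fact states this lemma without proof as an elementary fact, so your write-up (including the trivial handling of $q=\infty$) matches the intended reasoning.
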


We will also need the following version of \emph{Khintchine's inequality}, see e.g. the excellent textbook of
Artstein-Avidan, Giannopoulos and Milman~\cite{AsymptoticGeometricAnalysisBook2005}.
\begin{lemma}[Khintchine's inequality] \label{lem:Khintchine}
Given $p > 0$, $a_1, \dots, a_n \in \setR$ and $\bm{x} \sim N(\bm{0},\bm{I}_n)$, we have
\[
\E\Big[\Big|\sum_{i=1}^n x_i a_i\Big|^p \Big] \le C \sqrt{p} \cdot \Big(\sum_{i=1}^n a_i^2\Big)^{p/2} 
\]
where $C>0$ is a universal constant.
\end{lemma}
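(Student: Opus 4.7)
The plan is to exploit the fact that the randomness here is Gaussian rather than Rademacher, which makes the proof much more direct than the classical Khintchine argument. First, I would observe that since $\bm{x} \sim N(\bm{0}, \bm{I}_n)$ and the map $\bm{x} \mapsto \sum_{i=1}^n x_i a_i$ is linear, the random variable $S := \sum_{i=1}^n x_i a_i$ is itself a centered Gaussian with variance $\sigma^2 := \sum_{i=1}^n a_i^2$. Thus $S$ has the same distribution as $\sigma Z$ for $Z \sim N(0,1)$, and
\[
\E\big[|S|^p\big] \;=\; \sigma^p \cdot \E\big[|Z|^p\big] \;=\; \Big(\sum_{i=1}^n a_i^2\Big)^{p/2} \cdot \E\big[|Z|^p\big].
\]
So the inequality (interpreted with the standard Khintchine constant $(C\sqrt{p})^p$ rather than the apparent typo $C\sqrt{p}$) reduces to estimating the absolute $p$-th moment of a single standard Gaussian.

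Second, I would compute $\E[|Z|^p]$ in closed form. By symmetry and the substitution $u = z^2/2$,
\[
\E\big[|Z|^p\big] \;=\; \frac{2}{\sqrt{2\pi}} \int_0^\infty z^p e^{-z^2/2}\, dz \;=\; \frac{2^{p/2}}{\sqrt{\pi}}\, \Gamma\!\Big(\tfrac{p+1}{2}\Big).
\]
This reduces everything to a Gamma function bound.

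Third, I would use Stirling's approximation $\Gamma(t+1) \le \sqrt{2\pi t}\,(t/e)^t \cdot e^{1/(12t)}$, applied with $t = (p-1)/2$ for $p \ge 1$, to obtain $\Gamma((p+1)/2) \le \sqrt{p}\,(p/(2e))^{p/2}$ times a harmless absolute constant. Combined with the $2^{p/2}$ factor, this yields $\E[|Z|^p] \le (C\sqrt{p})^p$ for a universal $C>0$ and all $p \ge 1$. For $p \in (0,1)$ one can bound $\E[|Z|^p] \le \E[|Z|]^{\,p} \le 1$ by Jensen, absorbing everything into the constant. Together with the reduction from the first paragraph, this gives the claimed inequality.

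There is not really a substantive obstacle here: the only step requiring any care is making the Stirling estimate uniform over all $p>0$, which is handled by splitting into the small-$p$ (Jensen) and large-$p$ (Stirling) regimes. The Rademacher case of Khintchine typically goes through either moment generating function bounds or a symmetrization/contraction argument and needs the Gaussian tail as a comparison; here the Gaussianity of $\bm{x}$ lets us skip that step entirely.
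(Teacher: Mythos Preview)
Your argument is correct for $p\ge 1$, and you rightly flag the typo: the right-hand side should be $(C\sqrt{p})^p \big(\sum_i a_i^2\big)^{p/2}$, which is exactly how the paper uses the lemma in the subsequent proof. One small caveat about the range $p\in(0,1)$: your Jensen step gives only $\E[|Z|^p]\le 1$, but the target $(C\sqrt p)^p$ is itself below $1$ once $p<1/C^2$, and a short asymptotic check shows it actually drops below $\E[|Z|^p]$ for $p$ sufficiently small, so no universal constant can work there. This is a defect of the stated hypothesis ``$p>0$'' rather than of your proof, and it is harmless since the paper only invokes the lemma with $p\ge 1$.

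As for the approach, the paper does not actually prove the lemma; it cites a textbook and sketches the standard route via the Gaussian tail bound $\Pr[|\langle \bm a,\bm x\rangle|>\lambda]\le 2e^{-\lambda^2/2}$ combined with the layer-cake representation $\E[|Z|^p]=\int_0^\infty p\lambda^{p-1}\Pr[|Z|>\lambda]\,d\lambda$, observing that the integrand peaks around $\lambda=\Theta(\sqrt p)$. Your route is genuinely different: after the same reduction to a one-dimensional Gaussian moment, you evaluate it in closed form as $\E[|Z|^p]=2^{p/2}\Gamma\big(\tfrac{p+1}{2}\big)/\sqrt\pi$ and then apply Stirling. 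Both are elementary and yield the same constant up to bounded factors; yours gives a complete explicit computation, while the tail-bound sketch has the advantage of extending verbatim to any sub-Gaussian linear form, in particular the classical Rademacher version of Khintchine.
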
 
This fact can be derived from a standard Chernov bound which guarantees that for a vector with $\|\bm{a}\|_2 = 1$ one has $\Pr[|\left<\bm{a},\bm{x}\right>| > \lambda] \leq 2e^{-\lambda^2/2}$; then one can analyze that the regime of $\lambda = \Theta(\sqrt{p})$ dominates the contribution to $\E[|\left<\bm{a},\bm{x}\right>|^p]$. We use it to show the following standard estimate on the type constants of $\ell_p$ spaces (see Appendix A):
\begin{lemma} \label{lem:ExpectedLpNorm}
Given $p \ge 1$ and $\bm{a}_1, \dots, \bm{a}_n \in B^m_p$ and $\bm{x} \sim N(\bm{0}, \bm{I}_n)$, we have
\[
\E\Big[\Big\|\sum_{i=1}^n x_i \bm{a}_i\Big\|_p \Big] \le O(\sqrt{p} \cdot n^{\max(1/2,1/p)}).
\]
\end{lemma}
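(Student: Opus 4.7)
The plan is to apply Jensen's inequality to reduce to the $p$-th moment and then use Khintchine's inequality coordinate-wise, which transports the problem to bounding a row-$\ell_2$ quantity by the column hypothesis. Let $\bm{A} \in \setR^{m\times n}$ be the matrix whose columns are $\bm{a}_1,\ldots,\bm{a}_n$; then the $j$-th coordinate of $\sum_{i=1}^n x_i\bm{a}_i = \bm{A}\bm{x}$ is $\langle \bm{A}_j, \bm{x}\rangle$. Since $t\mapsto t^p$ is convex for $p\ge 1$, Jensen gives
\[
\E\Big[\Big\|\sum_{i=1}^n x_i\bm{a}_i\Big\|_p\Big] \;\le\; \E\Big[\Big\|\sum_{i=1}^n x_i\bm{a}_i\Big\|_p^p\Big]^{1/p} \;=\; \Big(\sum_{j=1}^m \E\big[|\langle \bm{A}_j,\bm{x}\rangle|^p\big]\Big)^{1/p}.
\]
Applying Lemma~\ref{lem:Khintchine} to each summand (in the equivalent form $\E[|\langle \bm{A}_j,\bm{x}\rangle|^p]^{1/p} \le O(\sqrt{p})\cdot\|\bm{A}_j\|_2$) and pulling out the common factor yields the single reduction
\[
\E\Big[\Big\|\sum_{i=1}^n x_i\bm{a}_i\Big\|_p\Big] \;\le\; O(\sqrt{p}) \cdot \Big(\sum_{j=1}^m \|\bm{A}_j\|_2^p\Big)^{1/p}.
\]

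It therefore remains to show $\sum_{j=1}^m \|\bm{A}_j\|_2^p \le n^{p\cdot \max(1/2,\,1/p)}$. The hypothesis only constrains the \emph{columns}, giving $\sum_{i,j}|A_{ji}|^p = \sum_i\|\bm{a}_i\|_p^p \le n$, so the task is to transpose this into an $\ell_2$-bound on the \emph{rows}. For $p\le 2$, monotonicity of $\ell_p$-norms (Lemma~\ref{lem:ElementaryInequalityOnLpNorms}) gives $\|\bm{A}_j\|_2 \le \|\bm{A}_j\|_p$, so $\sum_j\|\bm{A}_j\|_2^p \le \sum_{i,j}|A_{ji}|^p \le n$, which yields the factor $n^{1/p}$. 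For $p\ge 2$, the same lemma gives $\|\bm{A}_j\|_2 \le n^{1/2 - 1/p}\|\bm{A}_j\|_p$, hence $\|\bm{A}_j\|_2^p \le n^{p/2-1}\|\bm{A}_j\|_p^p$; summing over $j$ gives $\sum_j\|\bm{A}_j\|_2^p \le n^{p/2-1}\cdot n = n^{p/2}$, yielding the factor $n^{1/2}$. Putting the two cases together proves the claim.

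There is really no serious obstacle here: once one notices that Khintchine naturally produces Euclidean norms of the \emph{rows} of $\bm{A}$ while the input bound is an $\ell_p$-bound on the \emph{columns}, everything reduces to one standard norm-comparison inequality, split according to whether $p\le 2$ (trivial embedding) or $p\ge 2$ (power-mean/Hölder). The mild subtlety worth flagging is the factor $\sqrt{p}$: because we raise both sides of Khintchine to the $1/p$ power before summing, the $p$-dependence collapses from the naive $(C\sqrt{p})^p$ into the stated $O(\sqrt{p})$, which is what makes the bound useful across the whole range of $p$.
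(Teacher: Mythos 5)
Your proof is correct and follows essentially the same route as the paper: Jensen's inequality, then coordinatewise Khintchine, then a row-versus-column norm transposition split at $p=2$. The only immaterial divergence is in the $p\ge 2$ case, where you bound each row via $\|\bm{A}_j\|_2 \le n^{1/2-1/p}\|\bm{A}_j\|_p$ and sum, whereas the paper applies the triangle inequality for $\|\cdot\|_{p/2}$ to the squared columns — both yield $n^{1/2}$ — and your closing remark that Khintchine enters as $(C\sqrt{p})^p$ at the level of the $p$-th moment is the correct reading of the paper's Lemma~\ref{lem:Khintchine}.
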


A well-known correlation inequality for Gaussian measure is the following:
\begin{lemma}[\v{S}idak~\cite{SidaksLemma67} and Kathri~\cite{KhatriCorrelationInequality67}] \label{lem:SidakLemma}
For any symmetric convex set $K \subseteq \setR^n$ and strip $S =\{ \bm{x} \in \setR^n : |\left<\bm{a},\bm{x}\right>| \leq 1\}$, one has $\gamma_n(K \cap S) \geq \gamma_n(K) \cdot \gamma_n(S)$.
\end{lemma}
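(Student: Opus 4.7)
The plan is to prove the standard Šidák–Khatri correlation inequality by reducing to a one-dimensional comparison via rotational invariance of the Gaussian, then invoking Prékopa's theorem to obtain a log-concave marginal, and finally concluding with an elementary rearrangement argument.

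First, I would exploit the orthogonal invariance of $\gamma_n$. Pick a rotation $R$ so that $R\bm{a}$ is a positive multiple of $e_1$. Then $\gamma_n(K \cap S) = \gamma_n(RK \cap RS)$, the body $RK$ remains centrally symmetric and convex, and $RS = \{\bm{x} \in \setR^n : |x_1| \le t\}$ with $t = 1/\|\bm{a}\|_2$. Thus we may assume throughout that $S$ is a symmetric strip about the first coordinate, and $\gamma_n(S) = \int_{-t}^{t} \phi(s)\,ds$ where $\phi$ is the standard Gaussian density.

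Next, I would disintegrate along $x_1$. Define the marginal
\[
g(s) := \int_{\setR^{n-1}} \mathbf{1}_K(s, \bm{x}')\, \phi_{n-1}(\bm{x}')\, d\bm{x}',
\]
where $\phi_{n-1}$ is the $(n-1)$-dimensional standard Gaussian density, so that $\gamma_n(K) = \int_\setR \phi(s) g(s)\, ds$ and $\gamma_n(K \cap S) = \int_{-t}^{t} \phi(s) g(s)\, ds$. Since $K$ is convex and $\phi_{n-1}$ is log-concave, the integrand $(s, \bm{x}') \mapsto \mathbf{1}_K(s, \bm{x}')\phi_{n-1}(\bm{x}')$ is log-concave on $\setR^n$, and Prékopa's theorem (marginals of log-concave functions are log-concave) gives that $g$ is log-concave on $\setR$. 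The symmetry $K = -K$ forces $g$ to be even, hence nonincreasing in $|s|$.

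Finally, normalize $h(s) := g(s)/\gamma_n(K)$, so that $\int_\setR \phi h = 1 = \int_\setR \phi$; the target inequality becomes $\int_{-t}^{t} \phi(s)(h(s) - 1)\, ds \ge 0$. Since $h$ is even and nonincreasing in $|s|$, either $h \equiv 1$ (giving equality) or there is a unique threshold $s_0 > 0$ with $h \ge 1$ on $[-s_0, s_0]$ and $h \le 1$ outside — any other configuration would contradict $\int \phi h = \int \phi$. If $t \le s_0$ the integrand is pointwise nonnegative on $[-t,t]$. If $t > s_0$, use $\int_\setR \phi(s)(h(s)-1)\, ds = 0$ to rewrite the target as $-\int_{|s|>t} \phi(s)(h(s)-1)\, ds$, which is nonnegative because $h - 1 \le 0$ on $\{|s|>t\} \subseteq \{|s|>s_0\}$. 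The main conceptual input is Prékopa; the only mild subtlety is the crossing-point argument at the end, both of which are standard once the reduction to a strip is in place.
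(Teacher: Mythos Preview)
Your argument is correct and is in fact the standard route to the \v{S}id\'ak--Khatri inequality: rotate so the strip depends only on $x_1$, push forward to obtain an even log-concave marginal via Pr\'ekopa, and finish with the single-crossing argument. The paper, however, does not prove this lemma at all; it is stated in the Preliminaries as a classical result with references to \v{S}id\'ak and Khatri (and a remark that Royen's Gaussian correlation inequality subsumes it), and is then used as a black box. So there is no ``paper's proof'' to compare against --- your write-up simply supplies a proof where the paper chose to cite one.
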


It is worth noting that a recent result of Royen~\cite{ProofOfGCI-Royen-Arxiv2014} extends this to any two arbitrary symmetric sets, though its full power will not be needed. We refer to the exposition of Lata{\l}a and Matlak~\cite{RoyensProofOfGCI-LatalaMatlak-Arxiv2017}. We also need a one-dimensional estimate:

\begin{lemma} \label{lem:OneDimEstimate}
For a strip $S =\{ \bm{x} \in \setR^n : |\left<\bm{a},\bm{x}\right>| \leq 1\}$, one has \[
\gamma_n(S) = \gamma_1 (\{x \in \setR: |x| \le \|\bm{a}\|^{-1}_2\}) \ge 1-\exp(-\|\bm{a}\|_2^{-2}/2).
\]
\end{lemma}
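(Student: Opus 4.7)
The plan is to reduce both claims to one-dimensional statements about a standard Gaussian. For the equality, I would use rotational invariance of $N(\bm{0}, \bm{I}_n)$: when $\bm{x} \sim N(\bm{0}, \bm{I}_n)$, the linear form $\langle \bm{a}, \bm{x}\rangle$ is distributed as $N(0, \|\bm{a}\|_2^2)$. Consequently,
\[
\gamma_n(S) \;=\; \Pr\bigl[|\langle \bm{a}, \bm{x}\rangle| \le 1\bigr] \;=\; \Pr_{g \sim N(0,1)}\bigl[ |g| \le \|\bm{a}\|_2^{-1} \bigr] \;=\; \gamma_1\bigl(\{x \in \setR : |x| \le \|\bm{a}\|_2^{-1}\}\bigr),
\]
which is the first half of the lemma. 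For the inequality, it then suffices to set $t := \|\bm{a}\|_2^{-1}$ and establish the one-variable bound
\[
\Pr_{g \sim N(0,1)}[\,|g| \le t\,] \;\ge\; 1 - e^{-t^2/2} \qquad \text{for every } t \ge 0.
\]

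To prove this, I would define $f(t) := \Pr[|g| \le t] - (1 - e^{-t^2/2})$ and show $f \ge 0$ on $[0,\infty)$. Differentiating the densities directly,
\[
f'(t) \;=\; \sqrt{\tfrac{2}{\pi}}\, e^{-t^2/2} - t\, e^{-t^2/2} \;=\; e^{-t^2/2}\bigl(\sqrt{\tfrac{2}{\pi}} - t\bigr),
\]
so $f$ is strictly increasing on $[0, \sqrt{2/\pi}]$ and strictly decreasing on $[\sqrt{2/\pi}, \infty)$. Combined with the boundary values $f(0) = 0$ and $\lim_{t \to \infty} f(t) = 0$, the function $f$ attains its infimum on $[0,\infty)$ at one of the endpoints, and hence is nonnegative throughout, as required.

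I do not expect any real obstacle. The one mild subtlety worth flagging is that the usual Chernoff-type tail estimate $\Pr[|g|>t] \le 2 e^{-t^2/2}$ is \emph{too weak} in the small-$t$ regime (it would yield a vacuous or negative lower bound on $\Pr[|g|\le t]$), which is why I would avoid a direct tail-bound approach in favor of the monotonicity argument above. The argument crucially exploits the fact that $\Pr[|g|\le t]$ and $1 - e^{-t^2/2}$ agree at $t=0$ and have the same limit $1$ as $t \to \infty$, so the problem really does reduce to checking unimodality of their difference.
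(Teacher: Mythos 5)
Your proof is correct. The paper states Lemma~\ref{lem:OneDimEstimate} without giving a proof, treating it as a standard one-dimensional Gaussian estimate; your write-up supplies exactly the missing details: the equality via rotational invariance ($\left<\bm{a},\bm{x}\right> \sim N(0,\|\bm{a}\|_2^2)$, so the event depends only on a standard one-dimensional Gaussian at threshold $\|\bm{a}\|_2^{-1}$), and the inequality $\Pr[|g|\le t]\ge 1-e^{-t^2/2}$ via the sign analysis of $f'(t)=e^{-t^2/2}\bigl(\sqrt{2/\pi}-t\bigr)$ together with $f(0)=0$ and $f(t)\to 0$ as $t\to\infty$. One cosmetic quibble: the infimum ``at the endpoint at infinity'' is not attained, but since $f$ is decreasing on $[\sqrt{2/\pi},\infty)$ with limit $0$ it is nonnegative there, so the conclusion is unaffected. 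Your observation that the crude tail bound $\Pr[|g|>t]\le 2e^{-t^2/2}$ is vacuous for small $t$ is also on point: the lemma is precisely the sharper statement $\Pr[|g|>t]\le e^{-t^2/2}$ valid for all $t\ge 0$, which is what your monotonicity argument establishes.
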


We use the following scaling lemma to deal with constant factors, see ~\cite{tkocz2015high}:

\begin{lemma} \label{lem:ScalingLemma}
 Let $K \subset \setR^n$ be a measurable set and $B$ be a closed Euclidean ball such that $\gamma_n(K) = \gamma_n(B)$. Then $\gamma_n(tK) \ge \gamma_n(tB)$ for all $t \in [0, 1]$. In particular, if $\gamma_n(C \cdot K) \ge 2^{-O(n)}$ for some constant $C > 1$ then also $\gamma_n(K) \ge 2^{-O(n)}$.  
\end{lemma}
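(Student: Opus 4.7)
The plan is to reduce the statement to a Lebesgue integral inequality and then use a short rearrangement argument that exploits the fact that $B$ is a superlevel set of the Gaussian density. First, by the change of variables $\bm{y} = t\bm{x}$ I would rewrite
\[
\gamma_n(tK) = (2\pi)^{-n/2}\, t^n \int_K e^{-t^2\|\bm{x}\|_2^2/2}\, d\bm{x}
\]
and similarly for $B$, so that setting $s := t^2 \in [0,1]$ the desired bound $\gamma_n(tK) \geq \gamma_n(tB)$ becomes
\[
\int_K e^{-s\|\bm{x}\|_2^2/2}\, d\bm{x} \;\geq\; \int_B e^{-s\|\bm{x}\|_2^2/2}\, d\bm{x},
\]
while the hypothesis $\gamma_n(K) = \gamma_n(B)$ is precisely the corresponding identity at $s = 1$.

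Second, I would cancel the common contribution on $K \cap B$ and reduce the target to showing $\int_{K \setminus B} e^{-s\|\bm{x}\|_2^2/2}\, d\bm{x} \geq \int_{B \setminus K} e^{-s\|\bm{x}\|_2^2/2}\, d\bm{x}$. Letting $r$ denote the radius of $B$, for $\bm{x} \in K \setminus B$ we have $\|\bm{x}\|_2 \geq r$, and combined with $1 - s \geq 0$ this gives $e^{-s\|\bm{x}\|_2^2/2} \geq e^{(1-s)r^2/2}\, e^{-\|\bm{x}\|_2^2/2}$. Symmetrically, for $\bm{x} \in B \setminus K$ one has $\|\bm{x}\|_2 \leq r$ and hence the reverse pointwise inequality. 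Multiplying through by $e^{-(1-s)r^2/2}$ and using the $s = 1$ equality on the symmetric differences $K \setminus B$ and $B \setminus K$ concludes the proof of the main inequality.

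For the second assertion, I would take $B$ to be the Euclidean ball centered at the origin with $\gamma_n(B) = \gamma_n(C\cdot K) \geq 2^{-\alpha n}$. Standard $\chi^2$ tail bounds force the radius of $B$ to satisfy $r \geq c(\alpha)\sqrt{n}$ for some $c(\alpha) > 0$, and then applying the first part with $t = 1/C$ yields $\gamma_n(K) \geq \gamma_n((1/C)B)$, the Gaussian measure of a ball of radius $c(\alpha)\sqrt{n}/C$, which is again of the form $2^{-\beta n}$ for a constant $\beta = \beta(\alpha, C)$. The heart of the argument is really the one-line swap in the middle step; I do not anticipate a genuine obstacle, only the routine bookkeeping of how the hidden constants in "$2^{-O(n)}$" depend on $\alpha$ and $C$.
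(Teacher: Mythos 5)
Your proof is correct. The paper itself states Lemma~\ref{lem:ScalingLemma} without proof (it is used as a known scaling fact), so there is no in-paper argument to compare against; your argument is the standard elementary one and it goes through. The key step checks out: after the change of variables and cancelling $K \cap B$, the hypothesis at $s=1$ gives $\int_{K\setminus B} e^{-\|\bm{x}\|_2^2/2}\,d\bm{x} = \int_{B\setminus K} e^{-\|\bm{x}\|_2^2/2}\,d\bm{x}$, and since the weight ratio $e^{(1-s)\|\bm{x}\|_2^2/2}$ is monotone in $\|\bm{x}\|_2$ it is at least $e^{(1-s)r^2/2}$ on $K\setminus B$ and at most that on $B\setminus K$, which chains together exactly as you wrote. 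Two small remarks: your argument (necessarily) reads $B$ as a ball \emph{centered at the origin} — the lemma is false for off-center balls (in dimension one, take $K$ near the origin and $B$ a far-away interval of equal measure; then $\gamma_1(tB) > \gamma_1(tK)$ for small $t$) — and this is clearly the intended reading in the paper, where the lemma is applied with origin-centered balls; and the case $t=0$ should be handled separately (both sides are the measure of a point, hence $0$), since dividing by $t^n$ and integrating the weight $e^{-s\|\bm{x}\|_2^2/2}$ over a possibly unbounded $K$ is only legitimate for $s>0$. The deduction of the ``in particular'' statement is also fine: the radius lower bound $r \geq c(\alpha)\sqrt{n}$ and the lower bound $\gamma_n\big(c'\sqrt{n}\,B_2^n\big) \geq 2^{-\beta n}$ are both standard volume/density estimates of the same kind the paper uses elsewhere (e.g.\ in the width lemma of Section~4).
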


For Section 4 we also need three helpful results. For the first one, see~\cite{Handel2014ProbabilityIH}.
\begin{theorem} \label{thm:ConcentrationFor1LipschitzFunctions}
If $F: \setR^m \to \setR$ is $1$-Lipschitz, then for $t \geq 0$ one has

\[
\Pr_{\bm{y} \sim N(\bm{0}, \bm{I}_m)} \big[F(\bm{y}) > \E[F(\bm{y})] + t\big] \le e^{-t^2/2}.
\]
\end{theorem}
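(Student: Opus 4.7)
The plan is to establish the standard sub-Gaussian moment generating function bound $\E[e^{\lambda(F(\bm{y}) - \E F(\bm{y}))}] \le e^{\lambda^2/2}$ via the Herbst argument, and then apply a Chernoff-style optimization. First I would reduce, by convolving $F$ with a small Gaussian kernel, to the case that $F$ is smooth; this convolution preserves the $1$-Lipschitz property, and the resulting tail bound for the smoothed functions passes to the original $F$ in the limit.

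Next I would invoke the Gaussian log-Sobolev inequality of Gross: for every smooth $g\colon \setR^m \to \setR$ and $\bm{y}\sim N(\bm{0},\bm{I}_m)$,
\[
\E[g^2 \log g^2] - \E[g^2]\,\log \E[g^2] \le 2\,\E[\|\nabla g\|_2^2].
\]
Applied to $g := e^{\lambda F/2}$ with $H(\lambda):=\E[e^{\lambda F(\bm{y})}]$, the left-hand side becomes $\lambda H'(\lambda) - H(\lambda)\log H(\lambda)$, while on the right the Lipschitz bound $\|\nabla F\|_2 \le 1$ gives $\E[\|\nabla g\|_2^2] \le (\lambda^2/4)\,H(\lambda)$. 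Rearranging and dividing by $\lambda^2 H(\lambda)$, one obtains the differential inequality
\[
\frac{d}{d\lambda}\!\left(\frac{\log H(\lambda)}{\lambda}\right) \le \frac{1}{2}.
\]
Integrating from $0$ and using the boundary value $\lim_{\lambda\to 0^+}\log H(\lambda)/\lambda = \E[F(\bm{y})]$ yields $\log H(\lambda) \le \lambda\,\E[F(\bm{y})] + \lambda^2/2$. Applying Markov's inequality to $e^{\lambda(F(\bm{y}) - \E F(\bm{y}))}$ and optimizing at $\lambda = t$ then produces the desired tail $e^{-t^2/2}$.

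The only genuine piece of substance in this plan is the Gaussian log-Sobolev inequality itself; I would either cite it directly from \cite{Handel2014ProbabilityIH}, or prove it by first establishing the two-point log-Sobolev inequality for a symmetric Bernoulli random variable, tensorizing to $\{\pm 1\}^N$, and then passing to the Gaussian through the central limit theorem after the standard rescaling. An equally clean alternative would be to bypass log-Sobolev entirely and invoke Borell's Gaussian isoperimetric inequality on the superlevel sets $\{F > \E F + t\}$, exploiting the fact that a $1$-Lipschitz function's superlevel sets sit inside $t$-Minkowski enlargements of the median set; this route also yields the sharp constant $1/2$ in the exponent. The main obstacle is really just deciding which black box (log-Sobolev vs.\ isoperimetry) one wishes to build on, as both proofs proceed in a few clean lines once that tool is in hand.
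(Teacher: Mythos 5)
Your main route is correct, and it is worth noting that the paper does not prove this statement at all: it is used as a black box with a citation to \cite{Handel2014ProbabilityIH}, where the Herbst/log-Sobolev argument you outline is essentially the standard proof. Your computation checks out: with $g=e^{\lambda F/2}$ the Gaussian log-Sobolev inequality gives $\lambda H'(\lambda)-H(\lambda)\log H(\lambda)\le \tfrac{\lambda^2}{2}H(\lambda)$, which is exactly the derivative bound $\frac{d}{d\lambda}\bigl(\lambda^{-1}\log H(\lambda)\bigr)\le \tfrac12$, and integrating from $0$ with the boundary value $\E[F]$ plus Markov at $\lambda=t$ yields $e^{-t^2/2}$; the smoothing reduction is also fine since convolution preserves the Lipschitz constant and perturbs both $F$ and $\E[F]$ by $O(\eps)$, so the tail bound survives the limit. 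One small caveat concerns your proposed alternative: Borell's isoperimetric inequality most directly gives concentration around the \emph{median}, $\Pr[F> m_F+t]\le 1-\Phi(t)\le e^{-t^2/2}$, and transferring this to the mean with the same constant requires an additional step (bounding $|\E[F]-m_F|$ or a separate argument such as the Tsirelson--Ibragimov--Sudakov interpolation), so as stated that route does not immediately deliver the theorem in the form used here. Since the paper only needs the stated mean-centered bound and cites it, your log-Sobolev proof is a perfectly adequate self-contained substitute.
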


%\begin{theorem}[Urysohn's Inequality~\cite{AsymptoticGeometricAnalysisBook2005}]
%For any convex body $K \subseteq \setR^n$ one has
%\[
%w(K) \ge \Big(\frac{\vol_n(K)}{\vol_n(B^n_2)}\Big)^{1/n}.
%\]
% \end{theorem}

The classical \emph{Urysohn Inequality} states that among all convex bodies of identical volume,
the Euclidean ball minimizes the width. We will need a variant that is phrased in terms of the Gaussian
measure rather than volume. For a proof, see Eldan and Singh~\cite{DBLP:journals/rsa/EldanS18}.
\begin{theorem}[Gaussian Variant of Urysohn's Inequality] \label{thm:UrysohnInequality}
  Let $K \subseteq \setR^n$ be a convex body and let $r > 0$ be so that $\gamma_n(K) = \gamma_n(r B_2^n)$. Then
  $w(K) \geq w(r B_2^n) = r$.
\end{theorem}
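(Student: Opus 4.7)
The plan is to mimic the classical Steiner-symmetrization proof of Urysohn's inequality, replacing Steiner symmetrization by its Gaussian analogue.

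First I would reduce to Gaussian mean width. Writing $\bm{g} \sim N(\bm{0},\bm{I}_n)$ in polar form $\bm{g} = \|\bm{g}\|_2 \bm{\theta}$, independence of $\|\bm{g}\|_2$ from $\bm{\theta}$ uniform on $S^{n-1}$ combined with $1$-homogeneity of $h_K(\cdot) := \sup_{\bm{x} \in K}\inn{\cdot}{\bm{x}}$ gives
\[
\E[h_K(\bm{g})] \;=\; \E[\|\bm{g}\|_2] \cdot w(K), \qquad \E[h_{rB_2^n}(\bm{g})] \;=\; r \cdot \E[\|\bm{g}\|_2].
\]
So it suffices to show the Gaussian-mean-width inequality $w_G(K) := \E[h_K(\bm{g})] \geq w_G(rB_2^n)$ whenever $\gamma_n(K) = \gamma_n(rB_2^n)$.

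Next I would symmetrize: for a direction $\bm{\theta}\in S^{n-1}$, define $S_{\bm{\theta}} K$ by replacing each chord $K \cap (\bm{y}+\setR\bm{\theta})$ (with $\bm{y}\in\bm{\theta}^\perp$) by the symmetric interval $\bm{y}+[-c(\bm{y}),c(\bm{y})]\bm{\theta}$ of the same one-dimensional Gaussian measure. Fubini preserves $\gamma_n$. Because the Gaussian density peaks at $0$, the centered interval is the \emph{shortest} interval of given Gaussian measure, so $c(\bm{y}) \le (\beta(\bm{y})-\alpha(\bm{y}))/2$ where $[\alpha(\bm{y}),\beta(\bm{y})]$ is the original chord. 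A short calculation of support functions then gives, for every $\bm{u}=\bm{v}+s\bm{\theta}$ with $\bm{v}\in\bm{\theta}^\perp$,
\[
h_{S_{\bm{\theta}} K}(\bm{u}) \;\le\; \tfrac{1}{2}\bigl(h_K(\bm{v}+s\bm{\theta}) + h_K(\bm{v}-s\bm{\theta})\bigr),
\]
and averaging against the reflection-invariant Gaussian yields $w_G(S_{\bm{\theta}} K)\le w_G(K)$. Iterating along a countable dense set of directions and invoking Blaschke selection to extract a Hausdorff limit produces a convex body $K^*$ with $\gamma_n(K^*)=\gamma_n(K)$ and $w(K^*)\le w(K)$ that is symmetric under every hyperplane reflection --- hence a centered Euclidean ball. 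The Gaussian-measure constraint forces $K^* = rB_2^n$, and the inequality $w(K) \geq w(K^*) = r$ follows, with $w(rB_2^n) = r$ computed by noting that $\sup_{\bm{x} \in rB_2^n}\inn{\bm{\theta}}{\bm{x}} = r$ for every unit $\bm{\theta}$.

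The principal obstacle is establishing convexity of $S_{\bm{\theta}} K$, i.e.\ concavity of $c(\bm{y})$. Ehrhard's inequality yields concavity of $\bm{y} \mapsto \Phi^{-1}(\gamma_1(K \cap (\bm{y}+\setR\bm{\theta})))$, but this does not transfer directly to $c(\bm{y})$ since the auxiliary map $u\mapsto \Phi^{-1}((1+\Phi(u))/2)$ fails to be concave in general. My fallback plan is to realize the centered symmetrization as a Hausdorff limit of compositions of Ehrhard's half-space symmetrization --- which \emph{does} preserve both convexity and $\gamma_n$ --- applied alternately in directions $\bm{\theta}$ and $-\bm{\theta}$, so that the limiting body is centered on $\bm{\theta}^\perp$ and inherits convexity, and the support-function comparison above still applies.
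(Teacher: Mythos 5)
The paper offers no proof of this theorem --- it defers to Eldan and Singh --- so your argument has to stand on its own. Its first two steps do: the polar-decomposition identity $\E[h_K(\bm{g})]=\E[\|\bm{g}\|_2]\cdot w(K)$ is correct, Fubini preserves $\gamma_n$ under the fiberwise rearrangement, and since the centered interval has the smallest length among sets of a given one-dimensional Gaussian measure you get $c(\bm{y})\le(\beta(\bm{y})-\alpha(\bm{y}))/2$, hence $S_{\bm{\theta}}K$ sits inside the classical Steiner symmetral and inside $\frac12(K+R_{\bm{\theta}}K)$, so $w(S_{\bm{\theta}}K)\le w(K)$. The genuine gap is the convergence of the iteration, and you have correctly located but not closed it. Because $S_{\bm{\theta}}$ destroys convexity, you cannot invoke the convergence theorem for Steiner symmetrization (a statement about convex bodies, and nontrivial even there); Blaschke selection only yields a subsequential Hausdorff limit of compact sets, and your rearrangement is not continuous in the Hausdorff metric on non-convex sets (it depends on the Gaussian measure of fibers, which Hausdorff convergence does not control), so you cannot identify the limit as a fixed point of the symmetrizations, let alone as a centered ball. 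Nothing in the proposal substitutes for this step.

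The fallback does not repair it. The Ehrhard symmetral of a bounded body is unbounded (half-space--like), so its mean width is $+\infty$ and the support-function comparison becomes vacuous; worse, on a fiber of measure $m$ the alternation of $E_{\bm{\theta}}$ and $E_{-\bm{\theta}}$ simply oscillates between $(-\infty,\Phi^{-1}(m)]$ and $[-\Phi^{-1}(m),\infty)$ and never approaches the centered interval $[-\Phi^{-1}(\frac{1+m}{2}),\Phi^{-1}(\frac{1+m}{2})]$, so no limit of such compositions produces your $S_{\bm{\theta}}$. A clean way to finish that keeps the spirit of your symmetrization but avoids convergence issues is to Minkowski-average over all rotations at once: the body $\tilde{K}:=\int_{O(n)}UK\,dU$ has constant support function $w(K)$, hence equals $w(K)B_2^n$, and Ehrhard's inequality applied inductively to finite convex combinations of the bodies $UK$ (all of Gaussian measure $\gamma_n(K)$) gives $\Phi^{-1}(\gamma_n(\tilde{K}))\ge\Phi^{-1}(\gamma_n(K))$, i.e. $\gamma_n(w(K)B_2^n)\ge\gamma_n(rB_2^n)$ and thus $w(K)\ge r$. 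This is essentially the argument behind the citation.
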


For a symmetric convex body $K$ and a subspace $H$, the Gaussian measure of the section $K \cap (\bm{x} + H)$ is maximized when $\bm{x} = \bm{0}$ by log-concavity. Thus we have the following:

\begin{lemma}[Gaussian measure of sections] \label{lem:GaussianMeasureSections}
        Let $K \subseteq \setR^n$ be a symmetric convex body and $H \subseteq \setR^n$ a subspace. Then $\gamma_H(K \cap H) \ge \gamma_n(K)$.
\end{lemma}

%We also make use of a result due to Ball and Pajor~\cite{ConvexBodiesWithFewFaces-PajorBall-PAMS-1990} which gives a
%convinient volume lower bound for convex bodies that have few faces.
%\begin{theorem}[Ball-Pajor]
%Let $1 \leq p < \infty$ and let $\bm{A} \in \setR^{m \times n}$ with $m \geq n$. Then
%  \[
%  \textrm{vol}_n\Big(\Big\{ \bm{x} \in \setR^n : |\left<\bm{A}_i,\bm{x}\right>| \leq \sqrt{p} \cdot \Big(\frac{1}{n}\sum_{i'=1}^m \|\bm{A}_{i'}\|_2^p\Big)^{1/p} \;\; \forall i \in [m]\Big\}\Big) \geq 1 
%\]
%\end{theorem}
\section{Main technical result}

In this section we show our measure lower bound for balancing vectors from $\ell_p$ to $\ell_q$:

\begin{theorem} \label{thm:MainMeasureBound}
Let $n \le m$ and $1 \le p \le q \le \infty$. Then for any $\bm{a}_1, \dots, \bm{a}_n \in B^m_p$,
\[
\displaystyle \gamma_n\Big(\Big\{
\bm{x} \in \setR^n : \Big\|\sum_{i=1}^n x_i \bm{a}_i\Big\|_q \le \sqrt{\min\Big(p,\log\Big(\frac{2m}{n}\Big)\Big)} \cdot n^{\max(0,1/2 - 1/p) + 1/q}
\Big\}\Big) \ge 2^{-O(n)}.
\]
\end{theorem}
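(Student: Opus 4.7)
My plan is to realize the set in question as an intersection of $m$ strips and apply \v{S}id\'ak's inequality (Lemma~\ref{lem:SidakLemma}) to lower bound the Gaussian measure by a product of one-dimensional measures, finishing with Lemma~\ref{lem:ScalingLemma} to absorb absolute constants.

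Write $\bm{A}\in\setR^{m\times n}$ for the matrix with columns $\bm{a}_1,\ldots,\bm{a}_n$ and rows $\bm{A}_1,\ldots,\bm{A}_m$. For each $j\in[m]$ I introduce the strip $S_j = \{\bm{x}\in\setR^n : |\langle\bm{A}_j,\bm{x}\rangle|\le t_j\}$ with widths $t_j$ chosen so that (i) $\sum_j t_j^q \le T^q$, forcing $\|\bm{Ax}\|_q \le T$ on $\bigcap_j S_j$, and (ii) $\prod_j \gamma_n(S_j) \ge 2^{-O(n)}$, which \v{S}id\'ak turns into the stated measure bound.

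The key structural input on the rows, derived exactly as in the proof of Lemma~\ref{lem:ExpectedLpNorm} via Minkowski's inequality applied to $\sum_i(\bm{a}_i)^2$ (and an analogous computation in the $p \le 2$ range), is
\[
\Big(\sum_{j=1}^m \|\bm{A}_j\|_2^p\Big)^{1/p} \le n^{\max(1/2,\,1/p)},
\]
together with the trivial cap $\|\bm{A}_j\|_2 \le n^{1/2}$. Setting $r := n^{\max(0,\,1/2-1/p)}$, Markov's inequality ensures that at most $n$ rows satisfy $\|\bm{A}_j\|_2 > r$; call these the \emph{big} rows and the rest \emph{small}. I would then pick $t_j = \sqrt{2\log(2m/n)}\cdot\|\bm{A}_j\|_2$ for each small row, so that by Lemma~\ref{lem:OneDimEstimate} each $\gamma_n(S_j) \ge 1 - n/m$ and the total contribution to $\log(1/\prod_j\gamma_n(S_j))$ from these $\le m$ rows is $O(n)$. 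For the at most $n$ big rows I would take $t_j$ proportional to $r$, checking via a short AM-GM computation on $\sum_{\text{big}}\log(\|\bm{A}_j\|_2/r)$ using the $\ell_p$-constraint that the collective log contribution is still $O(n)$.

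On the budget side, Lemma~\ref{lem:InterpolationInequality} applied to the vector $(\|\bm{A}_j\|_2)_j$ controls both pieces of $\sum_j t_j^q$: on the small rows, $\sum_{j\text{ small}}\|\bm{A}_j\|_2^q \le r^{q-p}\sum_j\|\bm{A}_j\|_2^p \le n^{q\max(0,1/2-1/p)+1}$, which multiplied by $(2\log(2m/n))^{q/2}$ matches $T^q$; on the big rows, $r^q\cdot n$ is of the same order. For the $\sqrt{p}$ factor in the regime $p<\log(2m/n)$ I would supplement this with the Khintchine-based moment bound of Lemma~\ref{lem:ExpectedLpNorm}, which replaces $\sqrt{\log(2m/n)}$ by $\sqrt{p}$ whenever the latter is smaller; passing back to $\ell_q$ via $\|v\|_q\le\|v\|_p$ (since $q\ge p$) combined with Lemma~\ref{lem:InterpolationInequality}, and invoking Markov plus Lemma~\ref{lem:ScalingLemma}, delivers the measure bound at the sharper scale.

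The main technical obstacle I anticipate is obtaining the sharp exponent $n^{\max(0,\,1/2-1/p)+1/q}$ rather than the weaker $n^{1/2}$ that a uniform choice $t_j = c\,\|\bm{A}_j\|_2$ produces. Bridging this gap requires using the $\ell_p$-constraint on the rows at its full strength, both to cap the number of big rows at $n$ and to convert, via Lemma~\ref{lem:InterpolationInequality}, the Minkowski bound on $(\|\bm{A}_j\|_2)_j$ into the correct $\ell_q$-scale on the small-row contribution, and the accompanying bookkeeping to ensure that the \v{S}id\'ak product and the budget $\sum_j t_j^q$ simultaneously land at the claimed $T$.
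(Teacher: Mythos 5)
Your strip decomposition with the small/big row dichotomy is sound and does prove the theorem with $\sqrt{\log(2m/n)}$ in place of $\sqrt{\min(p,\log(2m/n))}$: the row bound $\sum_j\|\bm{A}_j\|_2^p\le n\cdot r^p$ with $r=n^{\max(0,1/2-1/p)}$, the count of at most $n$ big rows, the per-strip measure $1-n/(2m)$ for small rows, the AM--GM control of $\sum_{\mathrm{big}}\log(\|\bm{A}_j\|_2/r)$, and the $\ell_q$ budget via Lemma~\ref{lem:InterpolationInequality} all check out. For that branch your route is genuinely different from the paper's, which instead intersects the preimage of $\sqrt{p}\,n^{\max(1/2,1/p)}B_p^m$ with $m$ strips of \emph{uniform} width and then bootstraps the logarithmic factor by replacing $p$ with $p_0=2\max(1,\log(m/n))$.

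The genuine gap is the $\sqrt{p}$ branch of the min, which matters exactly when $m\ge n e^{\Omega(p)}$: for $p=2$, $q=\infty$, $m=n^{10}$ the theorem promises an $O(1)$ bound (the Koml\'os-type partial coloring), whereas your construction gives $\Theta(\sqrt{\log(m/n)})$ because the small-row widths are hard-wired to $\sqrt{2\log(2m/n)}\,\|\bm{A}_j\|_2$. The sentence you offer for this regime does not close it: Lemma~\ref{lem:ExpectedLpNorm} together with $\|v\|_q\le\|v\|_p$ yields the exponent $\max(0,1/2-1/p)+1/p$, not $+1/q$, and repairing this through Lemma~\ref{lem:InterpolationInequality} requires, in addition to the $\ell_p$ bound, an $\ell_\infty$ bound of $\sqrt{p}\,n^{\max(0,1/2-1/p)}$ valid on a set of Gaussian measure $2^{-O(n)}$ --- that is, you must show the \v{S}id\'ak product over all $m$ strips of uniform width $\sqrt{p}\,n^{\max(0,1/2-1/p)}$ is still $2^{-O(n)}$. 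This is precisely the paper's Lemma~\ref{lem:CaseqInfty}, whose proof hinges on the estimate $1-\exp(-\tfrac{1}{2y})\ge\exp(-C'p^{p/2}y^{p/2})$ (Claim~I there) to convert the constraint $\sum_j y_j^{p/2}\le np^{-p/2}$ into a product lower bound. Nothing in your proposal supplies this estimate, and your dichotomy does not adapt: once the widths shrink to the $\sqrt p$ scale with $m\gg n$, the per-strip bound $1-n/(2m)$ is unavailable and a genuinely different accounting of the product is needed.
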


In order to show Theorem~\ref{thm:MainMeasureBound}, roughly speaking it will suffice to show the corresponding bounds for the two special cases of $q \in \{p, \infty\}$, which can be bootstrapped into a general bound.
First we address the simpler case $p = q$ which at heart is based on Khintchine's inequality:

\begin{lemma} \label{lem:Caseqp}
Let $n \le m$ and $p \ge 1$. Then for any $\bm{a}_1, \dots, \bm{a}_n \in B^m_p$,
\[
\displaystyle \gamma_n\Big(\Big\{\bm{x} \in \setR^n: \Big\|\sum_{i=1}^n x_i \bm{a}_i\Big\|_p \le \sqrt{p} \cdot n^{\max(1/2,1/p)}\Big\}\Big) \ge 2^{-O(n)}.
\]
\end{lemma}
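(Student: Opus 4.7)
The plan is to deduce the measure bound directly from the expectation estimate in Lemma~\ref{lem:ExpectedLpNorm}, combined with Markov's inequality and the scaling lemma.

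First I would let $F(\bm{x}) := \|\sum_{i=1}^n x_i \bm{a}_i\|_p$. By Lemma~\ref{lem:ExpectedLpNorm}, for $\bm{x} \sim N(\bm{0}, \bm{I}_n)$ we have
\[
\E[F(\bm{x})] \le C_0 \sqrt{p} \cdot n^{\max(1/2, 1/p)}
\]
for some absolute constant $C_0 > 0$. Since $F$ is a nonnegative random variable, Markov's inequality gives
\[
\Pr_{\bm{x} \sim N(\bm{0},\bm{I}_n)}\Big[F(\bm{x}) \le 2C_0 \sqrt{p} \cdot n^{\max(1/2,1/p)}\Big] \ge \tfrac{1}{2}.
\]
In other words, writing $K := \{\bm{x} \in \setR^n : F(\bm{x}) \le \sqrt{p} \cdot n^{\max(1/2,1/p)}\}$, we obtain $\gamma_n(2C_0 \cdot K) \ge 1/2$, since $K$ is symmetric and $F$ is positively homogeneous so $2C_0 \cdot K = \{\bm{x} : F(\bm{x}) \le 2C_0\sqrt{p}\cdot n^{\max(1/2,1/p)}\}$.

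Finally, to remove the extra factor of $2C_0$, I would invoke the "in particular" clause of the Scaling Lemma (Lemma~\ref{lem:ScalingLemma}): since $\gamma_n(2C_0 \cdot K) \ge 1/2 \ge 2^{-O(n)}$ and $2C_0 > 1$, we conclude $\gamma_n(K) \ge 2^{-O(n)}$, as required.

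There is no real obstacle here: all the technical content has already been absorbed into Lemma~\ref{lem:ExpectedLpNorm} (which uses Khintchine) and into Lemma~\ref{lem:ScalingLemma}. The only thing to be slightly careful about is matching the homogeneity correctly, namely that scaling $\bm{x}$ by a constant $\lambda$ scales $F(\bm{x})$ by $\lambda$, so that the constant-factor slack in Markov's inequality precisely corresponds to a constant-factor dilation of $K$ that the scaling lemma is designed to handle.
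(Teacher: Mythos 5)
Your proposal is correct and follows exactly the paper's own argument: apply Lemma~\ref{lem:ExpectedLpNorm}, use Markov's inequality to get measure $\ge 1/2$ for the constant-factor dilation, and then invoke Lemma~\ref{lem:ScalingLemma} (via the homogeneity of the norm) to absorb the constant. Nothing further is needed.
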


\begin{proof}
By Lemma~\ref{lem:ExpectedLpNorm} we know that, for some constant $C > 0$,
\[
\E_{\bm{x} \sim N(\bm{0},\bm{I}_n)}\Big[\Big\|\sum_{i=1}^n x_i \bm{a}_i\Big\|_p\Big] \le C \sqrt{p} \cdot n^{\max(1/2,1/p)}.
\]
By Markov's inequality it follows that 
\[
\displaystyle \gamma_n\Big(\Big\{\bm{x} \in \setR^n: \Big\|\sum_{i=1}^n x_i \bm{a}_i\Big\|_p \le 2C \sqrt{p} \cdot n^{\max(1/2,1/p)}\Big\}\Big) \ge 1/2,
\]
 so that the result follows by Lemma~\ref{lem:ScalingLemma}.
\end{proof}

Next, we deal with the crucial case $q = \infty$:

\begin{lemma} \label{lem:CaseqInfty}
  Let $n \le m$ and $p \ge 1$. Then for any $\bm{A} \in \setR^{m \times n}$ with columns $\bm{a}_1, \dots, \bm{a}_n \in B^m_p$ and rows $\bm{A}_1, \dots, \bm{A}_m \in \setR^n$,
  the body $K := \{ \bm{x} \in \setR^n : \|\sum_{i=1}^n x_i\bm{a}_i\|_{\infty} \leq \sqrt{p} \cdot n^{\max(0,1/2 - 1/p)}\}$
  satisfies
\[
\gamma_n(K) \ge \prod_{j \in [m]} \gamma_n (\{ \bm{x} \in \setR^n : |\inn{\bm{x}}{\bm{A}_j}| \le \sqrt{p} n^{\max(0,1/2 - 1/p)} \}) \ge 2^{-O(n)}.
\]
\end{lemma}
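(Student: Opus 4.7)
The plan is to write $K = \bigcap_{j=1}^m S_j$ as the intersection of symmetric strips $S_j := \{\bm{x} \in \setR^n : |\inn{\bm{x}}{\bm{A}_j}| \le t\}$, where $t := \sqrt{p}\cdot n^{\max(0,1/2-1/p)}$; this follows from $\|\sum_i x_i \bm{a}_i\|_\infty = \max_j |\inn{\bm{x}}{\bm{A}_j}|$. The first inequality $\gamma_n(K) \ge \prod_j \gamma_n(S_j)$ is then immediate by iterating the \v{S}id\'ak--Khatri inequality (Lemma~\ref{lem:SidakLemma}).

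For the second inequality, I would show $\sum_{j=1}^m \log(1/\gamma_n(S_j)) = O(n)$. Lemma~\ref{lem:OneDimEstimate} yields two complementary estimates on each strip: the sub-Gaussian bound $\gamma_n(S_j) \ge 1 - \exp(-t^2/(2\|\bm{A}_j\|_2^2))$, and the small-interval bound $\gamma_n(S_j) \ge c\min(1, t/\|\bm{A}_j\|_2)$ for an absolute constant $c > 0$ (coming from the fact that the one-dimensional Gaussian density is bounded below by a positive constant on $[-1,1]$). I would then split the rows into a ``small'' regime $\|\bm{A}_j\|_2 \le t/\sqrt{2\log 2}$ where $-\log(1-x) \le 2x$ applied to $x = \exp(-t^2/(2\|\bm{A}_j\|_2^2)) \le 1/2$ gives $-\log \gamma_n(S_j) \le 2\exp(-t^2/(2\|\bm{A}_j\|_2^2))$, and a ``large'' regime where the second estimate gives $-\log \gamma_n(S_j) \le \log(\|\bm{A}_j\|_2/t) + O(1)$.

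The engine that makes both sums $O(n)$ is a single row-norm inequality. For $p \ge 2$, applying Minkowski to the entrywise squared columns $(\bm{a}_i)^2 \in \setR^m$ in the $\ell_{p/2}$-norm (exactly as in the proof of Lemma~\ref{lem:ExpectedLpNorm}) yields $\sum_j \|\bm{A}_j\|_2^p \le n^{p/2}$; for $p \le 2$, monotonicity $\|\bm{a}_i\|_2 \le \|\bm{a}_i\|_p \le 1$ yields $\sum_j \|\bm{A}_j\|_2^2 \le n$. A Markov-style count then shows the large-row set has cardinality $O(n)$ in either regime, and the logarithmic contribution there is $O(n)$ by applying Jensen's inequality to maximize $\sum \log \|\bm{A}_j\|_2$ subject to the row-norm constraint (using $\max_k (k/p)\log(n/k) \le n/(pe)$). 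For small rows, pointwise maximization of $u^{p/2}\exp(-(t^2/2)u)$ in $u := 1/\|\bm{A}_j\|_2^2$ yields $\exp(-t^2/(2\|\bm{A}_j\|_2^2)) \le e^{-p/2} n^{1-p/2}\|\bm{A}_j\|_2^p$ for $p \ge 2$ (and $\le (2/(pe))\|\bm{A}_j\|_2^2$ for $p \le 2$), so the small-row sum again reduces to the same row-norm bound.

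The main obstacle is that the precise choice $t = \sqrt{p}\cdot n^{\max(0,1/2-1/p)}$ must exactly balance the small-row exponential tails against the large-row logarithmic divergence: any smaller $t$ would blow up the small-row exponential sum, while any larger $t$ would blow up the large-row logarithmic sum. The case split at $p = 2$ is also essential, as the relevant moment of the row norms shifts from $\|\bm{A}_j\|_2^p$ (for $p \ge 2$) to $\|\bm{A}_j\|_2^2$ (for $p \le 2$); the two bounds coincide at $p = 2$.
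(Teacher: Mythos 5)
Your proposal is correct and takes essentially the same route as the paper: factor $K$ into the $m$ strips via \v{S}id\'ak--Khatri, convert the column constraint $\bm{a}_i \in B_p^m$ into the row-norm bound $\sum_{j}\|\bm{A}_j\|_2^p \le n \cdot n^{p\max(0,1/2-1/p)}$, and show the product of one-dimensional strip measures at the threshold $t=\sqrt{p}\,n^{\max(0,1/2-1/p)}$ is $2^{-O(n)}$. The only difference is bookkeeping in the final step: the paper proves a single pointwise inequality $1-e^{-1/(2y)} \ge e^{-C'p^{p/2}y^{p/2}}$ (its Claim I), so each strip's log-loss is bounded by $O(p^{p/2}\|\bm{A}_j\|_2^p/t^p)$ uniformly and summed directly against the row-norm bound, whereas you split into small/large rows, using a small-ball estimate plus Markov counting and Jensen for large rows and a pointwise maximization (the same estimate hiding in the paper's Stirling step) for small rows --- a valid, slightly heavier version of the same argument.
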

\begin{proof}
%Geometrically speaking the idea is to lower bound $\gamma_n(K)$ by using the Lemma Sidak-Kathri
  The main idea in the proof is that we can convert the bound on the $\ell_p$-norm of the columns $\bm{a}_i$ into information about the $\ell_2$-norm of the rows $\bm{A}_j$. Namely, 
\begin{equation} \label{eq:BoundOnL2NormOfRowVecs}
\Big(\frac{1}{n} \sum_{j \in [m]} \|\bm{A}_j\|_2^p\Big)^{1/p} \stackrel{\textrm{Lem~\ref{lem:ElementaryInequalityOnLpNorms}}}{\le} n^{\max(0,1/2-1/p) } \cdot \Big(\frac{1}{n} \underbrace{\sum_{j \in [m]} \|\bm{A}_j\|_p^p}_{\le n}\Big)^{1/p} \le n^{\max(0,1/2-1/p) }.
\end{equation}
We rescale the row vectors to $\bm{V}_j := (\sqrt{p} n^{\max(0,1/2-1/p)})^{-1} \bm{A}_j$ and abbreviate $y_j := \|\bm{V}_j\|_2^2$, so that Eq.~\eqref{eq:BoundOnL2NormOfRowVecs} simplifies to  $\sum_{j=1}^m y_j^{p/2} \le n \cdot p^{-p/2}$. We may then apply \v{S}idak's Lemma~\ref{lem:SidakLemma} and bound the one-dimensional measure:
\begin{eqnarray*}
  \gamma_n (K)  &=& \gamma_n \big(\big\{ \bm{x} \in \setR^n : |\left<\bm{x},\bm{V}_j\right>| \leq 1 \; \; \forall j \in [m]\big\}\big) \\ & \stackrel{\textrm{Lem~\ref{lem:SidakLemma}}}{\ge}& \prod_{j \in [m]} \gamma_n \big(\big\{\bm{x} \in \setR^n : |\inn{\bm{x}}{\bm{V}_j}| \le 1 \big\}\big) \\
                &\stackrel{\textrm{Lem~\ref{lem:OneDimEstimate}}}{\ge}& \prod_{j \in [m]} \big(1-\exp(-y_j^{-1}/2)\big) \\
  &\stackrel{\textrm{Claim I}}{\geq}& \prod_{j \in [m]} \exp\Big(-C' p^{p/2}y_j^{p/2}\Big) = \exp\Big(-C'p^{p/2}\sum_{j \in [m]} y_j^{p/2}\Big) \geq \exp(-C'n) 
\end{eqnarray*}
Here we have used an estimate that remains to be proven: \\
{\bf Claim I.} \emph{For any $p \geq 1$ and $y>0$ one has $1-\exp(-\frac{1}{2y}) \geq \exp(-C' p^{p/2}y^{p/2})$ where $C'>0$ is a universal constant.} \\
{\bf Proof of Claim I.} 
It will suffice to show for any $y > 0$:
\[
-\log(1-\exp(-y^{-1}/2)) \le O(p^{p/2} y^{p/2}).
\]
% Indeed, if this holds then adding all $m$ inequalities and exponentiating both sides gives the bound.
To see this, let $z = \sqrt{2y}$ and note that it suffices to show 
\[
-\log(1-\exp(-z^{-2})) \cdot z^{-p} \le O((p/2)^{p/2}).
\]

First, by convexity of $x \mapsto -\log(1-x)$, we have $-\log(1-x) \le O(x)$ for $x \in [0, 1/e]$. It follows that for $z \le 1$, we have
\[
-\log(1-\exp(-z^{-2})) \le O(\exp(-z^{-2})) \le O(\lceil p/2 \rceil !/z^{-2\lceil p/2 \rceil}),
\]

and therefore $-\log(1-\exp(-z^{-2})) \cdot z^{-p} \le O(\lceil p/2 \rceil !) \le O((p/2)^{p/2})$.

Next, we claim that $-\log(1-\exp(-z^{-2})) \le 4z$ for all $z > 0$. Indeed, both sides tend to $0$ as $z \to 0$ and the derivative of the left side is
\[
\frac{2}{z^3 \Big(\exp\Big(\frac{1}{2z^2}\Big) - 1\Big)} < \frac{2}{z^3 \Big(\frac{1}{2z^2} + \frac{1}{8z^4}\Big)} = \frac{16z}{4z^2+1} \le 4,
\] where we used $e^x > 1 + x + x^2/2$ for $x = \frac{1}{2z^2}$ and $(2z-1)^2 \ge 0$. It follows that when $z \ge 1$, $-\log(1-\exp(-z^{-2})) \cdot z^{-p} \le 4z^{1-p} \le 4 \le O((p/2)^{p/2})$.
\end{proof}

\begin{remark}
  This argument is largely motivated by the result of Ball and Pajor~\cite{ConvexBodiesWithFewFaces-PajorBall-PAMS-1990} which bounds volume instead of Gaussian measure. More specifically, \cite{ConvexBodiesWithFewFaces-PajorBall-PAMS-1990} prove that for $1 \leq p \leq \infty$ and any matrix $\bm{A} \in \setR^{m \times n}$, the set
  \[
    K = \Big\{ \bm{x} \in \setR^n : |\left<\bm{A}_j,\bm{x}\right>| \leq \sqrt{p} \cdot \Big(\frac{1}{n} \sum_{j=1}^m \|\bm{A}_j\|_2^p\Big)^{1/p} \; \forall j \in [m]  \Big\}
  \]
  satisfies $\textrm{vol}_n(K) \geq 1$. In contrast, our Lemma~\ref{lem:CaseqInfty} provides a simpler proof of a stronger result (up to a constant scaling), since the volume of a convex body is always at least its Gaussian measure. On the other hand, it is also possible to recover Lemma~\ref{lem:CaseqInfty} directly from this result together with Theorem~\ref{thm:HereditaryVolumeImpliesGaussianMeasure}.
\end{remark}

We are now ready to show Theorem~\ref{thm:MainMeasureBound}:

\begin{proof}[Proof of Theorem~\ref{thm:MainMeasureBound}]
Let $1 \leq p \leq q \leq \infty$ and 
let $\bm{A} \in \setR^{m \times n}$ denote the matrix with columns $\bm{a}_1, \dots, \bm{a}_n \in B_p^m$. By 
Lemma~\ref{lem:InterpolationInequality} we know that for any $\bm{z} \in \setR^m$ with $\|\bm{z}\|_p \leq n^{1/p}$ and $\|\bm{z}\|_{\infty} \leq 1$
one has $\|\bm{z}\|_q \leq ( \|\bm{z}\|_p^p \cdot \|\bm{z}\|_{\infty}^{q-p})^{1/q} \leq n^{1/q}$. Phrased in geometric terms this means
$n^{1/q} B^m_q \supseteq n^{1/p} B^m_p \cap B^m_\infty$. We would like to point out that this is a crucial point to obtain a dependence solely on $n$ rather than the larger parameter $m$.
Next, note the fact that  $\bm{A}^{-1}(S \cap T) = \bm{A}^{-1} (S) \cap \bm{A}^{-1} (T)$ for any sets $S$ and $T$
which we use together with the inequality of \v{S}idak and Kathri (Lemma~\ref{lem:SidakLemma}) to obtain the estimate
\begin{eqnarray*}
 & &  \gamma_n\Big(\bm{A}^{-1} \big(\sqrt{p} \cdot n^{\max(0,1/2 - 1/p) + 1/q} B^m_q\big)\Big) \\
  & \ge& \gamma_n\Big(\bm{A}^{-1} \big(\sqrt{p} \cdot n^{\max(0,1/2 - 1/p)} (n^{1/p} B^m_p \cap B^m_\infty)\big)\Big) \\
&\ge& \gamma_n\Big(\bm{A}^{-1} \big(\sqrt{p}\cdot n^{\max(1/2,1/p)} B^m_p\big)\Big) \cdot \prod_{j \in [m]} \gamma_n \big(\big\{ \bm{x} \in \setR^n : |\inn{\bm{x}}{\bm{A}_j}| \le \sqrt{p} n^{\max(0,1/2 - 1/p)} \big\}\big) \\ & \ge& 2^{-O(n)} \cdot 2^{-O(n)} = 2^{-O(n)},
\end{eqnarray*}
where we have used the measure lower bounds from Lemmas~\ref{lem:Caseqp} and~\ref{lem:CaseqInfty}.
This shows the claimed bound whenever $p \leq O(\log(\frac{2m}{n}))$, where the hidden constant can be removed by scaling the corresponding convex body, see Lemma~\ref{lem:ScalingLemma}. 

It remains to prove that we can bootstrap the existing bound for the regime of large $p$. So let us assume that  $p \ge 2\cdot \max\{1, \log(m/n)\}$.
Let $p_0 \in [2,p]$ be a parameter to be determined and remark that Lemma~\ref{lem:ElementaryInequalityOnLpNorms} gives $\|\bm{a}_i\|_{p_0} \le m^{1/p_0 - 1/p} \cdot \|\bm{a}_i\|_p \le m^{1/p_0 - 1/p}$. Applying the above measure lower bound for $p_0$ implies

\[
\displaystyle \gamma_n\Big(\Big\{\bm{x} \in \setR^n : \Big\|\sum_{i=1}^n x_i \bm{a}_i\Big\|_q \le \sqrt{p_0} \cdot n^{1/2 - 1/p_0 + 1/q} \cdot  m^{1/p_0 - 1/p}\Big\}\Big) \ge 2^{-O(n)}.
\]

We can rewrite the above upper bound on $\ell_q$-norm as

\[
\sqrt{p_0} \cdot n^{1/2 - 1/p_0 + 1/q} \cdot  m^{1/p_0 - 1/p} = n^{1/2 - 1/p + 1/q} \cdot \underbrace{\Big(\frac{m}{n}\Big)^{-1/p}}_{\le 1} \cdot \sqrt{p_0} \cdot \Big(\frac{m}{n}\Big)^{1/p_0}.
\]

Taking $p_0 := 2\cdot \max\{1, \log(m/n)\}$ gives the desired result as then $(m/n)^{1/p_0} \le \sqrt{e}$ and Lemma~\ref{lem:ScalingLemma} can again deal with such constant scaling.
\end{proof}

Now our main result on existence of partial colorings easily follows:  
\begin{proof} [Proof of Theorem~\ref{thm:PartialColoringForLpLq}] Apply Theorem~\ref{thm:PartialColoringForExpSmallSetsWithShiftAndLargeFractionColored}
to the set 
\[
K := \Big\{ \bm{x} \in \setR^n : \Big\|\sum_{i=1}^n x_i\bm{a}_i\Big\|_q \leq \sqrt{\min\Big(p, \log\Big(\frac{2m}{n}\Big)\Big)} \cdot n^{1/2-1/p+1/q}\Big\},
\]
 which by Theorem~\ref{thm:MainMeasureBound} indeed has a Gaussian measure of $\gamma_n(K) \geq 2^{-O(n)}$.
\end{proof}
Next, we show how to obtain a full coloring by iteratively finding partial colorings.
\begin{proof} [Proof of Theorem~\ref{thm:FullColoringForLpLq}] % $
Let again $2 \leq p \leq q \leq \infty$ and let $\bm{a}_1,\ldots,\bm{a}_n \in B_p^m$. 
 We begin with $\bm{x}^{(0)} := \bm{0}$ and given $\bm{x}^{(0)},\ldots,\bm{x}^{(t)}$ we set $S^{(t)} := \{ i \in [n] : -1 < x_i^{(t)} < 1\}$ as the \emph{active
variables}. Then combining Theorem~\ref{thm:PartialColoringForExpSmallSetsWithShiftAndLargeFractionColored} and Theorem~\ref{thm:MainMeasureBound} we can find a partial coloring $\bm{x}^{(t+1)} \in [-1,1]^n$ in polynomial time so that $|S^{(t+1)}| \leq |S^{(t)}| / 2$ and $\|\sum_{i=1}^n (x_i^{(t+1)} - x_i^{(t)})\bm{a}_i\|_q \leq C_1\sqrt{\min(p, \log (\frac{2m}{|S^{(t)}|}))} \cdot |S^{(t)}|^{1/2-1/p + 1/q}$. Let $\bm{x}^{(T)}$ be the first iterate with $\bm{x}^{(T)} \in \{ -1,1\}^n$. Clearly $|S^{(t)}| \leq n2^{-t}$ and $T \leq \log_2(n)$. 
Using the triangle inequality we get
\begin{eqnarray*}
 \Big\|\sum_{i=1}^n x^{(T)}_i\bm{a}_i\Big\|_q &\leq& \sum_{t=0}^{T-1} \Big\|\sum_{i=1}^n (x^{(t+1)}_i-x^{(t)}_i)\bm{a}_i\Big\|_q \\
                                 &\leq& C_1\sum_{t = 0}^{T-1} \sqrt{\min\Big(p, \log\Big(\frac{2m}{2^{-t} \cdot n} \Big)\Big) } \cdot (2^{-t} \cdot n)^{1/2-1/p+1/q}  \\
  &\le& \frac{C_1C_2 \sqrt{\min\Big(p,\log\Big(\frac{2m}{n}\Big)\Big)}}{1/2 - 1/p + 1/q}   \cdot n^{1/2 - 1/p + 1/q}. \qedhere
\end{eqnarray*}
\end{proof}
The intuition behind the extra factor for obtaining a full coloring is as follows: abbreviate the exponent as
$\beta := 1/2-1/p+1/q$. Then it takes $\frac{1}{\beta}$ iterations until the term $|S^{(t)}|^{\beta}$ decreases by a factor of 1/2 which dominates the miniscule growth of the logarithmic term. Then indeed the overall
discrepancy is dominated by the discrepancy from the first $\frac{1}{\beta}$ iterations.

We can now demonstrate how a nontrivial choice of $\ell_p$-norms can be beneficial in classical
discrepancy settings: 
\begin{proof}[Proof of Corollary~\ref{cor:BeckFiala}]
  Consider column vectors $\bm{a}_1,\ldots,\bm{a}_n \in \{ 0,1\}^m$ with at most $t$ nonzero entries per $\bm{a}_i$.
First let us study the case $t \ge n/10$. Since for each column $\|\bm{a}_i\|_4 \le t^{1/4}$, Theorem~\ref{thm:FullColoringForLpLq} provides a coloring $\bm{x} \in \{-1,1\}^n$ with $\|\sum_{i=1}^n x_i \bm{a}_i\|_\infty \le O(n^{1/4} \cdot t^{1/4}) = O(\sqrt{t})$. \footnote{In fact for $t \ge n$ a more careful choice of $p = \log(2t/n)$ gives a better $\ell_\infty$ discrepancy bound of $O(\sqrt{n \log(2t/n)})$, even though the Beck-Fiala conjecture asks only for $O(\sqrt{t})$.}

Now if $t < n/10$, we take $p \in [2,16)$ with $1/2 - 1/p = 1/\log(n/t)$. Then $\|\bm{a}_i\|_p \le t^{1/p}$ and Theorem~\ref{thm:FullColoringForLpLq} gives $\bm{x} \in \{-1,1\}^n$ with 
\[
\displaystyle \Big\|\sum_{i=1}^n x_i \bm{a}_i\Big\|_\infty \le \frac{C \cdot n^{1/2-1/p} \cdot t^{1/p}}{1/2 - 1/p} = C\sqrt{t} \log(n/t) \cdot \underbrace{(n/t)^{1/\log(n/t)}}_{= e}. \qedhere
\]
\end{proof}
We conclude this section by showing that the term $n^{\max(0,1/2 - 1/p) + 1/q}$ in our bounds is necessary:
\begin{proof} [Proof of Theorem~\ref{thm:LowerBound}]
Consider the case $p \ge 2$. Consider an $n \times n$ \emph{Hadamard matrix}, which is a matrix  $\bm{H} \in \{-1,1\}^{n \times n}$ so that all rows and columns are orthogonal. Such matrices are known to exist at least whenever $n$ is a power of 2. The columns satisfy $\|\bm{h}_i\|_p = n^{1/p}$ and for any $\bm{x} \in [-1,1]^n$ with  $|\{i : x_i^2 = 1\}| \ge n/2$  we know that $\|\bm{x}\|_2 \ge \Omega(\sqrt{n})$ and $\|\bm{Hx}\|_2 \ge \Omega(n)$, so that by Lemma~\ref{lem:ElementaryInequalityOnLpNorms} we have
\[
\|\bm{Hx}\|_q \ge \|\bm{Hx}\|_2 \cdot n^{1/q - 1/2} = \Omega(n^{1/2+1/q}).
\]
For $p \in [1,2]$, take an identity matrix $\bm{I}_n$. For every $\bm{x} \in [-1,1]^n$ with $|\{i : x_i^2 = 1\}| \ge n/2$ we have $\|\bm{I}_n \bm{x}\|_q = \|\bm{x}\|_q \ge \Omega(n^{1/q})$, and the columns of $\bm{I}_n$ are certainly in $B^m_p$.
\end{proof}

\section{Partial coloring via measure lower bound}

%TODO: copy source file from "Chapter 19: Partial Colorings for exponentially small sets"

In this section, we want to show the existence of partial fractional colorings for bodies $K$ with
$\gamma_n(K) \geq e^{-\alpha n}$ as promised in Theorem~\ref{thm:PartialColoringForExpSmallSetsWithShiftAndLargeFractionColored}. The main innovation of this work compared to e.g. \cite{ConstructiveDiscrepancy-Rothvoss-FOCS2014}
is to handle an arbitrarily small constant $\alpha>0$. We will show how to find a partial coloring that
colors a small constant fraction of coordinates; then iterating the argument will color the promised $\beta-\gamma$ fraction.
Also, instead of working with a shift $\bm{y}$ and a scaling of $K$, it will be notationally easier to work with a shifted and scaled box.
Hence, for vectors $\bm{L},\bm{R} \in \setR_{\geq 0}^n$, we write $[-\bm{L},\bm{R}] := [-L_1,R_1] \times \ldots \times [-L_n,R_n]$ as the box defined by constraints $-L_i \leq x_i \leq R_i$ for $i=1,\ldots,n$.
We use $N(\bm{0},H)$ to denote the Gaussian distribution restricted to a subspace $H \subseteq \setR^n$. %for the standard Gaussian such a subspace $H$.
Then the main technical result for this section will be:
\begin{theorem} \label{thm:PartialColoringForExpSmallSets}
  For all constants $\alpha,\beta > 0$ there are  $\varepsilon := \varepsilon(\alpha,\beta) > 0$ and $\delta := \delta(\alpha,\beta)>0$ so that the following holds: 
  Let  $K \subseteq \setR^n$ be a symmetric convex body
  with $K \subseteq H$ for a subspace $H \subseteq \setR^n$ with $\dim(H) \geq \beta n$ and
   $\gamma_H(K) \geq e^{-\alpha n}$; also let $\bm{L},\bm{R} \in [0,\varepsilon]^n$. 
  Assuming a weak separation oracle for $K$, there is a randomized polynomial time algorithm which finds an
  $\bm{x} \in K \cap [-\bm{L},\bm{R}]$ so that $|\{ i \in [n] : x_i \in \{ -L_i,R_i\}\}| \geq \delta n$ with probability at least $1 - e^{-\Theta_{\varepsilon,\delta}(n)}$.
\end{theorem}
Note that the considered box satisfies $[-\bm{L},\bm{R}] \subseteq [-\varepsilon,\varepsilon]^n$.
We would like to point out that applying the standard nonconstructive proof by Gluskin~\cite{RootNDisc-Gluskin89} and Giannopoulos~\cite{Giannopoulos1997} to a find a partial coloring  $\bm{x} \in \{ -\varepsilon,0,\varepsilon\}^n$ with support $\Omega(n)$ will
require either a \emph{small enough} constant $\alpha>0$, or $\varepsilon$ needs to be exponentially small in $n$.
%In fact, the statement of Theorem~\ref{thm:PartialColoringForExpSmallSets} does not hold if $\bm{x} \in [-varepsilon,\varepsilon]^n$ is replaced by $\bm{x} \in \{ -\varepsilon,0,\varepsilon\}^n$. 
In fact, it is not hard to construct a thin strip $K$ with $\gamma_n(K) \geq e^{-\Omega(n)}$ so that $K$ does not intersect $\{ -1,0,1\}^n \setminus \{ \bm{0}\}$ (even after a subexponential scaling). We show the construction in Appendix~\ref{appendix:SetsWithoutPartialColorings}.
%In fact, it is not hard to con
%Moreover, for a choice of say $K := \{ \bm{x} \in \setR^n \mid d(\bm{x},H) \leq 2^{-10n}\}$ with $H \subseteq \setR^n$ being an $(n-1)$-dimensional hyperplane one might even expect that $K$ does not contain any point in $\{ -1,0,1\}^n \setminus \{ \bm{0} \}$. We want to prove that nevertheless a good \emph{fractional} partial coloring exists. 

%We will assume that $\alpha\geq 1$ and $n \geq n_0(\alpha,\varepsilon)$ is large enough.
For our proof we make use of the  mean width $w(Q) := \E_{\bm{\theta} \in S^{n-1}}[\sup_{\bm{x} \in Q} \left<\bm{\theta},\bm{x}\right>]$ of a body. We should point out that
the connection between partial coloring arguments and mean width is due to Eldan and Singh~\cite{DBLP:journals/rsa/EldanS18}. Several of the claims require that $n$ is chosen large enough.
%Moreover $N(0,\bm{I}_n)$ is the standard Gaussian with density function $\gamma_n(\bm{x}) := \frac{1}{(2\pi)^{n/2}}e^{-\|\bm{x}\|_2^2/2}$.
\begin{lemma}
  Let $Q \subseteq \setR^n$ be a symmetric convex body with $\gamma_n(Q) \geq e^{-\alpha n}$ for $\alpha > 0$. Then
  $w(Q) \geq \frac{1}{2} e^{-\alpha}\sqrt{n}$.
\end{lemma}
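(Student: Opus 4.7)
The plan is to combine the Gaussian Urysohn inequality (Theorem~\ref{thm:UrysohnInequality}) with a straightforward volumetric bound on the Gaussian measure of a small Euclidean ball. Since $\gamma_n$ is continuous in the radius and $\gamma_n(rB_2^n)$ ranges from $0$ to $1$, there is a unique $r > 0$ with $\gamma_n(rB_2^n) = \gamma_n(Q)$. Urysohn's inequality then gives $w(Q) \geq w(rB_2^n) = r$, so it suffices to prove $r \geq \tfrac{1}{2}e^{-\alpha}\sqrt{n}$.

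The next step is to upper bound $\gamma_n(rB_2^n)$ in terms of $r$. Since the standard Gaussian density is pointwise at most $(2\pi)^{-n/2}$, one has
\[
  \gamma_n(rB_2^n) \;\leq\; \frac{\vol_n(rB_2^n)}{(2\pi)^{n/2}} \;=\; \frac{r^n\pi^{n/2}}{(2\pi)^{n/2}\,\Gamma(n/2+1)}.
\]
Plugging in the Stirling lower bound $\Gamma(n/2+1)\geq \sqrt{\pi n}\,(n/(2e))^{n/2}$ collapses this to
\[
  \gamma_n(rB_2^n) \;\leq\; \frac{1}{\sqrt{\pi n}}\left(\frac{r\sqrt{e}}{\sqrt{n}}\right)^{\!n}.
\]

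Combining this with the hypothesis $\gamma_n(rB_2^n) = \gamma_n(Q) \geq e^{-\alpha n}$ and taking $n$th roots yields
\[
  r \;\geq\; e^{-\alpha}\sqrt{\tfrac{n}{e}}\cdot(\pi n)^{1/(2n)} \;=\; e^{-\alpha-1/2}\sqrt{n}\cdot(1-o_n(1)).
\]
Since $e^{-1/2}\approx 0.607 > \tfrac12$, for all sufficiently large $n$ (which is already permitted by the blanket convention in the paper) this lower bound exceeds $\tfrac12 e^{-\alpha}\sqrt{n}$, giving the claim via $w(Q)\geq r$.

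The only real hazard here is getting the numerical constant right: one needs the slack between $e^{-1/2}$ and $\tfrac12$ to absorb the sub-exponential factor $(\pi n)^{-1/(2n)}$ coming from Stirling. There is no deeper obstacle, since both the volume-of-ball and Stirling estimates are standard and tight enough.
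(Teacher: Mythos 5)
Your proof is correct and follows essentially the same route as the paper: reduce to lower bounding the matching radius $r$ via the Gaussian Urysohn inequality, then upper bound $\gamma_n(rB_2^n)$ by the ball's volume times the maximal Gaussian density $(2\pi)^{-n/2}$ (the paper just uses the cruder estimate $\mathrm{Vol}_n(\sqrt{n}B_2^n)\le 5^n$ in place of your Stirling computation). Note in fact that your factor $(\pi n)^{1/(2n)}$ is at least $1$, so your bound $r\ge e^{-\alpha-1/2}\sqrt{n}>\tfrac12 e^{-\alpha}\sqrt{n}$ holds for every $n\ge 1$ and the appeal to ``$n$ large enough'' is not even needed.
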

\begin{proof}
  Let $r>0$ be the radius so that $\gamma_n(rB_2^n) = \gamma_n(Q)$. 
  By \emph{Urysohn's Inequality} (Theorem~\ref{thm:UrysohnInequality}) one has $w(Q) \geq w(rB_2^n) = r$ so it suffices to give a lower bound on the radius $r$. 
  A simple but useful estimate is that  $2^n \leq \textrm{Vol}_n( \sqrt{n} B_2^n) \leq 5^n$ for any $n \geq 1$.
  Moreover, the Gaussian density is maximized at $\gamma_n(\bm{0}) = \frac{1}{(\sqrt{2\pi})^n}$. 
  Then for $\beta := 2e^{\alpha} \geq 2$ we have
  \[
    \gamma_n\Big( \frac{\sqrt{n}}{\beta} B_2^n\Big) \leq \textrm{Vol}_n\Big( \frac{\sqrt{n}}{\beta} B_2^n\Big) \cdot \gamma_n(\bm{0})
    \leq \Big(\frac{5}{\beta}\Big)^n \cdot \frac{1}{(\sqrt{2\pi})^{n}} \leq \Big(\frac{2}{\beta}\Big)^n \stackrel{\beta = 2e^{\alpha}}{\leq} e^{-\alpha n}
  \]
  and so $r \geq \frac{\sqrt{n}}{\beta} = \frac{\sqrt{n}}{2e^{\alpha}}$.
  \end{proof}

%Note that this claim $$
The key modification of our work in contrast to \cite{ConstructiveDiscrepancy-Rothvoss-FOCS2014} is a finer upper bound on the distance of a Gaussian to $K$:
  \begin{lemma} \label{lem:DistanceGaussianToExpSmallSet}
  Let $K \subseteq \setR^n$ be a symmetric convex set with $\gamma_n(K) \geq e^{-\alpha n}$ where $\alpha \geq 1$ and $n$ is large enough. Then
  \[
  \E_{\bm{x} \sim N(\bm{0},\bm{I}_n)}[d(\bm{x},K)] \leq \sqrt{n} \cdot \Big(1-\frac{1}{512\alpha e^{4\alpha}}\Big)
  \]
\end{lemma}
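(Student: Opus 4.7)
The plan is to combine the Urysohn-based mean-width lower bound from the previous lemma with an explicit ``project $\bm{x}$ onto a scaled maximizer of the support function $h_K$'' argument, thereby bypassing any concentration inequality for $h_K$. Since $K$ may be unbounded, we first truncate: set $R := \sqrt{5\alpha n}$ and $K' := K \cap R B_2^n$. A standard $\chi^2_n$ Chernoff bound gives $\Pr_{\bm{x} \sim N(\bm{0},\bm{I}_n)}[\|\bm{x}\|_2 > R] \leq \tfrac{1}{2}e^{-\alpha n}$ for $\alpha \geq 1$ and $n$ large, so $\gamma_n(K') \geq \tfrac{1}{2}e^{-\alpha n}$. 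Since $K' \subseteq K$, we have $d(\bm{x},K) \leq d(\bm{x},K')$, so it suffices to prove the inequality for $K'$.

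Applying the previous lemma to $K'$ (the extra $\tfrac{\log 2}{n}$ in the exponent is absorbed into constants) gives $w(K') \geq c_1 e^{-\alpha}\sqrt{n}$. By rotational invariance of the Gaussian and positive homogeneity of $h_{K'}$,
\[
w_G \;:=\; \E_{\bm{x}}[h_{K'}(\bm{x})] \;=\; \E[\|\bm{x}\|_2] \cdot w(K') \;\geq\; c_2 e^{-\alpha} n,
\]
and then Jensen's inequality yields the \emph{second-moment} lower bound $\E[h_{K'}(\bm{x})^2] \geq w_G^2 \geq c_2^2 e^{-2\alpha}n^2$, crucially without requiring any tight concentration result on $h_{K'}$.

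The per-point bound is now the following. On the event $B := \{\|\bm{x}\|_2 \leq R\}$, every $\bm{y} \in K'$ satisfies $\|\bm{y}\|_2 \leq R$, and $\lambda := h_{K'}(\bm{x})/R^2 \leq \|\bm{x}\|_2/R \leq 1$, so by $\bm{0} \in K'$ and convexity we have $\lambda \bm{y} \in K'$. Choosing $\bm{y} \in K'$ with $\langle \bm{x},\bm{y}\rangle$ arbitrarily close to $h_{K'}(\bm{x})$ gives
\[
d(\bm{x},K')^2 \;\leq\; \|\bm{x} - \lambda \bm{y}\|_2^2 \;=\; \|\bm{x}\|_2^2 - 2\lambda h_{K'}(\bm{x}) + \lambda^2 \|\bm{y}\|_2^2 \;\leq\; \|\bm{x}\|_2^2 - \frac{h_{K'}(\bm{x})^2}{R^2}.
\]
Integrating this pointwise bound---with the tail $\E[h_{K'}^2 \mathbf{1}_{B^c}] \leq R^2 \E[\|\bm{x}\|_2^2 \mathbf{1}_{B^c}] = R^2 \cdot e^{-\Omega(\alpha n)}$ being negligible---yields $\E[d(\bm{x},K')^2] \leq n - w_G^2/(2R^2) \leq n(1 - c_3/(\alpha e^{2\alpha}))$. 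Finally $\E[d]^2 \leq \E[d^2]$ and $\sqrt{1-t} \leq 1-t/2$ give $\E[d(\bm{x},K)] \leq \sqrt{n}(1 - c_4/(\alpha e^{2\alpha}))$, which for $\alpha \geq 1$ is itself stronger than $\sqrt{n}(1 - \tfrac{1}{512\alpha e^{4\alpha}})$ (the extra factor $e^{-2\alpha}$ in the target leaves ample slack to absorb the absolute constant $c_4$).

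\textbf{Main obstacle.} The principal difficulty is extracting a useful second-moment lower bound on $h_{K'}(\bm{x})$ that works for arbitrary constant $\alpha$. A concentration approach would demand the Lipschitz constant of $h_{K'}$ (essentially $R$) to be much smaller than $\E[h_{K'}]\sim e^{-\alpha}n$, which breaks down once $\alpha$ is moderately large; replacing concentration with Jensen's inequality sidesteps this at the cost of one factor of $e^{-\alpha}$. Choosing the truncation radius $R = \Theta(\sqrt{\alpha n})$ so that $\gamma_n(K')$ remains $\geq \tfrac{1}{2}e^{-\alpha n}$ while $w_G^2/R^2 = \Omega(n/(\alpha e^{2\alpha}))$ stays of the right order is the central technical balancing act, and is exactly what drives the $\alpha e^{O(\alpha)}$ factor in the final bound.
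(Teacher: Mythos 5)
Your proposal is correct and follows essentially the same route as the paper: truncate $K$ to a ball of radius $\Theta(\sqrt{\alpha n})$, lower-bound the mean width of the truncated body via the Gaussian Urysohn lemma, and move a $\lambda$-fraction toward the support-function maximizer before applying Jensen and $\sqrt{1-t}\leq 1-t/2$. The only deviations are cosmetic: you choose $\lambda$ pointwise ($\lambda = h_{K'}(\bm{x})/R^2$, with the event $\|\bm{x}\|_2 > R$ handled by a negligible tail) instead of the paper's fixed $\lambda$, and you keep the truncated measure at $\tfrac12 e^{-\alpha n}$ rather than $e^{-2\alpha n}$, which yields a slightly better factor $e^{2\alpha}$ in place of $e^{4\alpha}$ and hence implies the stated bound for $\alpha \geq 1$.
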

\begin{proof}
  Note that by Theorem ~\ref{thm:ConcentrationFor1LipschitzFunctions} we have $\Pr_{\bm{x} \sim N(\bm{0},\bm{I}_n)}[\|\bm{x}\|_2 \geq 4\sqrt{\alpha n}] \leq e^{-2\alpha n}$,
  hence the restriction
  $Q := K \cap 4\sqrt{\alpha n} B_2^n$ still has $\gamma_n(Q) \geq \gamma_n(K) - e^{-2\alpha n} \geq e^{-2\alpha n}$ for $n$ large enough.
  Then by the previous Lemma we know that $w(Q) \geq \frac{\sqrt{n}}{2e^{2\alpha}}$. For a vector $\bm{x}$, let $\bm{z}(\bm{x}) := \textrm{argmax}\{ \left<\bm{z},\bm{x}\right> : \bm{z} \in Q\}$. As we just showed, $\E_{\bm{x} \sim N(\bm{0},\bm{I}_n)}[ \left<\bm{z}(\bm{x}),\frac{\bm{x}}{\|\bm{x}\|_2}\right>] \geq \frac{\sqrt{n}}{2e^{2\alpha}}$. Let $\lambda \in [0,1]$ be a parameter that we determine later. Note that the point $\lambda \cdot \bm{z}(\bm{x})$ lies in $Q$.
  \begin{center}
    \psset{unit=1.7cm}
    \begin{pspicture}(-1.3,-1.1)(1.3,1.1)
      \SpecialCoor
    \pscircle[linestyle=dashed](0,0){1}
    \pscustom[fillstyle=solid,fillcolor=lightgray]{
      \psarc(0,0){1}{60}{90}
      \psarc(0,0){1}{-120}{-90}
      \psarc(0,0){1}{60}{61}      
    }%
    \rput[c](-0.3,-0.7){$Q$}
    \rput[r](-0.75,0.75){$4\sqrt{\alpha n}B_2^n$}
    \cnode*(0,0){2.5pt}{origin} \nput[labelsep=2pt]{180}{origin}{$\bm{0}$}%
    \cnode*(0.8,0){2.5pt}{x} \nput[labelsep=2pt]{90}{x}{$\bm{x}$}
    \cnode*(1;60){2.5pt}{z} \nput{60}{z}{$\bm{z}(\bm{x})$}
    \ncline[linestyle=dashed]{origin}{z}
    \cnode*(0.2,0.33){2.5pt}{lambdaz} \nput[labelsep=2pt]{135}{lambdaz}{$\lambda \bm{z}(\bm{x})$}
      \end{pspicture}
    \end{center}
  This point can be used to bound
  \begin{eqnarray*}
   \E_{\bm{x} \sim N(\bm{0},\bm{I}_n)}[\|\bm{x} - \lambda \bm{z}(\bm{x})\|_2^2] &=& \E[\|\bm{x}\|_2^2] - 2\lambda \E[\left<\bm{x},\bm{z}\right>] + \E[\lambda^2\|\bm{z}\|_2^2] \\
                 &=& \underbrace{\E[\|\bm{x}\|_2^2]}_{=n} - 2\lambda \underbrace{\E[\|\bm{x}\|_2]}_{\geq\frac{1}{2} \sqrt{n}} \cdot \underbrace{\E_{\bm{\theta} \in S^{n-1}}[\left<\bm{\theta},\bm{z}(\bm{\theta})\right>]}_{\geq \sqrt{n}/(2e^{2\alpha})} + \E[\lambda^2\underbrace{\|\bm{z}\|_2^2}_{\leq 16\alpha n}] \\
    &\leq& n - \frac{1}{2}e^{-2\alpha} \lambda n  + \lambda^2 \cdot 16 \alpha n \stackrel{\lambda := \frac{1}{64\alpha e^{2\alpha}}}{=} n \cdot \Big(1-\frac{1}{256\alpha e^{4\alpha}}\Big)
  \end{eqnarray*}
  Then
  \[
    \E[d(\bm{x},Q)] \stackrel{\lambda \bm{z} \in Q}{\leq} \E[\|\bm{x}-\lambda \bm{z}\|_2] \stackrel{\textrm{Jensen}}{\leq} \E[\|\bm{x}-\lambda\bm{z}\|_2^2]^{1/2} \leq \sqrt{n} \cdot \sqrt{1-\frac{1}{256 \alpha e^{4\alpha}}} \leq \sqrt{n} \cdot \Big(1-\frac{1}{512\alpha e^{4\alpha}}\Big)
  \]
  using $\sqrt{1-y} \leq 1-\frac{y}{2}$ for $0 \leq y \leq 1$.
\end{proof}

Lemma~\ref{lem:DistanceGaussianToExpSmallSet} can be extended to the case that $K$ is included in a not too small subspace $H$. 
\begin{lemma} \label{lem:DistanceGaussianToExpSmallSetDim}
  Let $\alpha \geq 1$, $0 < \beta \leq 1$ be constants.
  Let $H \subseteq \setR^n$ be a subspace with $\dim(H) \geq \beta n$ 
  and let  $K \subseteq H$ be a symmetric convex body with 
   $\gamma_H(K) \geq e^{-\alpha n}$. For $n$  large enough, one has
  \[
  \E_{\bm{x} \sim N(\bm{0},\bm{I}_n)}[d(\bm{x},K)] \leq \sqrt{n} \cdot \Big(1-\frac{\beta}{512\alpha e^{4\alpha}}\Big)
  \]
\end{lemma}
\begin{proof}
  Note that one can generate a Gaussian $\bm{x} \sim N(\bm{0},\bm{I}_n)$ as $\bm{x} = \bm{x}_1 + \bm{x}_2$ where $\bm{x}_1 \sim N(\bm{0},H^{\perp})$ and $\bm{x}_2 \sim N(\bm{0},H)$ independently. Then $d(\bm{x},K)^2 = d(\bm{x}_1,H)^2 + d(\bm{x}_2,K)^2$ by Pythagoras.
  Hence
  \begin{eqnarray*}
    \E_{\bm{x} \sim N(\bm{0},\bm{I}_n)}\big[d(\bm{x},K)^2\big] &\leq& \E_{\bm{x}_1 \sim N(\bm{0},H^{\perp})}\big[d(\bm{x}_1,H)^2\big] + \E_{\bm{x}_2 \sim N(\bm{0},H)}\big[d(\bm{x}_2,K)^2\big] \\ &\stackrel{\textrm{Lem~\ref{lem:DistanceGaussianToExpSmallSet}}}{\leq}& \dim(H^{\perp}) + \dim(H) \cdot \Big(1-\frac{1}{256 \alpha e^{4\alpha}}\Big)     \\                                                                                     &\stackrel{\dim(H) \geq \beta n}{\leq}& n \cdot \Big(1 - \frac{\beta}{256 \alpha e^{4\alpha}}\Big) % %\leq (1-\beta) n + \beta n (1-\frac{1}{256 \alpha e^{4\alpha}}) \\ &\leq&
  \end{eqnarray*}
  As in the proof of Lemma~\ref{lem:DistanceGaussianToExpSmallSet}, the claim follows after applying Jensen inequality
  with the fact that $\sqrt{1-y} \leq 1-\frac{y}{2}$ for $0 \leq y \leq 1$.
\end{proof}

Next, we show the average distance of a Gaussian to the cube $[-\varepsilon,\varepsilon]^n$ is $\sqrt{n} \cdot (1-\Theta(\varepsilon))$.
\begin{lemma} \label{lem:DistanceOfGaussianFromEpsilonCube}
Let $\varepsilon>0$. Then for $n$ large enough one has
  \[
 \Pr_{\bm{x} \sim N(\bm{0},\bm{I}_n)}\Big[d(\bm{x},[-\varepsilon,\varepsilon]^n) \geq (1-5\varepsilon) \sqrt{n}\Big] \geq 1 - \exp\Big(-\frac{\varepsilon^2}{2}  n\Big)
  \]  
\end{lemma}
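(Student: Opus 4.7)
The plan is to reduce the statement to a standard concentration fact about the Euclidean norm of a Gaussian. The key geometric observation is that for any $\bm{x}\in\setR^n$, the nearest point $\bm{y}$ to $\bm{x}$ in $[-\varepsilon,\varepsilon]^n$ is obtained by coordinate-wise clipping and satisfies $\|\bm{y}\|_2 \leq \varepsilon\sqrt{n}$. By the triangle inequality,
\[
d(\bm{x}, [-\varepsilon,\varepsilon]^n) = \|\bm{x}-\bm{y}\|_2 \geq \|\bm{x}\|_2 - \|\bm{y}\|_2 \geq \|\bm{x}\|_2 - \varepsilon\sqrt{n}.
\]
Thus it suffices to prove that $\|\bm{x}\|_2 \geq (1-4\varepsilon)\sqrt{n}$ with probability at least $1-\exp(-\varepsilon^2 n/2)$.

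For this, I would apply the Gaussian concentration bound (Theorem~\ref{thm:ConcentrationFor1LipschitzFunctions}) to the function $F(\bm{x}) := \|\bm{x}\|_2$, which is $1$-Lipschitz. The theorem in the excerpt is stated as an upper-tail bound, but by symmetry (applied to $-F$, or the version for the median) one obtains
\[
\Pr_{\bm{x}\sim N(\bm{0},\bm{I}_n)}\bigl[\|\bm{x}\|_2 \leq \E[\|\bm{x}\|_2]-\varepsilon\sqrt{n}\bigr] \leq \exp(-\varepsilon^2 n/2).
\]
So it remains to check that $\E[\|\bm{x}\|_2] \geq (1-3\varepsilon)\sqrt{n}$ once $n$ is large enough. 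This follows from the fact that $\E[\|\bm{x}\|_2^2] = n$ combined with $\Var(\|\bm{x}\|_2) = O(1)$ (another consequence of the $1$-Lipschitz concentration), giving $\E[\|\bm{x}\|_2] \geq \sqrt{n-O(1)}$, which exceeds $(1-3\varepsilon)\sqrt{n}$ as soon as $n \gtrsim 1/\varepsilon^2$.

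Combining these, with probability at least $1-\exp(-\varepsilon^2 n/2)$,
\[
d(\bm{x}, [-\varepsilon,\varepsilon]^n) \geq \|\bm{x}\|_2 - \varepsilon\sqrt{n} \geq \bigl((1-3\varepsilon) - \varepsilon\bigr)\sqrt{n} - \varepsilon\sqrt{n} = (1-5\varepsilon)\sqrt{n},
\]
which is the claim. There is no real obstacle here: the only mild nuisance is verifying $\E[\|\bm{x}\|_2] \geq (1-\varepsilon)\sqrt{n}$ for large $n$, but that is immediate from $\E[\|\bm{x}\|_2]^2 = n - \Var(\|\bm{x}\|_2)$ and the Poincaré/Gaussian-concentration estimate $\Var(\|\bm{x}\|_2) \leq 1$. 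The reason the proof needs no cleverness is that the Lipschitz concentration on $\|\bm{x}\|_2$ already gives the optimal exponent $\varepsilon^2 n/2$ for free, matching the target tail bound exactly.
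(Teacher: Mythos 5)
Your proof is correct, but it takes a different route from the paper's. The paper applies the Lipschitz concentration bound directly to the distance function $F(\bm{x}) := d(\bm{x},[-\varepsilon,\varepsilon]^n)$ and lower-bounds $\E[F^2]$ by a coordinate-wise computation, expanding $\E[|x_i-y_i|^2] = 1 - 2\E[x_iy_i] + \E[y_i^2] \geq 1-2\varepsilon$ using $|y_i|\leq\varepsilon$, then converting $\E[F^2]^{1/2}$ to $\E[F]$ via the $O(1)$ gap for Lipschitz functions. You instead observe the deterministic inequality $d(\bm{x},[-\varepsilon,\varepsilon]^n) \geq \|\bm{x}\|_2 - \varepsilon\sqrt{n}$ (since the clipped point has norm at most $\varepsilon\sqrt{n}$), which reduces everything to the standard lower-tail concentration of $\|\bm{x}\|_2$ around $\sqrt{n-O(1)}$. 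Both arguments lean on the same two ingredients — Theorem~\ref{thm:ConcentrationFor1LipschitzFunctions} (in its lower-tail form, valid by applying it to $-F$) and the fact that $\E[F]$ and $\E[F^2]^{1/2}$ differ by $O(1)$ for $1$-Lipschitz $F$ — but your triangle-inequality reduction lets you avoid the per-coordinate moment calculation entirely and apply concentration to the more familiar function $\|\bm{x}\|_2$; the losses are comparable ($O(\varepsilon)\sqrt{n}$ either way), comfortably within the $(1-5\varepsilon)\sqrt{n}$ target. Two cosmetic points: your prose at the end says you need $\E[\|\bm{x}\|_2]\geq(1-\varepsilon)\sqrt{n}$ while the argument uses $(1-3\varepsilon)$, and the displayed chain $\bigl((1-3\varepsilon)-\varepsilon\bigr)\sqrt{n}-\varepsilon\sqrt{n}$ bundles the concentration deviation and the clipping correction in a slightly confusing order — but the arithmetic lands on $(1-5\varepsilon)\sqrt{n}$ as claimed.
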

\begin{proof}
  Let $\bm{y} := \bm{y}(\bm{x}) := \textrm{argmin}\{ \|\bm{x} - \bm{y}\|_2 : \bm{y} \in [-\varepsilon,\varepsilon]^n \}$ be the closest point in the cube to $\bm{x}$.
  For an individual coordinate $i \in [n]$ the expected contribution to the distance is 
  \[
    \E\big[d(x_i,[-\varepsilon,\varepsilon])^2\big] = \E\big[|x_i - y_i|^2\big] = \underbrace{\E[x_i^2]}_{=1} - 2\underbrace{\E[x_iy_i]}_{\leq \varepsilon \E[|x_i|]}+\underbrace{\E[y_i^2]}_{\geq 0}
 \geq   1 - 2 \sqrt{\frac{2}{\pi}} \cdot \varepsilon \geq 1-2\varepsilon.
  \]
  Then by linearity $\E[d(\bm{x},[-\varepsilon,\varepsilon]^n)^2]^{1/2} \geq \sqrt{n \cdot (1-2\varepsilon)} \geq \sqrt{n} \cdot (1-2\varepsilon)$. Recall that the distance function  $F(\bm{x}) := d(\bm{x},[-\varepsilon,\varepsilon]^n)$ is 1-Lipschitz and for such functions the difference $|\E[F(\bm{x})]-\E[F(\bm{x})^{2}]^{1/2}|$ is bounded by an absolute constant. Then $\E[F(\bm{x})] \geq \sqrt{n} \cdot (1-4\varepsilon)$ for $n$ large enough.
 Finally by Theorem~\ref{thm:ConcentrationFor1LipschitzFunctions} one has $\Pr[F(\bm{x}) < \E[F(\bm{x})] - \varepsilon \sqrt{n}] \leq e^{-\varepsilon^2 n / 2}$ for $\bm{x} \sim N(\bm{0},\bm{I}_n)$ which then gives the claim as  $\E[F(\bm{x})] - \varepsilon \sqrt{n} \geq (1-5\varepsilon) \sqrt{n}$. %Note that for 1-Lipschitz functions, the difference $|\E[F(\bm{x})]-\E[F(\bm{x})^{2}]^{1/2}|$ is bounded by an absolute constant.
%  Finally $\E[F(\bm{x})] + \frac{\varepsilon}{\sqrt{50}} \leq \E[F(\bm{x})^2]^{1/2} + O(1) + \frac{\varepsilon}{\sqrt{50}} \leq \sqrt{n} \cdot (1-\frac{\varepsilon}{16})$ for $n$ large enough. 
%  By adding in some slack and using concentration (recall that the function $\bm{x} \mapsto d(\bm{x},[-\varepsilon,\varepsilon]^n)$ is 1-Lipschitz), the claim follows. 
  %
%$\E[d(x_i,[-\varepsilon,\varepsilon])^2] \geq \E_{x_i \sim N(0,1)}[ x_i-\varepsilon \mid x_i \geq 0] = \underbrace{\E_{x_i \sim N(0,1)}[|x_i|] - \varepsilon$
\end{proof}

We will now prove Theorem~\ref{thm:PartialColoringForExpSmallSets}. Let $H \subseteq \setR^n$ be a subspace with $\dim(H) \geq \beta n$ and let $K \subseteq H \subseteq \setR^n$ be a
symmetric convex body with $\gamma_H(K) \geq e^{-\alpha n}$. Moreover, let $L_i,R_i \in [0,\varepsilon]$ be given parameters where the choice of $\varepsilon := \varepsilon(\alpha,\beta) > 0$ will be made in the upcoming proof of Lemma~\ref{lem:PartialColoringForExpSmallSetsMainLemma}.
%Instead of providing a vector $\bm{x} \in (\frac{1}{\varepsilon} K) \cap [-1,1]^n$ directly, we will instead
%find an $\bm{x} \in K \cap [-\varepsilon,\varepsilon]^n$ with $|\{i \in [n] : x_i \in \{ -\varepsilon,\varepsilon\}\}| \geq \delta n$ where $\varepsilon,\delta>0$ will be chosen small enough, depending on $\alpha$ --- the result in Theorem~\ref{thm:PartialColoringForExpSmallSets} then follows by scaling $\bm{x}$ by $\frac{1}{\varepsilon}$.  
We will use the following algorithm:
\begin{enumerate}
\item[(1)] Pick $\bm{x}^* \sim N(\bm{0},\bm{I}_n)$ at random.
\item[(2)] Compute $\bm{y}^* := \textrm{argmin}\big\{ \|\bm{x}^* - \bm{y}\|_2 : \bm{y} \in K \cap [-\bm{L},\bm{R}]^n \big\}$.
\end{enumerate}
\begin{center}
\psset{unit=1.5cm}
\begin{pspicture}(-2.4,-0.9)(1.5,1.2)
\rput[c]{20}(0,0){\psellipse[fillstyle=solid,fillcolor=lightgray](0,0)(2,0.6)}
%\rput[c]{20}(0,0){\psellipticwedge[fillstyle=solid,fillcolor=lightgray](0,0)(2,0.6){0}{-0.1}}
\pspolygon[fillstyle=vlines,fillcolor=lightgray,hatchcolor=gray](-0.8,-1)(1,-1)(1,1)(-0.8,1)
\rput[c](1.3,0.5){\psframebox[framesep=2pt,fillstyle=solid,fillcolor=lightgray,linestyle=none]{$K$}}
\cnode*(0,0){2.5pt}{origin} \nput[labelsep=2pt]{90}{origin}{\psframebox[fillstyle=solid,fillcolor=lightgray,framesep=2pt,linestyle=none]{$\bm{0}$}}
\cnode*(1.8,-0.5){2.5pt}{x} \nput{0}{x}{$\bm{x}^*$}
\cnode*(1,-0.2){2.5pt}{y} \nput[labelsep=0pt]{150}{y}{\psframebox[fillstyle=solid,fillcolor=lightgray,framesep=1pt,linestyle=none]{$\bm{y}^*$}}
\rput[l](-0.95,1.2){$[-\bm{L},\bm{R}]$} % \subseteq [-\varepsilon,\varepsilon]^n$}
\ncline[arrowsize=6pt,linewidth=1pt]{<->}{x}{y}
\end{pspicture}
%\begin{pspicture}(-2.4,-0.9)(1.5,1.2)
%\drawRect{fillstyle=solid,fillcolor=lightgray}{-1}{-1}{2}{2}
%\rput[c]{20}(0,0){\psellipse[fillstyle=vlines,hatchcolor=darkgray](0,0)(2,0.6)}
%\rput[c](1.3,0.5){\psframebox[framesep=2pt,fillstyle=solid,fillcolor=white,linestyle=none]{$K$}}
%\cnode*(0,0){2.5pt}{origin} \nput[labelsep=2pt]{90}{origin}{\psframebox[fillstyle=solid,fillcolor=lightgray,framesep=2pt,linestyle=none]{$\bm{0}$}}
%\cnode*(1.8,-0.5){2.5pt}{x} \nput{0}{x}{$x^*$}
%\cnode*(1,-0.2){2.5pt}{y} \nput[labelsep=0pt]{150}{y}{\psframebox[fillstyle=solid,fillcolor=lightgray,framesep=1pt,linestyle=none]{$y^*$}}
%\rput[l](-0.95,0.8){$[-1,1]^n$}
%\ncline[arrowsize=6pt,linewidth=1pt]{<->}{x}{y}
%\end{pspicture}
\end{center}
Note that the step (2) is a convex program which can be solved in polynomial time, see \cite{DBLP:books/sp/GLS1988}.
Now we can finish the proof of Theorem~\ref{thm:PartialColoringForExpSmallSets}.
\begin{lemma} \label{lem:PartialColoringForExpSmallSetsMainLemma}
  If $\varepsilon,\delta > 0$ are chosen small enough (depending on $\alpha$), then
  with probability $1- e^{-\Omega_{\varepsilon,\delta}(n)}$ one has  $|\{ i \in [n] : y^*_i \in \{ -L_i,R_i\} \}| \geq \delta n$.
\end{lemma}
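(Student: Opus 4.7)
My plan combines the two distance bounds established so far with a Gaussian integration-by-parts identity to lower bound $\E[|T|]$, then uses a Lipschitz-smoothing argument to obtain an exponential tail bound.

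First, I would control $\E[\|\bm{x}^* - \bm{y}^*\|_2^2]$ using the previous lemmas. By iterated application of \v{S}idak's inequality (Lemma~\ref{lem:SidakLemma}) to the strips defining the cube, $K' := K \cap [-\varepsilon,\varepsilon]^n$ has $\gamma_n(K') \geq e^{-\alpha' n}$ with $\alpha' = \alpha + O(\log(1/\varepsilon))$. Running the mean-width / witness argument from Lemma~\ref{lem:DistanceGaussianToExpSmallSet} on $Q := K' \cap 4\sqrt{\alpha' n}\, B_2^n$ in place of the body there gives an upper bound $\E[\|\bm{x}^*-\bm{y}^*\|_2^2] \leq n(1-c_{\alpha,\varepsilon})$. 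Letting $\bm{c} := \textrm{proj}_{[-\varepsilon,\varepsilon]^n}(\bm{x}^*)$ be the coordinate-wise clipping, the obtuse-angle inequality gives
\[
\|\bm{x}^* - \bm{y}^*\|_2^2 \;\geq\; \|\bm{x}^* - \bm{c}\|_2^2 + \|\bm{c} - \bm{y}^*\|_2^2,
\]
so combining with Lemma~\ref{lem:DistanceOfGaussianFromEpsilonCube} (which yields $\|\bm{x}^*-\bm{c}\|_2^2 \geq n(1-10\varepsilon)$ with probability $1 - e^{-\Omega(\varepsilon^2 n)}$) and the Lipschitz concentration of $\|\bm{x}^* - \bm{y}^*\|_2$ (Theorem~\ref{thm:ConcentrationFor1LipschitzFunctions}), I obtain $\|\bm{c}-\bm{y}^*\|_2^2 \leq O(\varepsilon)\,n$ with exponentially high probability.

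To convert this $\ell_2$-closeness into a coordinate-wise count, I would apply Stein's lemma (Gaussian integration by parts) to the $1$-Lipschitz projection map. For a.e.\ $\bm{x}^*$, $\bm{y}^*$ lies in the relative interior of a unique face of $K'$ and the Jacobian $\nabla \bm{y}^*$ equals the orthogonal projection onto the tangent space of that face, giving $\textrm{tr}(\nabla \bm{y}^*) = n - |T(\bm{x}^*)| - \mathbf{1}[\bm{y}^* \in \partial K]$. Stein's identity then yields
\[
\E[\langle \bm{x}^*, \bm{y}^*\rangle] = \sum_i \E[\partial_i y_i^*] = n - \E[|T|] - \Pr[\bm{y}^* \in \partial K].
\]
Since $\bm{y}^* \in [-\varepsilon,\varepsilon]^n$, the support-function bound $\langle \bm{x}^*, \bm{y}^*\rangle \leq \varepsilon\|\bm{x}^*\|_1$ and $\E[\|\bm{x}^*\|_1] = n\sqrt{2/\pi}$ give $\E[|T|] \geq n(1 - \varepsilon\sqrt{2/\pi}) - 1$.

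The main obstacle is upgrading this expectation bound to an exponential tail bound on the non-Lipschitz integer-valued quantity $|T|$. I would introduce a $(1/\tau)$-Lipschitz smoothing $\phi_\tau : \setR \to [0,1]$ satisfying $\mathbf{1}_{|y|=\varepsilon} \leq \phi_\tau(y) \leq \mathbf{1}_{|y| \geq \varepsilon - \tau}$ and apply Theorem~\ref{thm:ConcentrationFor1LipschitzFunctions} to the smoothed count $\tilde T := \sum_i \phi_\tau(y_i^*)$, which is $\sqrt{n}/\tau$-Lipschitz in $\bm{x}^*$ (via the $1$-Lipschitz projection). This gives $\tilde T \geq \E[\tilde T] - \gamma n \geq (1 - O(\varepsilon+\gamma))n$ with probability $1 - e^{-\Omega(\gamma^2 \tau^2 n)}$. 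The overcount $\tilde T - |T|$ is bounded by the ``near-boundary slab'' $|S_\tau|$ where $S_\tau := \{i : |y_i^*| \in [\varepsilon-\tau,\varepsilon)\}$; a second Lipschitz-concentration argument, relying on the monotonicity and $1$-Lipschitz dependence of $y_i^*$ on $x_i^*$ (for fixed $\bm{x}_{-i}^*$) together with the uniform Gaussian density bound to show $\E[|S_\tau|] \leq O(\tau n)$, yields $|S_\tau| \leq O(\tau)n$ with exponentially high probability. Balancing $\tau$ and $\varepsilon$ as small constants depending on $\alpha$ and $\delta$, we obtain $|T| \geq \tilde T - |S_\tau| \geq \delta n$ with probability $1 - e^{-\Omega_{\varepsilon,\delta}(n)}$ as required.
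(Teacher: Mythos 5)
Your proposal has two genuine gaps, both fatal as written. First, the Stein's-lemma step rests on the identity $\tr(\nabla \bm{y}^*) = n - |T| - \mathbf{1}[\bm{y}^* \in \partial K]$, which is false for a general convex body: the Jacobian of the metric projection onto a curved boundary is a strict contraction, not the orthogonal projection onto a tangent space. For example, projecting onto $Q = rB_2^n$ from $\bm{x}$ outside gives $\nabla P = \frac{r}{\|\bm{x}\|_2}(\bm{I} - \hat{\bm{x}}\hat{\bm{x}}^{\top})$ with trace $\frac{r}{\|\bm{x}\|_2}(n-1) \ll n-1$. The inequality that does hold in general, $\tr(\nabla \bm{y}^*) \le n - |T|$ (from $\bm{0} \preceq \nabla P \preceq \bm{I}$ and $\partial_i y_i^* = 0$ on generically tight coordinates), points in the wrong direction: it yields only an \emph{upper} bound $\E[|T|] \le n - \E[\langle \bm{x}^*,\bm{y}^*\rangle]$, whereas you need a lower bound on $|T|$. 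So the conclusion $\E[|T|] \ge n(1-\varepsilon\sqrt{2/\pi}) - 1$ is unsupported. Second, the claim $\E[|S_\tau|] \le O(\tau n)$ does not follow from the monotone $1$-Lipschitz dependence of $y_i^*$ on $x_i^*$: a $1$-Lipschitz map is a contraction, so the preimage of an interval of length $\tau$ can be arbitrarily \emph{long} (e.g.\ $y_i^* = \lambda x_i^*$ with $\lambda$ small has preimage length $\tau/\lambda$), and the Gaussian density bound then gives nothing. Indeed the $\ell_2$-closeness $\|\bm{c}-\bm{y}^*\|_2^2 \le O(\varepsilon)n$ you establish in the first part (which is otherwise correct but then never used) is consistent with every coordinate of $\bm{y}^*$ sitting at $\varepsilon - \sqrt{10\varepsilon}$ and $|T|=0$; some anti-concentration for the individual $y_i^*$ near $\pm\varepsilon$ is unavoidable on this route and is exactly what is missing.

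The paper avoids both issues with a complementary-slackness observation: if the tight index set $I^* = \{i : y_i^* \in \{-\varepsilon,\varepsilon\}\}$ had $|I^*| < \delta n$, then discarding the slack cube constraints would not change the optimum of the convex program, so $d(\bm{x}^*, K(I^*)) = d(\bm{x}^*, K \cap [-\varepsilon,\varepsilon]^n)$ where $K(I) := \{\bm{x} \in K : |x_i| \le \varepsilon \ \forall i \in I\}$. A union bound over all $\le e^{\varepsilon^2 n}$ index sets $I$ of size $\le \delta n$, using \v{S}idak plus Lemma~\ref{lem:DistanceGaussianToExpSmallSet} and concentration, shows $d(\bm{x}^*, K(I)) \le (1-10\varepsilon)\sqrt{n}$ for all such $I$ simultaneously, contradicting $d(\bm{x}^*, [-\varepsilon,\varepsilon]^n) \ge (1-5\varepsilon)\sqrt{n}$ from Lemma~\ref{lem:DistanceOfGaussianFromEpsilonCube}. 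You would need to replace your Stein and slab steps with an argument of this kind (or a genuine anti-concentration bound for $y_i^*$ at the boundary values).
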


\begin{proof}
  For a set of indices $I \subseteq [n]$ we abbreviate the subspace $H(I) := \{ \bm{x} \in H \mid x_i =0 \; \forall i \in I\}$.
  Moreover we abbreviate $K(I) := \{ \bm{x} \in K \mid -L_i \leq x_i \leq R_i \; \forall i \in I\}$ as the intersection of $K$ with the slabs corresponding to coordinates in $I$.
  % let $K(I) := K \cap H(I)$ be the intersection of $K$ with the coordinate subspaces belonging to coordinates in $I$.
  Consider the two events
  \begin{eqnarray*}
    \pazocal{E}_1 &:=& ``d(\bm{x}^*,K \cap [-\bm{L},\bm{R}]) \geq (1-5\varepsilon) \cdot \sqrt{n}\text{''} \\
    \pazocal{E}_2 &:=& ``\textrm{for all }I \subseteq [n]\textrm{ with }|I| \leq \delta n\textrm{ one has }d(\bm{x}^*,K \cap H(I)) \leq (1-10\varepsilon) \sqrt{n}\text{''} 
  \end{eqnarray*}
We will see that both events $\pazocal{E}_1$ and $\pazocal{E}_2$ happen with overwhelming probability. \\
  {\bf Claim I.} \emph{One has $\Pr[\pazocal{E}_1] \geq 1 - \exp(-\frac{\varepsilon^2}{2}  n)$.} \\
  {\bf Proof of Claim I.} Follows from Lemma~\ref{lem:DistanceOfGaussianFromEpsilonCube} as $d(\bm{x}^*,K \cap [-\bm{L},\bm{R}]) \geq d(\bm{x}^*,K \cap [-\varepsilon,\varepsilon]^n) \geq d(\bm{x}^*,[-\varepsilon,\varepsilon]^n)$. \\
  {\bf Claim II.} \emph{If $\varepsilon,\delta >0$ are small enough, then $\Pr[\pazocal{E}_2] \geq 1-e^{-\Theta_{\varepsilon}(n)}$.} \\
  {\bf Proof of Claim II.} For any index set $I$ one can lower bound the measure as % index set $I$ with $|I| \leq \delta n$ one can lower bound the measure as
  $
  \gamma_{H(I)}(K \cap H(I)) \geq \gamma_H(K) \geq e^{-\alpha n}
$
 by Lemma ~\ref{lem:GaussianMeasureSections}.   
%  \[
%  \gamma(K(I)) \stackrel{\textrm{\v{S}idak-Kathri (Lem~\ref{lem:SidakLemma})}}{\geq} \gamma_n(K) \cdot \gamma_1([-\varepsilon,\varepsilon])^{|I|} \geq e^{-\alpha n} \cdot (\varepsilon/2)^{|I|} \geq e^{-\alpha n - \ln(\frac{2}{\varepsilon}) \cdot \delta n} \geq e^{-2\alpha n},
%\]
%assuming $\delta>0$ is chosen small enough so that $\ln(\frac{2}{\varepsilon}) \cdot \delta \leq \alpha$. 
%Here we use that $\gamma_1([-\varepsilon,\varepsilon]) \geq 2\varepsilon \cdot \gamma_1(1/2) \geq 2\varepsilon \frac{1}{\sqrt{2\pi}}e^{-(1/2)^2/2} \geq \frac{\varepsilon}{2}$ for $0<\varepsilon \leq \frac{1}{2}$.
 Let us abbreviate $\pazocal{I} := \{ I \subseteq [n] : |I| \leq \delta n\}$ as the family of small index sets. For $I \in \pazocal{I}$ we have
 $\dim(H(I)) \geq \dim(H) - |I| \geq \frac{\beta}{2} n$, if we choose  $\delta \leq \frac{\beta}{2}$.
 Then by Lemma~\ref{lem:DistanceGaussianToExpSmallSetDim} we 
know that a fixed $I \in \pazocal{I}$ has $\E_{\bm{x} \sim N(\bm{0},\bm{I}_n)}[d(\bm{x},K \cap H(I))] \leq \sqrt{n} \cdot \big(1-\frac{\beta/2}{512 \cdot \alpha e^{4\alpha}}\big) \leq (1-20\varepsilon) \sqrt{n}$, if we choose $\varepsilon \leq \frac{\beta/2}{20 \cdot 512\alpha e^{4\alpha}}$. Then by concentration one has $\Pr_{\bm{x} \sim N(\bm{0},\bm{I}_n)}[d(\bm{x},K \cap H(I)) > (1-10\varepsilon) \sqrt{n}] \leq \exp(-50\varepsilon^2n)$, see Theorem~\ref{thm:ConcentrationFor1LipschitzFunctions}. A useful bound is $|\pazocal{I}| \leq e^{2\delta\log_2(\frac{1}{\delta}) n} \leq e^{\varepsilon^2 n}$ if we choose $\delta$ small enough compared to $\varepsilon$. Then
\begin{eqnarray*}
  \Pr[\pazocal{E}_2] &\stackrel{\textrm{union bound}}{\leq}& \sum_{I \in \pazocal{I}} \Pr\big[d(\bm{x}^*,K \cap H(I)) > (1-10\varepsilon)\sqrt{n}\big] \\
  &\leq& e^{\varepsilon^2 n} \cdot \exp(-50\varepsilon^2n) \leq \exp\Big(-40\varepsilon^2n\Big). \qed
\end{eqnarray*}

Now we have everything to finish the proof. Fix an outcome of the vector $\bm{x}^*$ so that
the events $\pazocal{E}_1$ and $\pazocal{E}_2$ are both true, and abbreviate $I^* := \{ i \in [n] : y_i^* \in \{ -L_i,R_i\}\}$.
Suppose for the sake of contradiction that  $|I^*| < \delta n$. Then 
\begin{eqnarray*}
  (1-10\varepsilon)\sqrt{n}  &\stackrel{\pazocal{E}_2\textrm{ true } \& \; I^* \in \pazocal{I}}{\geq}& d(\bm{x}^*,K \cap H(I^*)) \\ &\stackrel{K \cap H(I^*) \subseteq K(I^*)}{\geq}& d(\bm{x}^*,K(I^*)) \\ &\stackrel{(*)}{=}& d(\bm{x}^*,K \cap [-\bm{L},\bm{R}]) \\ &\stackrel{\pazocal{E}_1\textrm{ true}}{\geq}& (1-5\varepsilon) \sqrt{n}
\end{eqnarray*}
which is a contradiction. Here the crucial argument for $(*)$ is that $d(\bm{x}^*,K \cap [-\bm{L},\bm{R}]) = \min\{ \|\bm{x}^*-\bm{y}\|_2 : \bm{y} \in K\textrm{ and }-L_i \leq y_i \leq R_i \; \forall i \in [n]\} $ is a \emph{convex minimization} problem and the optimum value will not change if linear constraints are discarded that are not tight for the optimum $\bm{y}^*$, and the box constraints for coordinates  $I^* \setminus [n]$ are indeed not tight.
\end{proof}

%  $(1-\beta)n$ many elements rather than $\delta n$ for some small constant $\delta$.
% It is not hard to color close to. As this is often
%We state that result which we claimed earlier in
We stated such a result earlier in Theorem~\ref{thm:PartialColoringForExpSmallSetsWithShiftAndLargeFractionColored}. Now we are ready to prove it:  
\begin{proof}[Proof of Theorem~\ref{thm:PartialColoringForExpSmallSetsWithShiftAndLargeFractionColored}]
  The basic idea is to simply apply Theorem~\ref{thm:PartialColoringForExpSmallSets} a constant number of times
  until the desired number of elements is colored. 
  We assume $\beta>\gamma$ since otherwise there is nothing to prove. Let $\varepsilon := \varepsilon(\alpha,\gamma),\delta := \delta(\varepsilon,\gamma) > 0$ be the constants from
  Theorem~\ref{thm:PartialColoringForExpSmallSets} that work for the given $\alpha$ and $\beta' := \gamma > 0$.
  
  We set $\bm{y}^{(0)} := \bm{y}$ and
  for $t \geq 0$ we set $F^{(t)} := \{ i \in [n] : y^{(t)}_i \in \{ -1,1\} \}$ as the variables that are \emph{frozen}.
  Suppose for some $t$ we have constructed a sequence $\bm{y}^{(0)},\ldots,\bm{y}^{(t)}$ and still $|F^{(t)}| < (\beta-\gamma)n$.
  Set $H^{(t)} := \{ \bm{x} \in H \mid x_i = 0 \; \forall i \in F^{(t)} \}$ be the subspace of $H$ where we fix frozen coordinates to be $0$.
  Note that $\dim(H^{(t)}) \geq \dim(H) - |F^{(t)}| \geq \gamma n$. Moreover $\gamma_{H^{(t)}}(K \cap H^{(t)}) \geq \gamma_H(K) \geq e^{-\alpha n}$ by Lemma~\ref{lem:GaussianMeasureSections}.
  We set $R_i := \frac{\varepsilon}{2} \cdot (1-y_i^{(t)})$ and $L_i := \frac{\varepsilon}{2} \cdot (y_i^{(t)} - (-1))$ for $i \in [n] \setminus F^{(t)}$
  and $R_i := L_i := \varepsilon$ for $i \in F^{(t)}$ and apply Theorem~\ref{thm:PartialColoringForExpSmallSets}.
  With high probability, the algorithm succeeds and provides a vector $\bm{x}^{(t)}$. We update $\bm{y}^{(t+1)} := \bm{y}^{(t)} + \frac{2}{\varepsilon} \bm{x}^{(t)} \in [-1,1]^n$
  where $\|\bm{y}^{(t+1)}\|_K \leq \|\bm{y}^{(t)}\|_K + \frac{2}{\varepsilon}$ by the triangle inequality. Moreover, the number of frozen coordinates increases\footnote{For frozen coordinates $i$ we did set $L_i = R_i = \varepsilon$ so that $\bm{x}^{(t)}$ will indeed contain $\delta n$ ``fresh'' coordinates that become tight, rather than rediscovering the coordinates in $F^{(t)}$.} to
  $|F^{(t+1)}| \geq |F^{(t)}| + \delta n$. We will terminate after at most $\frac{1}{\delta}$ iterations and if $T$ is the final iteration,  then $\bm{y}^{(T)} \in [-1,1]^n \cap \frac{2}{\varepsilon \delta} K$ as desired.
%  Let $K_{S^{(t)}} := \{ \bar{\bm{x}} \in \setR^{S^{(t)}}: (\bar{\bm{x}},\bm{0}) \in K\}$ and note that %\rem{V: Seems like $e^{-\alpha|S|} \ge e^{-\alpha n}$. \\ T: Good catch. I fixed it.}
%  $\gamma_{|S^{(t)}|}(K_{S^{(t)}}) \geq \gamma_n(K) \geq e^{-\alpha n} \geq \exp(- \frac{\alpha}{1-\beta+\gamma} |S^{(t)}|)$. Moreover $\dim(H_{S^{(t)}}) \geq \dim(H) - (n-|S^{(t)}|) \geq \beta n - (\beta-\gamma)n = \gamma n$. Hence by Theorem~\ref{thm:PartialColoringForExpSmallSetsWithShift} there exists a $\bm{x}^{(t)}$ so that $\bm{y}^{(t+1)} := \bm{y}^{(t)} + \bm{x}^{(t)} \in [-1,1]^n$ with $\bm{x}^{(t)} \in (C'  \cdot K \cap H)$ and $|S^{(t+1)}| \leq (1-\delta) |S^{(t)}|$ for some constants $C',\delta>0$. As soon as we reach an iteration $t$ with $|S^{(t)}| < (1-\beta+\gamma)n$ we stop and return the desired vector $\bm{x} := \bm{x}^{(0)} + \ldots + \bm{x}^{(t-1)}$.
\end{proof}

We would like to mention that Theorem~\ref{thm:PartialColoringForExpSmallSetsWithShiftAndLargeFractionColored} may also be deduced, after some work, from the Gaussian measure amplification techniques derived in \cite{8555088} with the use of $\alpha$-regular M-ellipsoids. We believe the analysis presented here is simpler, since the existence of such regular M-ellipsoids is a deep result in convex geometry. 

\section{From hereditary volume bounds to Gaussian measure}

This section is devoted to the proof of Theorem~\ref{thm:HereditaryVolumeImpliesGaussianMeasure}, which provides a connection between hereditary volume and Gaussian measure. For a brief motivation, note that for any convex body $K \subseteq \setR^n$ and any $S \subseteq [n]$ one has $\vol_{|S|} (K_S) \ge \gamma_{|S|} (K_S) \ge \gamma_n (K)$. It is therefore a natural question whether a converse holds, and Theorem~\ref{thm:HereditaryVolumeImpliesGaussianMeasure} shows that this is indeed the case. As a corollary, we settle up to an exponential factor a conjecture of ~\cite{GeometryOfLpBall-BartheGuedonMendelsonNaor2005} that coordinate sections minimize the Gaussian measure among all sections of scaled $\ell_p$ balls.

We would also like to mention that we cannot hope for a refinement of the right side to only sections of dimension $\delta n$. For example when $K = \eps \cdot B^{\delta n -1}_2 \times \setR^{n-\delta n + 1}$, all $\delta n$-dimensional sections have infinite volume yet $\gamma(K) \to 0$ as $\eps \to 0$.

 While relatively short, our proof does use several auxilliary results. The key ingredient is the following formula which expresses the volume of the Minkowski sum of a convex body and an Euclidean ball as a weighted sum of \textit{quermassintegrals} $W_i(K)$ which are average volumes of projections. Recall that given $A,B\subseteq \setR^n$, $A+B:= \{\bm{a}+\bm{b}: \bm{a} \in A, \bm{b} \in B\}.$ 

 %\rem{T: It seems $W_i(K)$ is typically used for the Quermassintegrals and $V_j(K)$ are the intrinsic volumes (which apparently are a different normalization of the former. At least according to \url{https://en.wikipedia.org/wiki/Mixed_volume} the current definition is a bit of a hybrid between instrinsic volume and quermassintegral.}

\begin{lemma}[Kubota's Integral Formula ~\cite{pisier_1989}] \label{lem:KubotaIntegralFormula}
For any convex body $K \subset \setR^n$, we have
\[
\vol_n(K + \lambda B^n_2) = \sum_{i=0}^n \lambda^{i} {n \choose i} W_i(K)
\]
with
\[
W_{n-i}(K) := \frac{\vol_n(B^n_2)}{\vol_i(B^i_2)} \int_{G(n,i)} \vol_i(\pi_L(K)) dL, 
\]
where the integral is over the uniform measure over  $G(n,i)$, which is the set of $i$-dimensional linear subspaces $L \subseteq \setR^n$ and $\pi_L(K)$ denotes the orthogonal projection of $K$ onto $L$.
\end{lemma}%\rem{T: In some sense the normalization of $\int_{G(n,i)}.. dL$ was not specified. } \rem{V: I think "uniform measure" already implies $\int 1 = 1$. But it's fine to add a clarification if you see fit.}

In order to relate projections to slices, we use polarity. Given a symmetric convex set $K \subseteq \setR^n$, its \textit{polar} is $K^\circ := \{\bm{y} \in \mathrm{span}(K) \mid \inn{\bm{x}}{\bm{y}} \le 1 \ \forall \bm{x} \in K\}$. The following lemma elucidates the reason polars are helpful to transform projections into slices:

\begin{lemma} \label{lem:ProjectionPolar}
Given a symmetric convex body $K \subseteq \setR^n$ and any subspace $H \subseteq \setR^n$, we have $(K \cap H)^\circ = \pi_H(K^\circ)$.
\end{lemma}

%\rem{V: I had to use psellipse vs psellipticwedge to get the same picture, shouldn't matter much.}
\begin{center}
\psset{unit=1.3cm}
\begin{pspicture}(-2,-1.85)(3,1.85)
%\rput{12}(0,0){\psellipticwedge[fillstyle=solid,fillcolor=lightgray](0,0)(1.5,0.5){0}{-1}} \rput[c](1.2,0.3){$K$}
\rput{12}(0,0){\psellipse[fillstyle=solid,fillcolor=lightgray](0,0)(1.5,0.5)} \rput[c](1.2,0.3){$K$}
\psline[linestyle=dashed,linewidth=0.5pt](-2,0)(2,0)
\pscircle[linestyle=dashed,linewidth=0.5pt](0,0){1}
\psline[linewidth=1.5pt](-1.25,0)(1.25,0)
%\cnode*(1.2,0){2.5pt}{a} \nput[labelsep=2pt]{-60}{a}{$\bm{u}$}
\cnode*(0,0){2.5pt}{origin}\nput[labelsep=2pt]{135}{origin}{$\bm{0}$}
\rput[c](-1.8,5pt){$H$}
\rput(1.35;-135){$B_2^n$}
\pnode(0.5,0){A}
\pnode(0.5,1.3){B}
\ncline{->}{B}{A}
\nput[labelsep=2pt]{90}{B}{$K \cap H$}
%\rput[c](0.5,7pt){$K \cap H$}
\end{pspicture}
\begin{pspicture}(-2,-1.85)(2,1.85)
%\rput{12}(0,0){\psellipticwedge[fillstyle=solid,fillcolor=lightgray](0,0)(0.666,2){0}{-1}} \rput[c](-0.1,1.3){$K^{\circ}$}
\rput{12}(0,0){\psellipse[fillstyle=solid,fillcolor=lightgray](0,0)(0.666,2)} \rput[c](-0.1,1.3){$K^{\circ}$}
\psline[linestyle=dashed,linewidth=0.5pt](-2,0)(2,0)
\pscircle[linestyle=dashed,linewidth=0.5pt](0,0){1}
\psline[linewidth=1.5pt](-0.78,0)(0.78,0)
\psline[linestyle=dotted](0.78,-2)(0.78,2)
\psline[linestyle=dotted](-0.78,-2)(-0.78,2)
%\cnode*(0.8333,0){2.5pt}{a} \nput[labelsep=2pt]{-45}{a}{$\frac{\bm{u}}{\|\bm{u}\|_2^2}$}
\cnode*(0,0){2.5pt}{origin}\nput[labelsep=2pt]{135}{origin}{$\bm{0}$}
\rput[c](-1.8,5pt){$H$}
%\rput[l](0.9,1.5){$\left<\bm{u},\bm{x} \right> \leq 1$}
\rput(1.35;-150){$B_2^n$}
\pnode(0.4,0){A}
\pnode(1.5,1.3){B}
\ncline{->}{B}{A}
\nput[labelsep=2pt]{90}{B}{$\pi_{H}(K^{\circ})$}
\end{pspicture}
\end{center}
It is also well-known that polarity transforms intersections into convex hulls:

\begin{lemma} \label{lem:PolarElementary}
Given symmetric convex bodies $K, L \subseteq \setR^n$, we have $(K \cap L)^\circ = \conv(K^\circ,L^\circ)$.

%(a) $K \subseteq L \implies L^\circ \subseteq K^\circ$.

%(b) $K = (K^\circ)^\circ$. 

\end{lemma}
For a detailed introduction to polarity we refer to Rockefellar~\cite{Rockafellar1970ConvexAnalysis}.
Finally, we need the Blaschke-Santal\'o Inequality and its deep converse due to Bourgain-Milman ~\cite{AsymptoticGeometricAnalysisBook2005}:

\begin{lemma} \label{lem:SantaloInequality}
Given a symmetric convex body $K \subseteq \setR^n$, we have
 $2^{O(n)} \ge \frac{\vol_n(K) \cdot \vol_n(K^\circ)}{\vol_n(B^n_2)^2} \ge 2^{-O(n)}.$ 
\end{lemma}

%{}\rem{V: Do you know a good reference for Lemmas 27-29? \\ T: I put Rockafellar. Not my favourite type of book but he spells out a lot that AMG don't bother justifying.}

The starting point of the proof, which connects the Gaussian measure to the Minkowski sum with an Euclidean ball, is given by the following bound:

%\rem{T: Isn't there a factor of 1/2 missing. This looks like you use $conv(A,B) \supseteq A+B$ which isn't true. But $conv(A,B) \supseteq \frac{1}{2}(A+B)$ is fine.}
%\rem{V: Actually I was using the other direction, but it was rather confusing so I made it more clear.}

\begin{lemma} \label{lem:PolarGaussianLowerBound}
Given a symmetric convex body $K \subseteq \setR^n$, $\gamma_n(K) \ge \vol_n \Big(K^\circ + \frac{1}{\sqrt{n}} B^n_2\Big)^{-1} \cdot n^{-n} \cdot 2^{O(n)}$.
\end{lemma}

\begin{proof}
We start by noting that we can lower bound the Gaussian measure upon restriction to a $\sqrt{n}$-radius ball:
\[
\gamma_n(K) = \frac{1}{(2\pi)^{n/2}} \int_K e^{-\|\bm{x}\|_2^2/2} \ \mathrm{d}\bm{x} \ge \frac{1}{(2\pi e)^{n/2}} \vol_n(K \cap \sqrt{n} B^n_2),
\]
and since $(K \cap \sqrt{n} B^n_2)^\circ = \conv(K^\circ, \frac{1}{\sqrt{n}} B^n_2)$ by Lemma~\ref{lem:PolarElementary}, we conclude 
\begin{eqnarray*}
\gamma_n(K) & \ge & \vol_n (K \cap \sqrt{n} B^n_2) \cdot 2^{-O(n)} \\ & \stackrel{\textrm{Lem} ~\ref{lem:SantaloInequality}}{\ge} & \vol_n\Big(\conv\Big(K^\circ, \frac{1}{\sqrt{n}} B^n_2\Big)\Big)^{-1} \cdot n^{-n} \cdot 2^{-O(n)} \\ & \ge & \vol_n \Big(K^\circ + \frac{1}{\sqrt{n}} B^n_2\Big)^{-1} \cdot n^{-n} \cdot 2^{O(n)},
\end{eqnarray*}
since $\conv(K^\circ, \frac{1}{\sqrt{n}} B^n_2) \subseteq  K^\circ + \frac{1}{\sqrt{n}} B^n_2$. 
\end{proof}

In order to connect slices to \textit{coordinate} slices, we apply a result of~\cite{8555088} for ellipsoids. Thus we will need to use the existence of M-ellipsoids~\cite{AsymptoticGeometricAnalysisBook2005}:

\begin{lemma} \label{lem:ExistenceOfMEllipsoids} 
For any symmetric convex body $K \subseteq \setR^n$ there exists an ellipsoid $E \subseteq \setR^n$ for which there exist collections of centers $S_E, S_K \subseteq \setR^n$ with $|S_E|,|S_K| \le 2^{O(n)}$ so that $K \subseteq \bigcup_{c \in S_E} (c+E)$ and $E \subseteq \bigcup_{c' \in S_K} (c'+K)$.
\end{lemma}

\begin{proof}[Proof of the first inequality in Theorem~\ref{thm:HereditaryVolumeImpliesGaussianMeasure}]

Kubota's integral formula (Lemma ~\ref{lem:KubotaIntegralFormula}) applied to $K^\circ$ yields
\[
W_{n-i} (K^\circ) = \frac{\vol_n(B^n_2)}{\vol_i(B^i_2)} \int_{G(n,i)} \vol_i(\pi_L (K^\circ)) dL.
\]
By Lemma~\ref{lem:ProjectionPolar} and Santal\'o's inequality (Lemma ~\ref{lem:SantaloInequality}) we know that for any subspace $L$,
\[
\vol_i(\pi_L(K^\circ)) \le \vol_i (B^i_2)^2 \cdot \vol_i (K \cap L)^{-1} \le M^{-1} \cdot i^{-i} \cdot 2^{O(i)},
\]
where we choose to denote $\displaystyle M := \min_{\dim L = i \le n} \vol_i (K \cap L)$.   We conclude

\[W_{n-i} (K^\circ) \le M^{-1} \cdot n^{-n/2} \cdot i^{-i/2} \cdot 2^{O(n)},\]
so that 
\[ n^{-(n-i)/2} \cdot W_{n-i} (K^\circ) \le M^{-1} \cdot n^{-n} \cdot 2^{O(n)},\]
by using $(n/i)^i \le 2^{O(n)}$ for $i \in [n]$. Taking $\lambda := 1/\sqrt{n}$ and summing over $i \in [n]$ in Lemma~\ref{lem:KubotaIntegralFormula} gives

\[
\vol_n \Big(K^\circ + \frac{1}{\sqrt{n}} B^n_2\Big) \le M^{-1} \cdot n^{-n} \cdot 2^{O(n)},
\]
so that by Lemma~\ref{lem:PolarGaussianLowerBound} we obtain $\gamma_n(K) \ge M \cdot 2^{-O(n)}$. It remains to show that the minimal \textit{coordinate} sections are not much larger than the minimal sections. With this purpose in mind, let $E$ be an M-ellipsoid of $K$. By Lemma~\ref{lem:ExistenceOfMEllipsoids}, there exist collections $S_E, S_K$ with $|S_E|,|S_K| \le 2^{O(n)}$ so that $K \subseteq \bigcup_{c \in S_E} (c+E)$ and $E \subseteq \bigcup_{c' \in S_K} (c'+K)$. Note that for any $i$-dimensional subspace $L$ we have
\[
\vol_i (K \cap L) \le \sum_{c \in S_E} \vol_i ((c+E) \cap L) \le 2^{O(n)} \cdot \vol_i (E \cap L)
\]
and similarly 
\[
\vol_i (E \cap L) \le \sum_{c' \in S_K} \vol_i ((c'+K) \cap L) \le 2^{O(n)} \cdot \vol_i (K \cap L),
\]
where by Brunn's concavity principle the sections with largest volume are those through the origin. Thus it suffices to show that
\[
\min_{\dim L = i} \vol_i (E \cap L) \ge \min_{S \subseteq [n], |S| = i} \vol_i (E_S) \cdot 2^{-O(n)}.
\]
Indeed this follows a form of restricted invertibility in the work of Dadush, Nikolov, Talwar and Tomczak-Jaegermann, who showed in ~\cite{8555088} (see p. 8) an improved bound of

\[
\min_{\dim L = i} \vol_i (E \cap L) \ge \min_{S \subseteq [n], |S| = i} \vol_i (E_S) \cdot {n \choose i}^{-1}. \qedhere
\]
\end{proof}

We now prove the second part of Theorem~\ref{thm:HereditaryVolumeImpliesGaussianMeasure} which restricts our attention to sections of dimension $\le \delta n$. For this we need the following inequality for quermassintegrals which can be seen as a strenghtening of the isoperimetric inequality:
{}
\begin{theorem}[Alexandrov Inequality~\cite{pisier_1989}] \label{lem:AlexandrovInequality}
Given $i \ge j$ we have 

\[
\Big(\frac{W_{n-i}(K)}{\vol_{i}(B^{i}_2)}\Big)^{1/i} \le \Big(\frac{W_{n-j}(K)}{\vol_{j}(B^{j}_2)}\Big)^{1/j}.
\]
\end{theorem} 	

\begin{proof}[Proof of the second inequality in Theorem~\ref{thm:HereditaryVolumeImpliesGaussianMeasure}]
We proceed as in the proof of the first inequality. Setting $\lambda := 1/\sqrt{n}$ we still have, for $j \le \delta n$, 
\begin{align*}
\lambda^{n-j} W_{n-j}(K^\circ) & \le \max_{\dim L = i \le \delta n} \vol_i^{-1} (K \cap L) \cdot n^{-n} \cdot 2^{O(n)} \\ & \le \max_{\dim L = i \le \delta n} \vol_i^{-1/\delta} (K \cap L) \cdot n^{-n} \cdot 2^{O(n)},
\end{align*}
as the maximum is at least one (for $i = 0$). For $j > \delta n$ we use Theorem~\ref{lem:AlexandrovInequality} to see that
\[\lambda^{n-j} W_{n-j}(K^\circ) \le \lambda^{n-j} (W_{n-\delta n} (K^\circ))^{j/(\delta n)} \cdot \vol_j(B^j_2) \cdot \vol_{\delta n} (B^{\delta n}_2)^{-j/\delta n}\]
and proceed as in the first half of the proof: %\rem{T: In the 3 $\max$ expressions, wouldn't it be $i>\delta n$?}
\begin{align*}
 \lambda^{n-j} W_{n-j}(K^\circ) & \le \lambda^{n-j} \cdot (W_{n-\delta n} (K^\circ))^{j/(\delta n)} \cdot \vol_j(B^j_2) \cdot \vol_{\delta n} (B^{\delta n}_2)^{-j/\delta n} \\
 & \le \lambda^{n-j} \cdot \Big(\max_{\dim L = i \le \delta n} \vol_i^{-1} (K \cap L) \cdot n^{-n/2} \cdot (\delta n)^{-\delta n/2} \Big)^{j/(\delta n)} \cdot (\delta n/j)^{j/2} \cdot 2^{O(n/\delta)} \\
  & \le  \lambda^{n-j}  \cdot n^{-j/(2\delta)} \cdot j^{-j/2} \cdot \max_{\dim L = i \le \delta n} \vol_i^{-1/\delta} (K \cap L) \cdot 2^{O(n/\delta)}  \\
   & = n^{-n/2} \cdot n^{-j/(2\delta)} \cdot \underbrace{(n/j)^{j/2}}_{\le 2^{O(n)}} \max_{\dim L = i \le \delta n} \vol_i^{-1/\delta} (K \cap L) \cdot 2^{O(n/\delta)} \\
    & \le n^{-n} \cdot \max_{\dim L = i \le \delta n} \vol_i^{-1/\delta} (K \cap L) \cdot 2^{O(n/\delta)}. 
\end{align*}
The statement follows as before: by summing over $j \in [n]$ in Lemma~\ref{lem:KubotaIntegralFormula} we obtain
\[
\gamma_n(K) \ge \min_{\dim L = i \le \delta n} \vol_i^{1/\delta} (K \cap L) \cdot 2^{-O(n/\delta)},
\]
and we can pass to coordinate sections via M-ellipsoids.%\rem{T: I added the dimensions to all the $\vol$.}
\end{proof}

\begin{remark}
Barthe, Gu\'edon, Mendelson, and Naor conjectured that coordinate slices maximize the Gaussian volume among all slices of a (scaled) $\ell_p$ ball ~\cite{GeometryOfLpBall-BartheGuedonMendelsonNaor2005} (see the remark in p. 28). We can use the above result to give an affirmative answer up to $2^{-O(n)}$:

\begin{corollary}
Let $p \ge 2$, $r > 0$ and $H \subseteq \setR^m$ an $n$-dimensional subspace. Then 
\[\gamma_H (r B^m_p \cap H) \ge \gamma_n(rB^n_p) \cdot 2^{-O(n)}.\] 
\end{corollary}
\begin{proof}
If $r > n^{1/p}$, the right side is already $2^{-O(n)}$ so we may assume that $r \le n^{1/p}$. A well-known result of Meyer-Pajor asserts that coordinate sections minimize the \textit{volume} among all sections of the $\ell_p$ ball ~\cite{MEYER1988109}. Applying Theorem~\ref{thm:HereditaryVolumeImpliesGaussianMeasure} and using Meyer-Pajor we get 
\begin{align*} 
\gamma_H (r B^m_p \cap H) \ge & \min_{L \subseteq H, \dim L = i} \vol_i (rB^m_p \cap L) \ge \min_{i \le n} \vol_i (rB^i_p) \ge \gamma_n(rB^n_p) \cdot 2^{-O(n)}. \qedhere{}
\end{align*}
\end{proof}
\end{remark}

\begin{remark}
We mention another application of Theorem~\ref{thm:HereditaryVolumeImpliesGaussianMeasure}. For a symmetric convex $K \subseteq \setR^n$, denote the \textit{hereditary discrepancy} $\mathrm{hd}(K)$ as the minimum $t \ge 0$ so that $tK_S$ intersects $\{-1,1\}^S \times \{0\}^{[n]\setminus S}$ for all $S \subseteq [n]$. In ~\cite{8555088} it is shown that we have a lower bound $\mathrm{hd}(K) \ge \max_{S \subseteq [n]} \inf \{t : \vol_{|S|} (tK_S) \ge 1\},$ where the left side is known as the \textit{volume lower bound} $\mathrm{volLB}(K)$. In fact an analogous argument also shows the lower bound $\mathrm{hd}(K) \ge \max_{S \subseteq [n]} \inf \{t : \gamma_{|S|} (tK_S) \ge 2^{-C|S|}\}$ for a universal constant $C > 0$. Since the volume of a convex body is always lower bounded by its Gaussian measure, this lower bound is at least $\mathrm{volLB}(K)$ up to a factor of $2^C$. Theorem~\ref{thm:HereditaryVolumeImpliesGaussianMeasure} immediately implies that it is also at most $\mathrm{volLB}(K)$ up to a constant. 

\end{remark}

\section{Open problems}
We conjecture that Theorem~\ref{thm:FullColoringForLpLq} can be improved to match Theorem~\ref{thm:PartialColoringForLpLq}:

\begin{conjecture}[$\ell_p \to \ell_q$ version of Koml\'os conjecture] \label{con:LpLqKomlos}
Given $n \le m$, $2 \le p \le q \le \infty$ and $\bm{a}_1, \dots, \bm{a}_n \in B^m_p$, do there always exist signs $\bm{x} \in \{-1,1\}^n$ so that
\[
\displaystyle \Big\|\sum_{i=1}^n x_i \bm{a}_i\Big\|_q \le C \sqrt{\min\Big(p,\log\Big(\frac{2m}{n}\Big)\Big)}\cdot n^{1/2 - 1/p + 1/q},
\]
for some universal constant $C > 0$?
\end{conjecture}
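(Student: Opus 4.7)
Conjecture~\ref{con:LpLqKomlos} strictly generalizes Spencer's theorem (take $p=q=\infty$, $m=n$), Banaszczyk's $\ell_2\to\ell_q$ bounds, and above all the Komlós conjecture itself (the case $p=2$, $q=\infty$, $m=n$). Any route to a full proof must therefore go through Komlós, so what follows is a program rather than a complete argument, with two parallel strategies.

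\textbf{Strategy 1: sharper iteration.} Set $\beta := \max(0,1/2-1/p)+1/q$. The telescoping used in the proof of Theorem~\ref{thm:FullColoringForLpLq} loses the $\frac{1}{\beta}$ factor because the per-round contribution $(2^{-t}n)^\beta$ decays slowly when $\beta$ is small. The natural fix is to invoke Theorem~\ref{thm:PartialColoringForExpSmallSetsWithShiftAndLargeFractionColored} with vanishingly small $\gamma$, which would color a $(1-\gamma)$ fraction of the active set in each round; then the resulting geometric sum $\sum_t \gamma^{t\beta} n^\beta = n^\beta/(1-\gamma^\beta)$ tends to a constant multiple of $n^\beta$ as $\gamma \to 0$. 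The technical obstruction is that the constant $C(\alpha,\beta,\gamma)$ from Theorem~\ref{thm:PartialColoringForExpSmallSetsWithShiftAndLargeFractionColored} currently degrades as $\gamma\to 0$. The concrete step is therefore to prove a quantitative refinement of Theorem~\ref{thm:MainMeasureBound} that remains valid under restriction to arbitrary coordinate subspaces, uniformly over such subspaces, so that $C(\alpha,\beta,\gamma)$ stays bounded and the geometric series collapses.

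\textbf{Strategy 2: one-shot subgaussian walk.} Alternatively, one could use the Gram--Schmidt walk of Bansal--Dadush--Garg--Lovett to produce a distribution on $\{-1,1\}^n$ whose random sum $\sum_i x_i \bm{a}_i$ is $O(\sqrt{\min(p,\log(2m/n))})$-subgaussian in every direction; by Markov and a computation akin to Lemma~\ref{lem:ExpectedLpNorm}, one would then recover the conjectured $\ell_q$ bound. Concretely, I would replace the uniform $\ell_2$ bias in the walk by a bias tailored to the $\ell_p \to \ell_q$ geometry, inspired by the row/column duality that powers Lemma~\ref{lem:CaseqInfty}. The main obstacle is that at $p=2$, $q=\infty$ this specializes to Komlós, where the best known $O(1)$-subgaussian walks still lose a $\sqrt{\log n}$ factor. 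Real progress therefore seems to require a genuinely new way of controlling the Gaussian measure and the subgaussian tails of sections of the polytope $\bm{A}^{-1}(B_q^m)$ simultaneously, which goes beyond any tool already deployed in this paper.
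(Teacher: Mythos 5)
The statement you were asked about is not a theorem of the paper: it is Conjecture~\ref{con:LpLqKomlos}, which the authors explicitly leave open and note is at least as hard as the Koml\'os conjecture itself (the case $p=2$, $q=\infty$, $m=n$). There is therefore no proof in the paper to compare against, and your submission --- candidly framed as a ``program'' --- is likewise not a proof. The gap is the entire argument: neither of your two strategies closes it, and you correctly say so.

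A few concrete remarks on the strategies, since they will be read as proposed attacks. Strategy 1 cannot reach the conjecture even in principle in the regime that matters most: when $\beta := \max(0,1/2-1/p)+1/q = 0$ (e.g.\ $p=2$, $q=\infty$), the per-round contribution does not decay at all, your geometric series $\sum_t \gamma^{t\beta} n^{\beta}$ degenerates to a sum of constants over $\Theta(\log n)$ rounds, and no uniformity of $C(\alpha,\beta,\gamma)$ in $\gamma$ can rescue it --- this is exactly why Theorem~\ref{thm:FullColoringForLpLq} carries the hypothesis $\beta>0$ and the $1/\beta$ loss. At best, Strategy 1 could improve the constant for $\beta$ bounded away from $0$, where the paper's bound is already tight up to constants by Theorem~\ref{thm:LowerBound}. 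Also note a small slip: coloring a $(1-\gamma)$ fraction per round makes the surviving active set shrink by a factor $\gamma$, so the relevant sum is over $|S^{(t)}|\approx \gamma^{t}n$, which is what you wrote, but the bottleneck is the first term, not the tail. Strategy 2 is a reasonable description of the state of the art (subgaussianity via the Gram--Schmidt walk in the spirit of \cite{GramSchmidtWalk-BansalDGL-STOC18} and \cite{DBLP:conf/approx/DadushGLN16}), but as you acknowledge it specializes at $p=2$, $q=\infty$ to the open Koml\'os conjecture, so it identifies the difficulty rather than resolving it. In short: the statement remains open, and your proposal should be read as a discussion of obstructions, not as a proof attempt that could be accepted.
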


Since Conjecture~\ref{con:LpLqKomlos} is at least as hard as the Koml\'os conjecture, a more realistic goal would be to improve the full coloring of Theorem~\ref{thm:FullColoringForLpLq} by a factor of $(1/2 - 1/p + 1/q)^{-1/2}$ so as to match the best known bound of $O(\sqrt{\log n})$ for Koml\'os.

Recall that for a matrix $\bm{A} \in \setR^{n \times n}$ and $1 \leq p \leq \infty$,
the \emph{Schatten-$p$ norm} is defined as $\|\bm{A}\|_{S(p)} := (\sum_{i=1}^n \sigma_i(\bm{A})^p)^{1/p}$ where $\sigma_i(\bm{A}) \geq 0$ is the $i$th \emph{singular value} of the matrix. In particular $\|\bm{A}\|_{S(\infty)}$ is the maximum singular value and $\|\bm{A}\|_{S(1)}$ is known as \emph{Trace norm} or \emph{Nuclear norm}.
One might wonder whether Theorem~\ref{thm:PartialColoringForLpLq} could be extended for \emph{matrices} instead of vectors in the corresponding Schatten norms. In fact this is not possible: even for $p = 2$ and $q = \infty$, there exist $n$ rank-one matrices $\bm{A}_i := \bm{v}_i \bm{v}_i^\top \in \setR^{n \times n}$ with unit $\bm{v}_i$ for which any fractional coloring has discrepancy $\Omega(\sqrt{n})$ in the operator norm (\cite{weaver2002}, Section 3). It is still possible nevertheless that Corollary~\ref{cor:LpSpencer} extends in the following way:
%\newpage
\begin{conjecture}[$\ell_p$ version of Matrix Spencer] \label{con:LpMatrixSpencer} Given $2 \le p \le \infty$ and symmetric $\bm{A}_1, \dots, \bm{A}_n \in \setR^{n \times n}$ with Schatten-$p$ norm at most 1, can we always find signs $\bm{x} \in \{-1,1\}^n$ so that

\[
\displaystyle \Big\|\sum_{i=1}^n x_i \bm{A}_i \Big\|_{S(p)} \le C\sqrt{n}
\] 
for some universal constant $C > 0$?
\end{conjecture}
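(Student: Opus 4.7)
The plan is to reproduce the paper's framework in the Schatten setting: reduce Conjecture~\ref{con:LpMatrixSpencer} to a lower bound on the Gaussian measure of the preimage
\[
K := \Big\{\bm{x} \in \setR^n : \Big\|\sum_{i=1}^n x_i \bm{A}_i\Big\|_{S(p)} \leq C\sqrt{n}\Big\},
\]
invoke Theorem~\ref{thm:PartialColoringForExpSmallSetsWithShiftAndLargeFractionColored} to extract a partial fractional coloring whenever $\gamma_n(K) \geq 2^{-O(n)}$, and then iterate $O(\log n)$ times exactly as in the proof of Theorem~\ref{thm:FullColoringForLpLq} to obtain signs $\bm{x} \in \{-1,1\}^n$. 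Since the relevant discrepancy exponent for Schatten-$p$ balls is uniformly $1/2$, the iterated telescoping sum $\sum_t \sqrt{|S^{(t)}|} \leq \sum_t \sqrt{n/2^t} = O(\sqrt{n})$ loses only a universal constant, so a measure lower bound at scale $\sqrt{n}$ translates directly into the desired full coloring.

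For the measure lower bound itself I would use the \emph{noncommutative Khintchine inequality} of Lust-Piquard--Pisier in place of the scalar Lemma~\ref{lem:Khintchine}: for $\bm{g} \sim N(\bm{0},\bm{I}_n)$ and symmetric $\bm{A}_1,\ldots,\bm{A}_n$ of Schatten-$p$ norm at most~$1$, one has
\[
\E\Big\|\sum_{i=1}^n g_i \bm{A}_i\Big\|_{S(p)} \leq C\sqrt{p}\cdot \Big\|\Big(\sum_{i=1}^n \bm{A}_i^2\Big)^{1/2}\Big\|_{S(p)}.
\]
The triangle inequality for the Schatten-$p/2$ norm applied to the $\bm{A}_i^2$ gives $\big\|(\sum_i \bm{A}_i^2)^{1/2}\big\|_{S(p)}^2 = \big\|\sum_i \bm{A}_i^2\big\|_{S(p/2)} \leq \sum_i \|\bm{A}_i\|_{S(p)}^2 \leq n$. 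Combining with Markov's inequality and Lemma~\ref{lem:ScalingLemma} then yields $\gamma_n(K) \geq 2^{-O(n)}$ at scale $O(\sqrt{pn})$, which settles Conjecture~\ref{con:LpMatrixSpencer} for every $p = O(1)$.

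To attempt to remove the spurious $\sqrt{p}$ factor for large $p$, I would mimic the bootstrap at the end of the proof of Theorem~\ref{thm:MainMeasureBound}. For a matrix $\bm{M}$ the interpolation inequality $\|\bm{M}\|_{S(q)}^q \leq \|\bm{M}\|_{S(p_0)}^{p_0} \cdot \|\bm{M}\|_{S(\infty)}^{q-p_0}$ translates geometrically into the containment $n^{1/q}\{\bm{M}: \|\bm{M}\|_{S(q)} \leq 1\} \supseteq n^{1/p_0}\{\bm{M}: \|\bm{M}\|_{S(p_0)} \leq 1\} \cap \{\bm{M}: \|\bm{M}\|_{S(\infty)} \leq 1\}$, which is the Schatten analog of the key step $n^{1/q}B_q^m \supseteq n^{1/p}B_p^m \cap B_\infty^m$ in the paper. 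Applying the Gaussian correlation inequality (Lemma~\ref{lem:SidakLemma}, in its general symmetric-body form due to Royen) to combine the Khintchine bound at a fixed $p_0 = O(1)$ with a Gaussian measure bound on the operator-norm preimage would then deliver the $O(\sqrt{n})$ bound uniformly in $p$.

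The hard part is exactly this last operator-norm measure bound: it is a Gaussian-measure reformulation of the \emph{Matrix Spencer} conjecture and is currently open. Noncommutative Khintchine by itself only yields $O(\sqrt{n\log n})$ in the operator norm, and unlike the vector case --- where the $\ell_\infty$ ball is a finite intersection of coordinate slabs and \v{S}idak's lemma (Lemma~\ref{lem:SidakLemma}) gives a clean product of one-dimensional Gaussian integrals indexed by the rows $\bm{A}_j$ --- the Schatten-$\infty$ ball is cut out by infinitely many spectral slabs and no analogous discrete reduction is known. Consequently, a complete resolution of Conjecture~\ref{con:LpMatrixSpencer} via this route would imply Matrix Spencer; a realistic intermediate target is the version losing an additional $\sqrt{\log n}$ factor uniformly in $p$, which should be obtainable by feeding matrix concentration bounds of Tropp into the same partial-coloring-plus-iteration machinery.
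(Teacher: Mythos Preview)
The statement in question is a \emph{conjecture}, not a theorem: the paper lists it in Section~5 as an open problem and provides no proof. Your proposal is not a proof either, and you say so explicitly --- you correctly identify that the operator-norm measure bound needed to close the argument is equivalent to the (open) Matrix Spencer conjecture. So there is no discrepancy to report between your ``proof'' and the paper's, because neither claims to prove the statement.

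On the partial result: the paper remarks in one line that ``one can show a weaker bound of $O(\sqrt{pn})$ with random signs similar to Lemma~\ref{lem:ExpectedLpNorm} using matrix concentration.'' Your route to the same $O(\sqrt{pn})$ bound via noncommutative Khintchine, Markov, Lemma~\ref{lem:ScalingLemma}, and then the partial-coloring machinery of Theorem~\ref{thm:PartialColoringForExpSmallSetsWithShiftAndLargeFractionColored} is correct but more elaborate than necessary: once noncommutative Khintchine gives $\E\|\sum g_i\bm{A}_i\|_{S(p)} \leq C\sqrt{pn}$, a random sign vector already achieves this in expectation, so the detour through Gaussian measure and iterated partial colorings buys nothing here. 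Your diagnosis of \emph{why} the bootstrap from the proof of Theorem~\ref{thm:MainMeasureBound} fails in the Schatten setting --- that the $S(\infty)$ ball is not a finite intersection of slabs, so the \v{S}idak step has no analog --- is exactly the point, and matches the paper's framing of Conjecture~\ref{con:MatrixSpencerMeasureBound} as the missing ingredient.
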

This is a more general form of the Matrix Spencer conjecture \cite{MatrixBalancingZouziasICALP12}, and one can show a weaker bound of $O(\sqrt{pn})$ with random signs similar to Lemma~\ref{lem:ExpectedLpNorm}. In fact, it is an open problem to show even a partial coloring for Conjecture 2. This would be implied by the following, which at least holds for diagonal matrices by the proof of Lemma~\ref{lem:CaseqInfty}:
\begin{conjecture} \label{con:MatrixSpencerMeasureBound}
Given $1 \le p \le \infty$ and symmetric $\bm{A}_1, \dots, \bm{A}_n \in \setR^{n \times n}$, can we show that
\[
K := \Big\{\bm{x} \in \setR^{n} : \Big\|\sum_{i=1}^n x_i \bm{A}_i\Big\|_{S(p)} \le \Big\|\Big(\sum_{i=1}^n \bm{A}_i^2\Big)^{1/2}\Big\|_{S(p)}\Big\}
\]
satisfies $\gamma_n(K) \ge 2^{-O(n)}$?
\end{conjecture}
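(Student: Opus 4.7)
The plan is to mirror the two-step strategy of Theorem~\ref{thm:MainMeasureBound}: first establish a base case via an expectation bound and Markov, then try to extend. The natural base case is $p=2$ (the Frobenius norm), where a direct computation gives
\[
\E_{\bm{x} \sim N(\bm{0},\bm{I}_n)}\Big[\Big\|\sum_{i=1}^n x_i\bm{A}_i\Big\|_{S(2)}^2\Big] \; = \; \tr\Big(\sum_{i=1}^n \bm{A}_i^2\Big) \; = \; \Big\|\Big(\sum_{i=1}^n \bm{A}_i^2\Big)^{1/2}\Big\|_{S(2)}^2.
\]
Markov's inequality therefore places a constant-factor dilate of $K$ at measure $\geq 1/2$, and Lemma~\ref{lem:ScalingLemma} upgrades this to $\gamma_n(K) \geq 2^{-O(n)}$ for $p=2$.

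For general $p \ge 2$, the direct matrix analogue of Lemma~\ref{lem:Caseqp} is the noncommutative Khintchine inequality of Lust-Piquard and Pisier, which for symmetric matrices gives
\[
\E_{\bm{x} \sim N(\bm{0},\bm{I}_n)}\Big[\Big\|\sum_{i=1}^n x_i \bm{A}_i\Big\|_{S(p)}\Big] \; \leq \; C\sqrt{p} \cdot \Big\|\Big(\sum_{i=1}^n \bm{A}_i^2\Big)^{1/2}\Big\|_{S(p)}.
\]
Markov combined with Lemma~\ref{lem:ScalingLemma} then proves the conjecture whenever $\sqrt{p}$ is an absolute constant, i.e.\ for $p = O(1)$, since the scaling lemma can absorb only a constant multiplicative loss while preserving a $2^{-O(n)}$ measure. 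A column-plus-row version of noncommutative Khintchine covers the range $1 \le p \le 2$ analogously.

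The hard part is the $\sqrt{p}$ loss for large $p$: at $p = \Theta(\log n)$ it becomes $\sqrt{\log n}$, and in the operator-norm limit $p = \infty$ it is precisely the matrix Spencer barrier, so Conjecture~\ref{con:MatrixSpencerMeasureBound} is at least as hard as matrix Spencer. To beat it, one would need a genuinely noncommutative analogue of the per-row \v{S}idak argument of Lemma~\ref{lem:CaseqInfty}, which factored the constraint $\|\bm{A}\bm{x}\|_\infty \leq t$ into $m$ independent strips whose Gaussian measures multiply via Lemma~\ref{lem:SidakLemma}. Its matrix analogue would factor $\|\sum_i x_i \bm{A}_i\|_{S(\infty)} \leq t$ into the quadratic constraints $|\langle \bm{v},(\sum_i x_i \bm{A}_i)\bm{v}\rangle| \leq t$ ranging over $\bm{v} \in S^{n-1}$; but these are strongly correlated through the joint spectrum of the $\bm{A}_i$ and no useful noncommutative \v{S}idak-type product inequality is known. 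This is where I expect the main obstacle to lie.

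Concretely, I would first pass to an eigenbasis of $\sum_i \bm{A}_i^2$---where the ``total energy'' along each diagonal coordinate is pinned to an eigenvalue of $\sum_i \bm{A}_i^2$---then discretize $S^{n-1}$ to a net of $2^{O(n)}$ test directions, apply classical \v{S}idak along the net, and hope to absorb the resulting combinatorial factor into the $2^{-O(n)}$ measure budget, using Gaussian concentration (Theorem~\ref{thm:ConcentrationFor1LipschitzFunctions}) to control the discretization error. An alternative, purely algorithmic route would adapt the Gram--Schmidt walk of~\cite{GramSchmidtWalk-BansalDGL-STOC18} or the entropy-minimization underlying Section~\ref{sec:Contribution} and~\cite{DBLP:journals/rsa/EldanS18} to matrix-valued constraints, bypassing the measure bound but facing the same noncommutative core difficulty.
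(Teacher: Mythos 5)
The statement you were asked to prove is Conjecture~\ref{con:MatrixSpencerMeasureBound}, which the paper poses as an \emph{open problem}: the authors give no proof, remarking only that their techniques establish it when all the $\bm{A}_i$ are diagonal (in which case the Schatten-$p$ norm degenerates to an $\ell_p$-norm of diagonal entries and Lemmas~\ref{lem:Caseqp} and~\ref{lem:CaseqInfty} apply directly). Your partial observations are correct and consistent with this: the $p=2$ Frobenius computation, and the noncommutative Khintchine bound followed by Markov and Lemma~\ref{lem:ScalingLemma}, do prove the conjecture for $p=O(1)$, and you rightly locate the substance of the problem in the large-$p$/operator-norm regime, where it subsumes a measure-theoretic form of Matrix Spencer.

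However, your sketch for large $p$ does not close the gap, and the concrete route you propose fails quantitatively. Replacing the $m$ rows of Lemma~\ref{lem:CaseqInfty} by a net of $N=2^{\Theta(n)}$ directions $\bm{v}\in S^{n-1}$ and applying \v{S}idak (Lemma~\ref{lem:SidakLemma}) to the strips $\{\bm{x}: |\sum_i x_i \langle \bm{v},\bm{A}_i\bm{v}\rangle|\le t\}$ yields a product $(1-\eta)^N$, which is $2^{-O(n)}$ only if each strip has $\eta=2^{-\Omega(n)}$; by Lemma~\ref{lem:OneDimEstimate} this forces $t\ge \Omega(\sqrt{n})\cdot\big(\sum_i \langle\bm{v},\bm{A}_i\bm{v}\rangle^2\big)^{1/2}$ for \emph{every} net direction, and for unit-operator-norm matrices the right-hand side can be as large as $n$ --- far above the conjectured bound of $\|(\sum_i\bm{A}_i^2)^{1/2}\|_{S(\infty)}\le\sqrt{n}$. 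The reason Lemma~\ref{lem:CaseqInfty} works is that there are only $m$ strips and their total energy $\sum_j\|\bm{A}_j\|_2^p\le n\cdot n^{p\max(0,1/2-1/p)}$ is controlled by the column hypothesis; no analogous control is available over an exponential net, and Gaussian concentration does not rescue the union/product bound. So the conjecture remains open, and your proposal, while an accurate map of the difficulty, is not a proof.
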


%\noindent {\Large \textbf{Acknowledgments}}
%\\
%\\
\subsection*{Acknowledgments}
We would like to thank Daniel Dadush and Aleksandar Nikolov for their feedback in early drafts of this work and helpful discussions, and the anonymous reviewers for their detailed comments.

\bibliographystyle{alpha}
\bibliography{measureOfLpBall}

\appendix

\section{Proof of Lemma ~\ref{lem:ExpectedLpNorm}}

\begin{proof}[Proof of Lemma ~\ref{lem:ExpectedLpNorm}]
By convexity of $z \mapsto |z|^p$, Jensen's inequality in $(*)$ and Khintchine's inequality in $(**)$ (Lemma~\ref{lem:Khintchine}) we have 
\begin{align*}
\E \Big[\Big\|\sum_{i=1}^n x_i \bm{a}_i \Big\|_p\Big]  & \stackrel{(*)}{\le} \E \Big[ \Big\|\sum_{i=1}^n x_i \bm{a}_i \Big\|^p_p\Big]^{1/p} \\ & = \Big(\sum_{j \in [m]} \E \Big[ \Big|\sum_{i \in [n]} x_i a_{ij}\Big|^p \Big]\Big)^{1/p} \\ & \stackrel{(**)}{\le} C \sqrt{p} \cdot \Big(\sum_{j \in [m]} \Big(\sum_{i \in [n]} a^2_{ij}\Big)^{p/2}\Big)^{1/p}.
\end{align*}
If $p \in [1,2]$, write $\bm{A}_j \in \setR^n$ as $(\bm{A}_j)_i := a_{ij}$. Then by Lemma~\ref{lem:ElementaryInequalityOnLpNorms},
\[
\Big(\sum_{j \in [m]} \Big(\sum_{i \in [n]} a^2_{ij}\Big)^{p/2}\Big)^{1/p} = \Big(\sum_{j \in [m]} \|\bm{A}_j\|_2^p \Big)^{1/p}\le \Big(\sum_{j \in [m]} \|\bm{A}_j\|_p^p \Big)^{1/p} = \Big(\sum_{i \in [n]} \|\bm{a}_i\|_p^p \Big)^{1/p} \le n^{1/p}.
\]
Now suppose that $p \ge 2$. Define $(\bm{a}_i)^2 \in \setR^m$ to be the vector with $j$th coordinate $a^2_{ij}$. Since $\| \cdot \|_{p/2}$ is a norm, we can use the triangle inequality to get 
\[
\Big(\sum_{j \in [m]} \Big(\sum_{i \in [n]} a^2_{ij}\Big)^{p/2}\Big)^{1/p} = \Big\|\sum_{i \in [n]} (\bm{a}_i)^2\Big\|^{1/2}_{p/2} \le \Big(\sum_{i \in [n]} \|(\bm{a}_i)^2\|_{p/2}\Big)^{1/2} = \Big(\sum_{i \in [n]} \|\bm{a}_i\|^2_p\Big)^{1/2} \le n^{1/2}.
\]

Either way, we conclude that $\E [\|\sum_{i=1}^n x_i \bm{a}_i \|_p] \le O(\sqrt{p} \cdot n^{\max(1/2,1/p)})$, as desired.
\end{proof}

\begin{remark}
A similar approach gives an alternate proof of Prop. 25 in ~\cite{GeometryOfLpBall-BartheGuedonMendelsonNaor2005}, which states that a $r := O(\sqrt{p} \cdot n^{1/p})$ scaling of an $n$-dimensional section $H$ of $B^m_p$ has Gaussian measure $\gamma_H (H \cap rB^m_p) \ge 1/2$ for $p \ge 2$. Indeed, by Markov's inequality, it suffices to note that given an orthonormal basis $\bm{a}_1, \dots, \bm{a}_n$ of $H$ we have  
\[
\E \Big[\Big\|\sum_{i=1}^n x_i \bm{a}_i \Big\|_p\Big] \le C \sqrt{p} \cdot \Big(\sum_{j \in [m]} \Big(\sum_{i \in [n]} a^2_{ij}\Big)^{p/2}\Big)^{1/p} \le C \sqrt{p} \cdot n^{1/p},
\]
where the last inequality follows from convexity of $z \mapsto z^{p/2}$ and from the fact that the $m$ terms $\sum_{i \in [n]} a^2_{ij}$ sum to $n$ and are at most $1$ by orthonormality. 
\end{remark}

\section{Large convex sets without partial colorings\label{appendix:SetsWithoutPartialColorings}}

We have mentioned earlier that a symmetric convex set $K$ with measure $\gamma_n(K) \geq e^{-\delta n}$
contains a partial coloring $\bm{x} \in \{ -1,0,1\}^n$ with a linear number of nonzero coordinates if the constant $\delta$ is small enough --- but we claimed that this is false for constants beyond a certain threshold, even if one is allowed to rescale the
  body by some parameter dependent on $\delta$. The construction for such a set is a thin strip that avoids
  any point in $\{ -1,0,1\}^n \setminus \{ \bm{0}\}$.
\begin{lemma}
  For any $C \geq 1$, there exists a $\delta > 0$ so that the following holds: for any $n \in \mathbb{N}$ large enough
  there is a symmetric convex body $K \subseteq \setR^n$ so that (i) $(C^nK) \cap (\{ -1,0,1\}^n \setminus \{ \bm{0} \}) = \emptyset$ and (ii) $\gamma_n(K) \geq e^{-\delta n}$.
\end{lemma}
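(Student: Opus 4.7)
The idea is to take $K$ to be a very thin \emph{slab} $\{\bm{x}: |\langle \bm{a},\bm{x}\rangle| \leq t\}$ in a carefully chosen direction $\bm{a} \in S^{n-1}$, intersected with a large Euclidean ball so that $K$ becomes a bounded convex body. Condition (i) forces $C^n t < \mu$, where $\mu := \min_{\bm{z} \in \{-1,0,1\}^n \setminus \{\bm{0}\}} |\langle \bm{a},\bm{z}\rangle|$, so that the scaled slab $C^n K$ separates $\bm{0}$ from all other lattice points in the direction $\bm{a}$. Condition (ii), combined with the one-dimensional Gaussian estimate $\gamma_n(\{|\langle \bm{a},\bm{x}\rangle| \leq t\}) \geq \Omega(t)$ for $t \leq 1$ (Lemma~\ref{lem:OneDimEstimate}), requires $t \gtrsim e^{-\delta n}$. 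Combining both, we need a unit vector $\bm{a}$ with $\mu \gtrsim (C/e^{\delta})^n$.

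So the crux is to exhibit an explicit $\bm{a} \in S^{n-1}$ with $\mu$ as large as possible. I would use the \emph{balanced ternary} construction: set $\bm{a} := (1,3,3^2,\ldots,3^{n-1})/\rho$ where $\rho := \|(1,3,\ldots,3^{n-1})\|_2 \leq 3^n/\sqrt{8}$. For any $\bm{z} \in \{-1,0,1\}^n \setminus \{\bm{0}\}$ the integer $\sum_{i=0}^{n-1} z_i 3^i$ is nonzero: letting $j$ be the smallest coordinate with $z_j \neq 0$, equality $\sum z_i 3^i = 0$ would force $z_j = -3 \sum_{i > j} z_i 3^{i-j-1}$, i.e.\ $3 \mid z_j$, contradicting $|z_j| \leq 1$. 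Hence $|\sum z_i 3^i| \geq 1$ for every nonzero lattice point, which gives $\mu \geq 1/\rho \geq \sqrt{8} \cdot 3^{-n}$.

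With this $\bm{a}$ fixed, set $t := \mu/(2C^n)$ and $K := \{\bm{x} \in \setR^n : |\langle \bm{a},\bm{x}\rangle| \leq t\} \cap (10\sqrt{n} \cdot B_2^n)$; this is a symmetric convex body. Condition (i) is immediate, since any nonzero $\bm{z} \in \{-1,0,1\}^n$ satisfies $|\langle \bm{a},\bm{z}\rangle| \geq \mu > C^n t$. For condition (ii), observe that $\langle \bm{a},\bm{x}\rangle \sim N(0,1)$ when $\bm{x} \sim N(\bm{0},\bm{I}_n)$, so the slab has measure at least $t/\sqrt{2\pi}$, while the Gaussian mass outside $10\sqrt{n} \cdot B_2^n$ is $e^{-\Omega(n)}$ by Theorem~\ref{thm:ConcentrationFor1LipschitzFunctions}. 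Consequently $\gamma_n(K) \geq c \cdot (3C)^{-n} - e^{-\Omega(n)} \geq e^{-\delta n}$ for any $\delta > \ln(3C)$ and $n$ large enough; since $C \geq 1$ we have $\ln(3C) > 0$, so such a positive $\delta$ exists.

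The construction is in a sense forced, as avoiding the lattice after scaling by $C^n$ while keeping exponentially large Gaussian mass requires the body to be extremely eccentric. The only technical input is the lower bound $\mu \geq \sqrt{8} \cdot 3^{-n}$, which balanced ternary delivers deterministically; a probabilistic argument (random $\bm{a}$ on $S^{n-1}$ with a union bound over the $3^n$ lattice points) would also give $\mu \gtrsim 3^{-n}/\sqrt{n}$ and work equally well, but the explicit choice is cleaner and slightly sharper.
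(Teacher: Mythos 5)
Your construction takes a genuinely different route from the paper's. The paper argues probabilistically: it samples a Gaussian direction $\bm{g} \sim N(\bm{0},\bm{I}_n)$, uses the anti-concentration bound $\Pr[|\langle \bm{g},\bm{z}\rangle| \leq t] \leq O(t)$ (valid since $\|\bm{z}\|_2 \geq 1$ for every $\bm{z} \in \{-1,0,1\}^n\setminus\{\bm{0}\}$) together with a union bound over the $3^n$ lattice points to show that a strip of width $\approx C^{-n}3^{-n}$ orthogonal to $\bm{g}$ avoids all of them with constant probability, and then lower-bounds the strip's measure using $\|\bm{g}\|_2 = O(\sqrt{n})$. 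You instead pick the deterministic balanced-ternary direction $\bm{a} = (1,3,\dots,3^{n-1})/\rho$; your divisibility argument that $\sum_i z_i 3^i \neq 0$ for nonzero ternary $\bm{z}$ is correct, giving the separation $\mu \geq 1/\rho \geq \sqrt{8}\cdot 3^{-n}$ exactly, so condition (i) holds by construction with no failure probability, and the slab has measure $\Omega((3C)^{-n})$. This is cleaner and slightly sharper than the paper's randomized strip (which loses an extra $1/\sqrt{n}$ through $\|\bm{g}\|_2$), and it yields essentially the same $\delta \approx \ln(3C)$.

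One step does fail as written, though it is easy to repair: you intersect the slab with the ball $10\sqrt{n}\,B_2^n$, whose radius does not depend on $C$, and then bound $\gamma_n(K) \geq c\,(3C)^{-n} - e^{-\Omega(n)}$. The subtracted tail is roughly $e^{-40n}$, so once $\ln(3C) \geq 40$ the right-hand side is negative and the conclusion $\gamma_n(K) \geq e^{-\delta n}$ does not follow; the lemma quantifies over \emph{all} $C \geq 1$. Any of the following fixes works: (a) take the ball radius $R(C)\sqrt{n}$ with, say, $R(C) := 10\sqrt{\ln(3C)+1}$, so the tail is $e^{-\Theta(\ln(3C))\,n} \ll (3C)^{-n}$; (b) avoid the subtraction entirely via \v{S}idak's inequality (Lemma~\ref{lem:SidakLemma}, with the ball as the symmetric convex set and your slab as the strip), which gives $\gamma_n(K) \geq \gamma_n(10\sqrt{n}\,B_2^n)\cdot \gamma_n(\{|\langle\bm{a},\bm{x}\rangle|\leq t\}) \geq \tfrac{1}{2}\,c\,(3C)^{-n}$ for every $C$; or (c) drop the ball intersection altogether, as the paper itself does — its $K$ is the unbounded strip, and boundedness is not actually needed anywhere in the statement's use.
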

\begin{proof}
  The construction is probabilistic. 
  We sample a Gaussian $\bm{g} \sim N(\bm{0},\bm{I}_n)$ and for a tiny parameter $s > 0$ that we determine later,
we consider the strip $K := \{ \bm{x} \in \mathbb{R}^n : |\left<\bm{g},\bm{x}\right>| \leq s\}$.
Consider the set of nontrivial partial colorings $X := \{ -1,0,1\}^n \setminus \{ \bm{0} \}$ and recall that $|X| \leq 3^n$.
For any $\bm{x} \in X$, the distribution of $\left<\bm{g},\bm{x}\right>$ is Gaussian
with variance $\|\bm{x}\|_2^2 \geq 1$ and hence the density of this 1-dimensional Gaussian is at most $\frac{1}{\sqrt{2\pi}}e^0 \leq \frac{1}{2}$ everywhere.  In particular for a fixed $\bm{x} \in X$,
one can obtain the simple estimate of $\Pr[|\left<\bm{g},\bm{x}\right>| \leq t] \leq 4t$ for any $t>0$.
Then choosing  $s := \frac{1}{16} \cdot C^{-n} 3^{-n}$ we obtain
\[
  \Pr_{\bm{g}}\big[(C^nK) \cap X \neq \emptyset\big] \leq \sum_{\bm{x} \in X} \Pr_{\bm{g}}\big[|\left<\bm{g},\bm{x}\right>| > C^ns\big] \leq \frac{1}{4} \cdot |X| \cdot 3^{-n} \leq \frac{1}{4} \quad \quad (*)
\]
Moreover using Markov's Inequality we obtain the (rather weak) estimate
\[
   \Pr\big[\|\bm{g}\|_2^2 > 4n\big] \leq \frac{1}{4}  \quad \quad (**)
 \]
 Then with probability at least $1/2$ none of the events $(*)$ and $(**)$ happen. We fix such an outcome of $\bm{g}$
 and estimate that the measure of our strip is
 \[
  \gamma_n(K) = \int_{-s/\|\bm{g}\|_2}^{s/\|\bm{g}\|_2} \frac{1}{\sqrt{2\pi}}e^{-x^2/2}dx \geq  \frac{1}{\sqrt{2\pi}}e^{-1/2} \frac{2s}{\sqrt{n}} \geq e^{-\delta n}
 \]
for a suitable choice of $\delta$ using $\frac{s}{\|\bm{g}\|_2} \leq 1$.
\end{proof}

\end{document}